\documentclass[runningheads]{llncs}

\bibliographystyle{splncs04}
\usepackage{booktabs} 
\usepackage{subcaption} 
\usepackage{bussproofs}
\usepackage[cal=boondoxo]{mathalfa}
\DeclareMathAlphabet{\mathpzc}{OT1}{pzc}{m}{it}
\usepackage{amsmath}
\usepackage{color}
\usepackage{xspace}
\usepackage{orcidlink}

\newcommand{\ignore}[1]{{}}

\newcommand{\syntaxDef}[3]{\rulebox{%
\syntaxKeyword$#1\mathrel{::=}{#2}$ \ifthenelse{\equal{#3}{}}{}{[#3]}%
}%
}

\makeatletter
\newcommand{\shorteq}{%
  \settowidth{\@tempdima}{-}
  \resizebox{\@tempdima}{\height}{=}%
}
\makeatother































\usepackage{amssymb}
\usepackage{xcolor}
\usepackage{colortbl}

\usepackage[utf8]{inputenc}

\usepackage{amsmath}
\usepackage{stmaryrd}
\usepackage{hyperref}

\usepackage{subcaption}
\usepackage{wrapfig}

\usepackage{xspace}
\usepackage{float}
\usepackage{balance}

\usepackage{textcomp} 

\usepackage{changepage}
\usepackage{array,etoolbox}

\preto\tabular{\setcounter{magicrownumbers}{0}}
\newcounter{magicrownumbers}

\usepackage{cleveref} 

\usepackage{wrapfig}
\usepackage{caption}

\usepackage{longtable}
\usepackage{tabu}

\usepackage{wasysym}
\usepackage{mathpartir}

\usepackage{mathtools} 
\usepackage{xfrac} 


\usepackage[qm,braket]{qcircuit}

\hyphenation{Comp-Cert}



\newcommand{\modmult}[3]{(#1 * #2)\cn{\,\%} \, #3}
\newcommand{\modexp}[3]{{#1}^{#2}\,\cn{\%} \, #3}
\newcommand{\qbool}[4]{\cn{(}#1\,\cn{#2}\,#3\cn{)\,\cn{@}\,}#4}

\newcommand{\mmod}{\cn{\%}}

\colorlet{kwd}{black!80!green}
\definecolor{spec1}{RGB}{78, 131, 162}
\definecolor{spec0}{RGB}{66, 102, 136}
\definecolor{lespec}{RGB}{30, 80, 180}
\colorlet{spec}{lespec}
\colorlet{auto}{lespec!35!lightgray}
\colorlet{stack}{magenta}

\newcommand{\qafny}{\rulelab{Qafny}\xspace}

\newcommand{\sourcelang}{\ensuremath{\mathcal{O}\textsc{qimp}}\xspace}
\newcommand{\oqasm}{\textsc{Oqasm}\xspace}

\newcommand{\pqasm}{\ensuremath{\textsc{Pqasm}}\xspace}

\newcommand{\sqir}{SQIR\xspace}

\newcommand{\voqc}{\textsc{VOQC}\xspace}
\newcommand{\tket}{t$\vert$ket$\rangle$\xspace}
\newcommand{\myparagraph}[1]{\noindent\paragraph{\textbf{#1}}}

\usepackage{tikz}

\newcommand{\aket}[2]{\ket{#1}_{\textcolor{spec}{#2}}}
\newcommand{\qket}[1]{\ket{\Delta(#1)}}

\newcommand{\slen}[1]{|#1|}










\usepackage{pgf}

\usepackage{tikz} 
\usetikzlibrary{%
  arrows,%
  shapes.misc,
  shapes.geometric, 
  shapes.arrows,%
  shapes.callouts,
  shapes.gates.logic.US,
  chains,%
  matrix,%
  positioning,
  scopes,%
  decorations.pathmorphing,
  decorations.text,
  decorations.pathreplacing, 
  shadows,%
  automata,
  fit, calc, arrows.meta
}

\tikzset{ machine/.style={
    rectangle,
    minimum width=25mm,
    minimum height=18mm,
    text width=24mm,
    align=center,
    very thick,
    draw=black,
    color=black,
    fill=white,
  }
}


\DeclarePairedDelimiter\abs{\lvert}{\rvert}
\DeclarePairedDelimiter\norm{\lVert}{\rVert}

\makeatletter
\let\oldabs\abs
\def\abs{\@ifstar{\oldabs}{\oldabs*}}
\let\oldnorm\norm
\def\norm{\@ifstar{\oldnorm}{\oldnorm*}}
\makeatother


\makeatletter
\DeclareRobustCommand{\vardivision}{%
  \mathbin{\mathpalette\@vardivision\relax}%
}
\newcommand{\@vardivision}[2]{%
  \reflectbox{$\m@th\smallsetminus$}%
}
\makeatother

\DeclareFontFamily{U} {MnSymbolC}{}
\DeclareFontShape{U}{MnSymbolC}{m}{n}{
  <-6> MnSymbolC5
  <6-7> MnSymbolC6
  <7-8> MnSymbolC7
  <8-9> MnSymbolC8
  <9-10> MnSymbolC9
  <10-12> MnSymbolC10
  <12-> MnSymbolC12}{}
\DeclareFontShape{U}{MnSymbolC}{b}{n}{
  <-6> MnSymbolC-Bold5
  <6-7> MnSymbolC-Bold6
  <7-8> MnSymbolC-Bold7
  <8-9> MnSymbolC-Bold8
  <9-10> MnSymbolC-Bold9
  <10-12> MnSymbolC-Bold10
  <12-> MnSymbolC-Bold12}{}

\DeclareSymbolFont{MnSyC} {U} {MnSymbolC}{m}{n}

\DeclareMathSymbol{\sqcupplus}{\mathbin}{MnSyC}{70}





\usepackage{booktabs}





\usepackage{alltt}
\usepackage{listings,lstcoq}
\definecolor{ltblue}{rgb}{0,0.4,0.4}
\definecolor{dkblue}{rgb}{0,0.1,0.6}
\definecolor{dkgreen}{rgb}{0,0.35,0}
\definecolor{dkviolet}{rgb}{0.3,0,0.5}
\definecolor{dkred}{rgb}{0.5,0,0}
\lstset{language=Coq}
\usepackage[export]{adjustbox}

\newcommand{\code}[1]{{\small\texttt{#1}}}


 \newcommand{\rulelab}[1]{{\small \textsc{#1}}}

\newcommand{\sskip}{\texttt{\{\}}}

\newcommand{\sifb}[3]{\texttt{if}~{(#1)}~{#2}~\texttt{else}~{#3}}

\newcommand{\tnor}[1]{\texttt{Nor}({#1})}
\newcommand{\tnort}{\texttt{Nor}}
\newcommand{\trot}[1]{\texttt{Rot}({#1})}
\newcommand{\trott}{\texttt{Rot}}
\newcommand{\thad}[1]{\texttt{Had}(#1)}
\newcommand{\thadt}{\texttt{Had}}

\DeclareMathOperator*{\Motimes}{\text{\raisebox{0.25ex}{\scalebox{0.8}{$\bigotimes$}}}}
\DeclareMathOperator*{\Msum}{\text{\raisebox{0.25ex}{\scalebox{0.8}{$\sum$}}}}

\newcommand{\sminus}{\texttt{-}}
\newcommand{\splus}{\texttt{+}}

\newcommand{\sseq}[2]{{#1}\,\texttt{;}\,{#2}}

\newcommand{\insta}[3][ ]{\texttt{#2}^{#1}~{#3}}
\newcommand{\insttwo}[4][ ]{\texttt{#2}^{#1}~{#3}~{#4}}

\newcommand{\inot}[1]{\insta{X}{#1}}
\newcommand{\ictrl}[2]{\insttwo{CU}{#1}{#2}}
\newcommand{\iadd}[2]{\cn{add}(#1,#2)}
\newcommand{\irz}[3][ ]{\insttwo[#1]{RZ}{#2}{#3}}
\newcommand{\isr}[3][ ]{\insttwo[#1]{SR}{#2}{#3}}

\newcommand{\ilshift}[1]{\insta{Lshift}{#1}}
\newcommand{\irshift}[1]{\insta{Rshift}{#1}}
\newcommand{\irev}[1]{\insta{Rev}{#1}}
\newcommand{\iqft}[3][ ]{\insttwo[#1]{QFT}{#2}{#3}}

\newcommand{\tphi}[1]{\texttt{Phi}~{#1}}
\newcommand{\ihad}[1]{\texttt{H}(#1)}
\newcommand{\inew}[1]{\texttt{new}(#1)}
\newcommand{\iry}[2]{\texttt{Ry}^{#1}{#2}}

\newcommand{\hsp}[1]{\mathcal{#1}}
\newcommand{\iseq}[2]{{#1}\,\texttt{;}\,{#2}}

\newcommand{\itext}[1]{\texttt{#1}}

\newcommand{\instr}{\iota}
\newcommand{\iskip}[1]{\texttt{SKIP}\,{#1}}

\newcommand{\app}[3]{#2\texttt{[}{#3}\mapsto{#1}\texttt{]}}
\newcommand{\xsem}{\texttt{xg}}

\newcommand{\qsem}{\texttt{qt}}
\newcommand{\psem}{\texttt{pm}}
\newcommand{\rsem}{\texttt{rz}}
\newcommand{\rrsem}{\texttt{rrz}}

\newcommand{\csem}{\texttt{cu}}

\newcommand{\cn}[1]{\texttt{#1}}
\newcommand{\Omegasz}{\Sigma}
\newcommand{\Omegaty}{\Omega}


\newcommand{\denote}[1]{\llbracket #1 \rrbracket\xspace}

\newcommand{\qfun}[2]{#1\langle #2 \rangle}


\newcommand{\smea}[3]{\texttt{let}\;{#1}\;\texttt{=}\;{\mathpzc{M}\cn{(}#2\cn{)}}\;\texttt{in}\;#3}








\usepackage{newunicodechar}
\let\Alpha=A
\let\Beta=B
\let\Epsilon=E
\let\Zeta=Z
\let\Eta=H
\let\Iota=I
\let\Kappa=K
\let\Mu=M
\let\Nu=N
\let\Omicron=O
\let\omicron=o
\let\Rho=P
\let\Tau=T
\let\Chi=X

\newunicodechar{Α}{\ensuremath{\Alpha}}
\newunicodechar{α}{\ensuremath{\alpha}}
\newunicodechar{Β}{\ensuremath{\Beta}}
\newunicodechar{β}{\ensuremath{\beta}}
\newunicodechar{Γ}{\ensuremath{\Gamma}}
\newunicodechar{γ}{\ensuremath{\gamma}}
\newunicodechar{Δ}{\ensuremath{\Delta}}
\newunicodechar{δ}{\ensuremath{\delta}}
\newunicodechar{Ε}{\ensuremath{\Epsilon}}
\newunicodechar{ε}{\ensuremath{\epsilon}}
\newunicodechar{ϵ}{\ensuremath{\varepsilon}}
\newunicodechar{Ζ}{\ensuremath{\Zeta}}
\newunicodechar{ζ}{\ensuremath{\zeta}}
\newunicodechar{Η}{\ensuremath{\Eta}}
\newunicodechar{η}{\ensuremath{\eta}}
\newunicodechar{Θ}{\ensuremath{\Theta}}
\newunicodechar{θ}{\ensuremath{\theta}}
\newunicodechar{ϑ}{\ensuremath{\vartheta}}
\newunicodechar{Ι}{\ensuremath{\Iota}}
\newunicodechar{ι}{\ensuremath{\iota}}
\newunicodechar{Κ}{\ensuremath{\Kappa}}
\newunicodechar{κ}{\ensuremath{\kappa}}
\newunicodechar{Λ}{\ensuremath{\Lambda}}
\newunicodechar{λ}{\ensuremath{\lambda}}
\newunicodechar{Μ}{\ensuremath{\Mu}}
\newunicodechar{μ}{\ensuremath{\mu}}
\newunicodechar{Ν}{\ensuremath{\Nu}}
\newunicodechar{ν}{\ensuremath{\nu}}
\newunicodechar{Ξ}{\ensuremath{\Xi}}
\newunicodechar{ξ}{\ensuremath{\xi}}
\newunicodechar{Ο}{\ensuremath{\Omicron}}
\newunicodechar{ο}{\ensuremath{\omicron}}
\newunicodechar{Π}{\ensuremath{\Pi}}
\newunicodechar{π}{\ensuremath{\pi}}
\newunicodechar{ϖ}{\ensuremath{\varpi}}
\newunicodechar{Ρ}{\ensuremath{\Rho}}
\newunicodechar{ρ}{\ensuremath{\rho}}
\newunicodechar{ϱ}{\ensuremath{\varrho}}
\newunicodechar{Σ}{\ensuremath{\Sigma}}
\newunicodechar{σ}{\ensuremath{\sigma}}
\newunicodechar{ς}{\ensuremath{\varsigma}}
\newunicodechar{Τ}{\ensuremath{\Tau}}
\newunicodechar{τ}{\ensuremath{\tau}}
\newunicodechar{Υ}{\ensuremath{\Upsilon}}
\newunicodechar{υ}{\ensuremath{\upsilon}}
\newunicodechar{Φ}{\ensuremath{\Phi}}
\newunicodechar{φ}{\ensuremath{\phi}}
\newunicodechar{ϕ}{\ensuremath{\varphi}}
\newunicodechar{Χ}{\ensuremath{\Chi}}
\newunicodechar{χ}{\ensuremath{\chi}}
\newunicodechar{Ψ}{\ensuremath{\Psi}}
\newunicodechar{ψ}{\ensuremath{\psi}}
\newunicodechar{Ω}{\ensuremath{\Omega}}
\newunicodechar{ω}{\ensuremath{\omega}}

\newunicodechar{ℕ}{\ensuremath{\mathbb{N}}}
\newunicodechar{∅}{\ensuremath{\emptyset}}

\newunicodechar{∙}{\ensuremath{\bullet}}
\newunicodechar{≈}{\ensuremath{\approx}}
\newunicodechar{≅}{\ensuremath{\cong}}
\newunicodechar{≡}{\ensuremath{\equiv}}
\newunicodechar{≤}{\ensuremath{\le}}
\newunicodechar{≥}{\ensuremath{\ge}}
\newunicodechar{≠}{\ensuremath{\neq}}
\newunicodechar{∀}{\ensuremath{\forall}}
\newunicodechar{∃}{\ensuremath{\exists}}
\newunicodechar{±}{\ensuremath{\pm}}
\newunicodechar{∓}{\ensuremath{\pm}}
\newunicodechar{·}{\ensuremath{\cdot}}
\newunicodechar{⋯}{\ensuremath{\cdots}}
\newunicodechar{…}{\ensuremath{\ldots}}
\newunicodechar{∷}{~\mathrel{:\!\!\!:}~}
\newunicodechar{×}{\ensuremath{\times}}
\newunicodechar{∞}{\ensuremath{\infty}}
\newunicodechar{→}{\ensuremath{\to}}
\newunicodechar{←}{\ensuremath{\leftarrow}}
\newunicodechar{⇒}{\ensuremath{\Rightarrow}}
\newunicodechar{↦}{\ensuremath{\mapsto}}
\newunicodechar{↝}{\ensuremath{\leadsto}}
\newunicodechar{∨}{\ensuremath{\vee}}
\newunicodechar{∧}{\ensuremath{\wedge}}
\newunicodechar{⊢}{\ensuremath{\vdash}}
\newunicodechar{⊣}{\ensuremath{\dashv}}
\newunicodechar{∣}{\ensuremath{\mid}}
\newunicodechar{∈}{\ensuremath{\in}}
\newunicodechar{⊆}{\ensuremath{\subseteq}}
\newunicodechar{⊂}{\ensuremath{\subset}}
\newunicodechar{∪}{\ensuremath{\cup}}
\newunicodechar{⋓}{\ensuremath{\Cup}}
\newunicodechar{∉}{\ensuremath{\not\in}}
\newunicodechar{√}{\ensuremath{\sqrt}}

\newunicodechar{⊸}{\ensuremath{\multimap}}
\newunicodechar{⊗}{\ensuremath{\otimes}}
\newunicodechar{⨂}{\ensuremath{\bigotimes}}
\newunicodechar{⊕}{\ensuremath{\oplus}}
\newunicodechar{〈}{\ensuremath{\langle}}
\newunicodechar{⟨}{\ensuremath{\langle}}
\newunicodechar{⟩}{\ensuremath{\rangle}}
\newunicodechar{〉}{\ensuremath{\rangle}}
\newunicodechar{¡}{\ensuremath{\upsidedownbang}}
\newunicodechar{∘}{\ensuremath{\circ}}
\newunicodechar{†}{\ensuremath{\dagger}}
\newunicodechar{⊤}{\ensuremath{\top}}
\newunicodechar{⊥}{\ensuremath{\bot}}

\newunicodechar{〚}{\ensuremath{\llbracket}}
\newunicodechar{〛}{\ensuremath{\rrbracket}}

\usepackage{etoolbox} 
\newtoggle{comments}
\toggletrue{comments}
  \usepackage[normalem]{ulem}

\iftoggle{comments}{
  \newcommand{\fixme}[1]{\textbf{\textcolor{red}{[ Fixme: #1]}}}
  \newcommand{\todo}[1]{\textbf{\textcolor{green}{[ TODO: #1 ]}}}
  \newcommand{\mwh}[1]{\textbf{\textcolor{red}{[ Mike: #1 ]}}}
  
  \newcommand{\khh}[1]{\textbf{\textcolor{orange}{[ Kesha: #1 ]}}}
  \newcommand{\shh}[1]{\textbf{\textcolor{purple}{[ Shih-Han: #1 ]}}}
  \newcommand{\liyi}[1]{\textbf{\textcolor{blue}{[ Liyi: #1 ]}}}

  \newcommand{\oth}[2]{\textbf{\textcolor{red}{[ #1: #2 ]}}}
  \newcommand{\xwu}[1]{\textbf{\textcolor{purple}{[ Xiaodi: #1 ]}}}
  
  \newcommand{\ynote}[1]{\textbf{\textcolor{magenta}{[ Yi: #1 ]}}}

  \colorlet{MZ}{violet!80!pink}

  \newcommand{\mzr}[1]{{\color{MZ}{#1}}}
  \newcommand{\was}[1]{}
  
  \NewCommandCopy{\Creff}{\Cref}
  \renewcommand{\Cref}[1]{\mbox{\Creff{#1}}}

  \usepackage[inline]{enumitem}
  \colorlet{LC}{cyan!31!teal}

  \usepackage[inline]{enumitem}
}{
  \newcommand{\fixme}[1]{}
  \newcommand{\todo}[1]{}
  \newcommand{\rnr}[1]{}
  \newcommand{\mwh}[1]{}  
  \newcommand{\khh}[1]{}
  \newcommand{\liyi}[1]{}
  \newcommand{\shh}[1]{}
  \newcommand{\xwu}[1]{}
  \newcommand{\oth}[2]{}
  \newcommand{\mzr}[1]{}

  \newcommand{\ynote}[1]{}
}

\newtoggle{submission}
\toggletrue{submission}

\iftoggle{submission}{
  
}{
  
}



\usepackage{pifont}
\usepackage{multicol,tabularx,capt-of}
\usepackage{multirow}

\begin{document}

\title{Validating Quantum State Preparation Programs} 

\titlerunning{Validating Quantum State Preparation Programs}
\authorrunning{L. Li, A. Sharma, Z. Tagba, S. Frett, A. Potanin}

\author{Liyi Li\inst{1} \orcidID{0000-0001-8184-0244} \and
Anshu Sharma\inst{2} \orcidID{0000-0002-8686-5835} \and
Zoukarneini Difaizi Tagba\inst{1} \and
Sean Frett\inst{1} \orcidID{0009-0004-1671-9578} \and
Alex Potanin\inst{3} \orcidID{0000-0002-4242-2725}
}

\institute{Iowa State University, USA \\
\email{\{liyili2,difaizi,fretts\}@iastate.edu} \and
The College of William and Mary, USA \\
\email{agsharma@wm.edu}
\and
Australian National University, Australia \\
\email{alex.potanin@anu.edu.au}}
\maketitle

\begin{abstract}
 One of the key steps in quantum algorithms is to prepare an initial quantum superposition state with distinct features.
These \emph{state preparation} algorithms are essential to the behavior of quantum algorithms, and complicated state preparation algorithms are difficult to program correctly and effectively.
We present QSV: a high-assurance framework implemented with the Rocq proof assistant, permitting the development of quantum state preparation programs and validating them to correctly reflect quantum program behaviors.
The key is to reduce the program correctness assurance for a program containing a quantum superposition state to that of the program state without superposition.
The reduction enables the development of \textit{an effective framework for validating quantum state preparation algorithm implementations on a classical computer} — a problem considered hard and without a clear solution until now.
We utilize the QuickChick property-based testing framework to validate state preparation programs.
We evaluated the effectiveness of our approach across 5 case studies implemented using QSV; these cases are not simulatable on current quantum simulators.
 \vspace*{-0.5em}
 \keywords{Quantum Computing \and Property-based Testing}
 \vspace*{-0.3em}
\end{abstract}

\section{Introduction}
\label{sec:intro}

Despite recent advances ~\cite{morphq_bugs,fuzz4all,10.1109/ASE51524.2021.9678798,fortunato,long:24,QDiff}, quantum program developers still lack tools to quickly validate program correctness---testing a program with many test inputs in a short time---and other properties when writing comprehensive programs ~\cite{gill2024quantumcomputingvisionchallenges,MattSwayne}.
Testing quantum programs directly on quantum hardware is problematic because running actual quantum computers is expensive, and the probabilistic nature of quantum computing means repeated trials may be necessary to validate correctness, driving up costs further.
Ideally, we should ensure that a program satisfies user specifications before running it on the hardware.
Unfortunately, such a framework might not exist for validating a quantum program for arbitrary properties since quantum programs are not classically simulatable by naive state-vector simulation without an exponential number of classical bits relative to qubits.

A quantum validation framework needs to satisfy three key design goals.

\begin{itemize}
\item Programmers can develop a quantum program based on a proper abstraction, with respect to high-level program properties, without worrying too much about low-level gates.

\item The framework contains a scalable and effective validator to quickly judge the correctness of a user-defined program, as well as other properties, based on certain types of quantum program patterns.

\item The validated program can be compiled into a quantum circuit for execution.
\end{itemize}

 \begin{wrapfigure}{l}{7cm}
\hspace*{-0.3em}
  \includegraphics[width=0.6\textwidth]{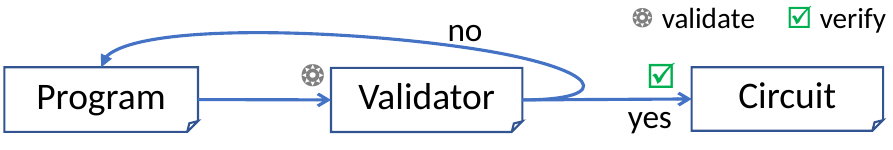}
   \caption{The QSV Flow}
\label{fig:qsv}
\end{wrapfigure}

We propose the Quantum State Preparation Program Validation Framework (QSV) (flow in \Cref{fig:qsv}), permitting effective validation of state preparation programs. \emph{The limitation is discussed in \Cref{sec:conclusion}}.
It includes three components. The first is the language, PQASM, which extends OQASM (Oracle Quantum Assembly Language); the P in PQASM stands for Preparation. To avoid bugs in common libraries \cite{morphq_bugs} and to avoid using operators outside cases they're specific to (sometimes, there is no documentation to tell what cases they allow), we chose to implement our own operators in PQASM. Users can develop their programs in the PQASM language, enabling them to write state-preparation programs at a high level of abstraction. Such programs can be validated by our validator (the second component), based on QuickChick \cite{quickchick} (a Rocq property-based testing facility), and we show several quantum program patterns effectively validated via our framework. 
Once a program is adequately developed in QSV, users can use our certified circuit compiler to compile it into a quantum circuit that runs on quantum hardware.

\myparagraph{Motivating Examples.}\label{sec:motivation}
Below is a simple state preparation subroutine, preparing a superposition of $n$ distinct basis-ket states, appearing in many algorithms \cite{Gorjan2007,mike-and-ike}. It has the following program transition property (a pre-state is transitioned to a post-state connected by $\to$) with program input of a length $m$ qubit array, initialized as $\aket{0}{m}$, and output a superposition of $n$ different basis-ket states, each with basis-vector $\aket{k}{m}$.

{\small
\begin{center}
$
\aket{0}{m}\to\sum_{j=0}^{n-1}\frac{1}{\sqrt{n}}\aket{j}{m}
$
\end{center}
}

\emph{A state preparation program can be defined as the starting component of a quantum algorithm,} typically starting with a length $m$ qubit array, each qubit initialized as zero ($\aket{0}{m}$) state, and preparing a superposition state $\sum_j \alpha_j \aket{c_j}{m}$ --- a linear sum of pairs (basis-kets) of complex amplitude $\alpha_j$ and bitstring (basis-vector) $c_j$ such that $\sum_j \slen{\alpha_j}^{2} =1 $ --- via a series of quantum operations.

Superposition is a key feature of quantum states, and quantum computers can execute programs with superposition states to query all possible inputs simultaneously, as discussed in \Cref{sec:background}.
Many quantum algorithms require a comprehensive design of state preparation components with different superposition structures, e.g., the $n$ basis-ket state in \Cref{fig:intros-example}.

A difficulty in developing these programs is that quantum program operations might affect every basis-ket in a superposition that might contain exponentially many basis-ket states, unlike the classical programs, where only a single basis-ket might be affected. In \Cref{fig:intros2}, a function $f$ is applied to every basis-ket in the quantum superposition state after all Hadamard operations were applied.
Moreover, many quantum languages are circuit-based \cite{Qiskit2019,tket,Cross2017,10.1145/3505636}.
These hinder the quantum program development as writing programs becomes unintuitive.

\Cref{fig:intros-example} shows a one-step procedure of the repeat-until-success program implementation for the $n$ basis-ket program.
The procedure starts with a series of Hadamard operations to prepare a uniform superposition of $2^m$ basis-kets as $\frac{1}{\sqrt{2^m}}\sum_{j=0}^{2^m}\aket{j}{m}$,
and then compares each basis-ket with the number $n$; such a comparison result is stored in the $y$ qubit.
If measuring $y$ results in $1$,  all the basis-kets ($\alpha_j\aket{j}{m}$), with bitstring numbers $j \ge n$, disappear, while those bitstrings $j<n$ will stay in $x$'s quantum state; thus, the correct state is prepared.
Since the measurement result $1$ is probabilistic, the repeat-until-success program requires repeating the one-step procedure many times to probabilistically prepare the target state.
In writing the program, a key component is the comparator $\qbool{x}{<}{n}{y}$, comparing every basis-vector of a quantum array $x$ with $n$ and storing the result in qubit $y$.
Such arithmetic operations have effective implementations \cite{oracleoopsla} and circuit-level optimizations \cite{VOQC,Xu2022}.
Therefore, QSV abstracts all these quantum arithmetic operations and relieves the pain of writing quantum programs.
The QSV compiler compiles and optimizes the arithmetic operations to quantum circuits. 
Moreover, we also provide types to classify different program patterns for users to write state preparation programs.

\begin{figure}[t]
{\hspace*{-1em}
\tiny
\begin{minipage}[t]{0.42\textwidth}
\subcaption{State Preparation Circuit}
\label{fig:intros2}
\vspace*{0.5em}
{\scriptsize
  \Qcircuit @C=0.5em @R=0.5em {
   & & \qw    & \gate{\cn{H}} & \qw & \multigate{3}{{ f(\ket{j})=(-i)^{j}\aket{j}{m}}}   & \qw &\qw & \qw \\
  \push{x:\aket{0}{m}\quad} & &  \vdots &          &     &                                          &    \rstick{\varphi_1} & &\\
   & & \vdots  &          &     &                                          &     &       &  \\
   & &  \qw   & \gate{\cn{H}} & \qw &  \ghost{{ f(\ket{j})=(-i)^{j}\aket{j}{m}}}         &\qw  &\qw    & \qw
      \gategroup{1}{2}{4}{2}{1em}{\{}
    }
}
\end{minipage}
\hfill
\begin{minipage}[t]{0.53\textwidth}
\subcaption{Preparing Superposition of $n$ Basis-kets}
\label{fig:intros-example}
\vspace*{0.5em}
 { \scriptsize
  \Qcircuit @C=0.5em @R=0.5em {
    &                     & & \qw & \gate{\cn{H}} & \qw & \qw & \multigate{6}{\texttt{$\qbool{x}{<}{n}{y}$}} & \qw & \qw & \qw & \\
    & \push{x:\aket{0}{m}\quad} & & \qw & \gate{\cn{H}} & \qw & \qw &  \ghost{\texttt{$\qbool{x}{<}{n}{y}$}}       &\qw & \qw & \qw &  \\
    & & &  &       & & &  & & & & \push{\varphi_2} \\
    & & &  & \dots & & & & & & & \\
    & & & & & & & & & & & \\
    &                     & & \qw & \gate{\cn{H}} & \qw & \qw &   \ghost{\texttt{$\qbool{x}{<}{n}{y}$}}      & \qw & \qw & \qw & \\
    & \push{y:\aket{0}{1}\quad} & & \qw & \qw      & \qw & \qw & \ghost{\texttt{$\qbool{x}{<}{n}{y}$}}        & \qw & \qw & \meter & \push{v} 
    \gategroup{1}{3}{6}{3}{1em}{\{}
    \gategroup{1}{11}{6}{11}{1em}{\}}
    }
  }
\end{minipage}
}
\caption{$x$ has $m$ qubits, and $y$ has $1$ qubit; $\varphi_1={ {\Msum_{j=0}^{2^m\sminus 1}(-i)^{j}\aket{j}{m}}}$. The $@$ symbol indicates in what qubit the result of an operation is stored. The right is one step in the repeat-until-success program to prepare the superposition state. $\varphi_2=\frac{1}{\sqrt{n}}\sum_{j=0}^n\aket{j}{m}$ if $v=1$; otherwise, $\varphi_2=\frac{1}{\sqrt{2^m-n}}\sum_{j=n}^{2^m}\aket{j}{m}$.}
\label{fig:circuits}
\end{figure}

Even if we permit high-level abstractions in QSV, validating a program might still be challenging because the number of basis-kets in a quantum state increases exponentially with the number of qubits, e.g., with $m$ qubits, the output state $\sum_{j=0}^{n-1}\frac{1}{\sqrt{n}}\aket{j}{m}$ might contain any of $2^m$ basis-kets and checking them individually might be challenging.
In developing our validator, we have two observations on quantum algorithms.
First, almost all quantum algorithms start with $m$ Hadamard operations to prepare a uniform superposition having $2^m$ basis-kets. These beginning Hadamard operations are simple enough and do not need to be validated, but they provide the source of a superposition state for later program operations to carve on. 
Second, even though quantum operations are probabilistic, one can "determinize" their behaviors.
If we consider the superposition of basis-kets as an array of basis-kets, quantum operations, except measurement, behave similarly to higher-order map functions applying to the quantum state. Measurement behaves similarly to a set selection, selecting a basis-ket element in the array, and the probability of such selection can be computed based on the amplitude value associated with the basis-ket.

QSV classifies beginning Hadamard operations as a special \cn{Had} type, to indicate that they are the source of the superposition state. When testing, instead of faithfully representing their operational behavior, QSV adopts a random pick of an individual basis-ket state as a representative, and validates programs based on transitions of the basis-ket. A measurement operation is then determinized and its probability is simply calculated via the amplitude value in the basis-ket.

For example, in dealing with the $n$ basis-ket program above, after applying the Hadamard operations, the program property is turned as the one on the left below, where $\sum_{j=0}^{2^m\sminus 1}\frac{1}{\sqrt{2^m}}\aket{j}{m}$ being the result of applying $m$ Hadamard operations. The QSV process of determinizing the basis-kets is to select a particular $j$, which turns the program property to be the right one.

{\footnotesize
\begin{mathpar}
 \inferrule[]{}{ \sum_{j=0}^{2^m\sminus 1}\frac{1}{\sqrt{2^m}}\aket{j}{m}\to\sum_{j=0}^{n\sminus 1}\frac{1}{\sqrt{n}}\aket{j}{m}}
  
   \inferrule[]{}{\forall j \in [0,2^m) \,,\, \frac{1}{\sqrt{2^m}}\aket{j}{m} \to(\frac{1}{\sqrt{n}}\aket{j}{m} \wedge j \in [0, n))}
    \end{mathpar}
}

Essentially, the validation process based on the right property is to assume a single basis-ket $\frac{1}{\sqrt{2^m}}\aket{j}{m}$ with a bitstring $j \in [0,2^m)$, then to validate to see if the output is the bitstring $j$ within range of $[0,n)$ and the associated amplitude being $\frac{1}{\sqrt{n}}$.
Via a property-based testing facility, such as QuickChick, by testing the program with enough candidate basis-kets, it will be highly likely that our validator can capture a bug if there is any. 
After validation, we compile the program to a quantum circuit via our certified compiler.

\vspace{0.2em}
\noindent\textbf{\textit{Contributions and Roadmap.}} 
We present QSV, enabling programmers to develop state-preparation programs.
Our contributions are as follows, with all Rocq proofs and experiment results available. Limitations are in \Cref{sec:conclusion}.

\begin{itemize}
\item We present the syntax, semantics, and type system of \pqasm, allowing users to define programs with a type-soundness proof in Rocq (\Cref{sec:pqasm}).

\item We develop a property-based testing (PBT) framework for validating programs written in \pqasm (\Cref{sec:implementation}) by showing a general flow of constructing such PBT frameworks for validating quantum programs.

\item We certify a compiler from \pqasm to \sqir \cite{VOQC} (\Cref{sec:vqir-compilation}) to ensure that our \pqasm tool correctly reflect quantum program behaviors.

\item We evaluate \pqasm via a selection of state preparation programs and demonstrate that QSV is capable of validating the programs (\Cref{sec:evaluation,sec:eval}),
which are hard to verify or validate if the input quibit length is normal (60 qubits per register and up to 361 qubits as the total input qubit size), based on the Qiskit quantum simulator and DDSim \cite{ddsim}.
\end{itemize}

\section{Background on Quantum Computing}
\label{sec:background}

\noindent\textbf{\textit{Quantum Data and Computation.}}
A quantum datum consists of one or more qubits. A single qubit can be represented as a two-dimensional vector $\begin{psmallmatrix} z_1 \\ z_2 \end{psmallmatrix}$, where $z_1$ and $z_2$ are complex amplitudes with $|z_1|^2 + |z_2|^2 = 1$. Using Dirac notation, this is written as $z_1\ket{0} + z_2\ket{1}$, with $\ket{0}$ and $\ket{1}$ as the \emph{computational basis-vectors}. When both $z_1$ and $z_2$ are non-zero, the qubit is in a superposition of the kets $\ket{0}$ and $\ket{1}$. The value inside the ket is a possible measurement outcome, and the norm square of the scalar factor, e.g., $\slen{z_1}^2$ and $\slen{z_2}^2$, is the probability of the measurement outcome.  Multi-qubit data is constructed via the tensor product, e.g., $\ket{0}\otimes\ket{1} = \ket{01}$. However, not all multi-qubit states can be separated into tensor products; some are entangled states, such as the Bell pair $\frac{1}{\sqrt{2}}(\ket{00}+\ket{11})$.

Quantum computation applies unitary gates to evolve the state of qubits. A gate is represented by a unitary matrix $U$ that acts on the qubit state vector $\ket{\psi}$, producing a new state $U\ket{\psi}$, e.g., the Hadamard gate $\cn{H} = \frac{1}{\sqrt{2}}\begin{psmallmatrix} 1 & 1 \\ 1 & -1 \end{psmallmatrix}$ transforms computational basis states into superpositions: applying $\cn{H}$ to $\ket{0}$ yields $\ket{+} = \frac{1}{\sqrt{2}}(\ket{0} + \ket{1})$, and applying $\cn{H}$ to $\ket{1}$ yields $\ket{-} = \frac{1}{\sqrt{2}}(\ket{0} - \ket{1})$. These operations are composed in quantum circuits for computations, where each wire denotes a qubit and each gate denotes a transformation applied at a specific time.

Measurement collapses the quantum state to a classical outcome with a probability determined by the amplitudes. For instance, measuring $\frac{1}{\sqrt{2}}(\ket{0}+\ket{1})$ yields $\ket{0}$ or $\ket{1}$, each with probability $\frac{1}{2}$. After measurement, the state irreversibly collapses to the observed basis state, and quantum coherence is lost.

\noindent\textbf{\textit{Quantum Oracles.}} Quantum algorithms manipulate input information encoded in ``oracles'', which are callable black-box circuits. 
Quantum oracles are usually quantum-reversible implementations of classical operations, especially arithmetic operations. Their behavior is defined in terms of transitions between single basis-kets.
We can infer the global state behavior based on the single basis-ket behavior through the quantum summation formula below. This resembles an array map operation in \Cref{fig:intros2}.
\oqasm in VQO~\cite{oracleoopsla} is a language that permits the definitions of quantum oracles with efficient verification and testing facilities by viewing quantum oracle operations as aggregate operations.

\noindent\textbf{\textit{Repeat-Until-Success Quantum Programs.}} A repeat-until-success program utilizes the probabilistic feature of partial measurement operations. It first sets up a one-step repeat-until-success by linking the desired quantum state with the success measurement of a certain classical value. If such a value is observed after measurement, the desired state is successfully prepared; otherwise, we repeat the one-step procedure.
One such example procedure is in \Cref{fig:intros-example} to repeat the $n$ basis-ket superposition state.
If we measure out $v=1$, the desired state $\varphi$ is prepared; otherwise, we repeat the procedure.

\noindent\textbf{\textit{No Cloning Theorem}} indicates there exists no general way of exactly copying a quantum datum \cite{noclone}. In quantum circuits, it relates to ensuring the reversibility of unitary gate applications.
For example, the controlled node and controlled body of a quantum control gate cannot refer to the same qubits, e.g., $\ictrl{q}{\iota}$ violates the property if $q$ is mentioned in $\iota$.
\pqasm{} enforces no cloning by typing.
\section{PQASM: A Language for Quantum State Preparations}
\label{sec:pqasm}

We designed \pqasm to express quantum-state preparation programs at a high-level abstraction.
\pqasm operations leverage a quantum-state design, with a type system to track the types of different qubits.
Such types restrict the kinds of quantum states, facilitating effective validation and analysis of the \pqasm program utilizing our quantum state representations.
This section presents \pqasm states and the language's syntax, semantics, typing, and soundness results.
More semantic and typing rules are in \Cref{appx:pqasm}.

As a running example, we program the $n$ basis-ket state preparation in \Cref{def:circuit-example} (figures in \Cref{fig:intros-example}).
The repeat-until-success program creates an qubit array $\overline{q}$ consisting of all zeroes and a new qubit $q'$, applies a Hadamard gate to each qubit in $\overline{q}$, uses a comparison operator $\qbool{\overline{q}}{<}{n}{q'}$ comparing $\overline{q}$ with $n$ and storing the result in $q'$, measures the qubit $q'$, and repeats the process until the measurment result is $1$. $\ihad{\overline{q}}$ is a syntactic sugar meaning $\iseq{\ihad{\overline{q}[0]}}{\iseq{...}{\ihad{\overline{q}[m\sminus 1]}}}$.

\begin{definition}[Example \pqasm program $P$ to prepare $n$ superposition states in $\overline{q}$]\label{def:circuit-example}\rm 
Qubit array length is at least $\cn{log}(n)+1$; $q'$ is a single qubit.

{\small
$
P\triangleq\iseq{\inew{\overline{q}}}{\iseq{\iseq{\inew{q'}}{\ihad{\overline{q}}}}{\iseq{\qbool{\overline{q}}{<}{n}{q'}}{\smea{x}{q'}{\sifb{x=1}{\sskip}{P}}}}}
$
}
\end{definition}

\subsection{\pqasm States and Syntax} \label{sec:pqasm-states}

\begin{figure}[t]
{\small
  \[\hspace*{-0.5em}
  \begin{array}{l}
  \begin{array}{l@{\;\;}c@{\;}c@{\;}l@{\qquad}l@{\;\;}c@{\;}c@{\;}l@{\;}@{\qquad}l@{\;\;}c@{\;}c@{\;}l@{\;}}
        \text{Qubit Name}  & q && & \text{Nat} & n,m & \in & \mathbb{N} &      \text{Real} & r & \in & \mathbb{R} \\[0.2em]
        \text{Complex} & z & \in & \mathbb{C} & \text{Bit} & b & \in & \{0,1\} & \text{Bitstring} & c & ::= & \overline{b}
  \end{array}\\
\begin{array}{l c c l@{\;}c@{\;}l@{\;}c@{\;}l}
      \text{Qubit Basis State} & \nu & ::= & \aket{b}{1} & & &\mid &\qket{r} \\
       \text{Qubit Records} & \theta & ::= & (\overline{q} &, & \overline{q} &, & \overline{q}) \\
      \text{Type} & \tau & ::= & \thadt & \mid& \tnort & \mid& \trott\\
      \text{Basis Vector} & \eta & ::= &  \Motimes_j\nu_j  \\
      \text{Basis-Ket} & \rho & ::= &  z\cdot \eta  \\
      \text{Quantum Data} & \varphi & ::= & \rho &\mid& \sum_{b=0}^1 \varphi \\
      \text{Quantum State} & \Phi & ::= & \theta \to \varphi
    \end{array}
    \end{array}
  \]
  \caption{State syntax. $\overline{S}$: a sequence of $S$. $\aket{c}{n\splus 1} \equiv \aket{c[0]}{1}\otimes...\otimes\aket{c[n]}{1}$, as $\slen{c}=n\splus 1$.}
  \label{fig:pqasm-state}
}
{\small 
\begin{center}
  $ \hspace*{-0.8em}
\begin{array}{l}
      \text{Classical Variable}~x,y \qquad\qquad \text{Boolean Expressions}~B\\
\begin{array}{llcl}
      \text{Parameters} & \alpha & ::= & \overline{q} \mid n\\
      \text{OQASM Arith Ops} & \mu & ::= & \iadd{\alpha}{\alpha} \mid \modmult{n}{\alpha}{m} \mid \qbool{\alpha}{=}{\alpha}{q} \mid \qbool{\alpha}{<}{\alpha}{q}\mid ...\\
      \text{Instruction} & \iota & ::= & \mu \mid \iry{r}{q} \mid \ictrl{q}{\iota} \mid \iseq{\iota}{\iota}\\
      \text{Program} & e & ::= & \iota \mid \iseq{e}{e} \mid \ihad{q} \mid \inew{{q}} \mid \smea{x}{\overline{q}}{e} \mid \sifb{B}{e}{e}
\end{array}
    \end{array}
  $
  \end{center}
}
  \caption{\pqasm syntax.}
  \label{fig:pqasm}
\end{figure}

A \pqasm program state $\Phi$ is represented based on the grammar in \Cref{fig:pqasm-state}, mapping from qubit records $\theta$ to a quantum datum $\varphi$.
A quantum datum is managed as qubit records, each of which is a collection of qubits possibly being entangled, while qubits in different records are guaranteed to have no entanglement.
Our quantum data consist of a quantum entanglement state that can be analyzed as two portions: 1) a sequence of sum operators $\sum_{b_1=0}^1...\sum_{b_n=0}^1$, and 2) a basis-ket $\rho$, a pair of a complex amplitude $z$ and a tensor product of basis vector $\eta$, which is a tensor of single qubit basis states $\nu$.
Each sum operator represents the creation of a superposition state via a Hadamard operation $\cn{H}$, i.e., the number of sum operators in a state represents the number of Hadamard operations applied to qubits in the state so far.
The variable $\rho=z\cdot \eta$ represents a basis-ket of a quantum state.
To understand the relation between a basis-ket and a quantum superposition state as a linear sum,
one can think of a superposition state as a collection of "quantum choices", and a basis-ket represents a possible choice, i.e., a measurement of a qubit record produces one possible choice, with the amplitude $z$ related to the probability of the choice.

A qubit basis state $\nu$ has one of two forms, $\aket{b}{1}$ and $\qket{r}$.
The former corresponds to the $\thadt$ and $\tnort$ types and the latter corresponds to the $\trott$ type.
The three types of qubit basis states are represented as the three fields in a qubit record,
i.e., $(\overline{q}_1, \overline{q}_2, \overline{q}_3)$ has three disjoint qubit sequences.
$\overline{q}_1$ is always typed as $\thadt$, $\overline{q}_2$ has type $\tnort$, and $\overline{q}_3$ has type $\trott$.
The $\thadt$ and $\tnort$ typed qubits are in the computational basis.
The $\trott$ typed basis state is different from the other types in terms of \emph{bases}, and it has the form $\qket{r} = \cn{cos}(r)\aket{0}{1}+\cn{sin}(r)\aket{1}{1}$, which is a basis state in the $Y$-rotated Hadamard basis.
Applying a $\cn{Ry}$ with the $Y$-axis angle $r$ to a $\aket{0}{1}$ qubit results in $\qket{r}=\cn{cos}(r)\ket{0}+\cn{sin}(r)\ket{1}$.
 
\Cref{fig:pqasm} presents \pqasm's syntax.  A \pqasm program $e$ is either an instruction $\iota$, a sequence operation $\iseq{e}{e}$, applying a Hadamard operation $\ihad{q}$ to a qubit $q$ to create a superposition, creating ($\inew{q}$) a new blank qubit $q$, a \cn{let} binding that measures a sequence of qubits $\overline{q}$ and uses the result $x$ in $e$, or classical conditional $\sifb{B}{e}{e}$ with classical Boolean guard $B$.
Each \cn{let} binding assigns a measurement result of qubits $\overline{q}$ to a variable $x$, representing a binary sequence.
We assume $\ihad{\overline{q}}$ and $\inew{\overline{q}}$ as syntactic sugars of applying a sequence of Hadamard and new-qubit operations.

The instructions $\iota$ correspond to unitary quantum circuit operations, including oracle arithemtic operations ($\mu$) implementable through \oqasm operations \cite{oracleoopsla} on a qubit sequence $\overline{q}$ (detailed in \Cref{sec:vqir}), a $Y$-axis rotation gate $\iry{r}{q}$ that rotates an angle $r$, a quantum control instruction ($\ictrl{q}{\instr}$), and a sequence operation ($\iseq{\iota}{\iota}$). Operation $\ictrl{q}{\instr}$ applies instruction $\instr$ \emph{controlled} on qubit $q$. 
In this paper, we provide several sample arithmetic oracle operations $\mu$ in \Cref{fig:pqasm}, such as addition ($\iadd{\alpha}{\alpha}$, adding the first to the second), modular multiplication ($\modmult{n}{\alpha}{m}$), quantum equality ($\qbool{\alpha}{=}{\alpha}{q}$), quantum comparison ($\qbool{\alpha}{<}{\alpha}{q}$), etc.
Each parameter $\alpha$ is either a group of qubits $\overline{q}$ or a number $n$.
Recall that a basis-ket state of a qubit array $\overline{q}$ is essentially a bitstring with a complex amplitude.
A quantum arithmetic operation applies the classical version of the operation to each basis-ket in a quantum superposition state, e.g., $\qbool{\overline{q}}{=}{n}{q}$ compares the bitstring representation of each basis-ket in $\overline{q}$ with the number $n$ and stores the result in $q$.
In a \pqasm program containing qubit array $\overline{q}$, $x$ in a \cn{let} binding binds a local classical value ($x$'s value) with the computational basis measurement result ($\mathpzc{M}$) on qubits $\overline{q}$. While the classical variable scope is local, the quantum qubits are immutable and globally scoped, i.e., quantum operations are applied to a global quantum state; each qubit in the state is referred to by quantum qubit names ($q$) in the program.
In \pqasm, we express a SKIP operation ($\sskip$) via a $\mu$ operation having empty qubits, as $
\mu \; \equiv\; \sskip\;\;\cn{when}\;\;FV(\mu)=\emptyset
$ ($FV$ collects free variables).

\subsection{Semantics}\label{sec:pqasm-dsem}

\begin{figure}[t]
{\footnotesize
\[
\begin{array}{lll}

\llbracket \mu \rrbracket\eta &= \app{\llbracket \mu \rrbracket\eta(\overline{q})}{\eta}{\overline{q}}
&
\texttt{where  }
FV(\mu) = \overline{q}
\\

\llbracket \iry{r}{q} \rrbracket\eta &=  \app{\qket{r}}{\eta}{q}
&
\texttt{where  }
\eta(q) = \ket{0}_{1}
\\

\llbracket \iry{r}{q} \rrbracket\eta &=  \app{\qket{\frac{3\pi}{2}- r}}{\eta}{q}
&
\texttt{where  }
\eta(q) = \ket{1}_{1}
\\

\llbracket \iry{r}{q} \rrbracket\eta &=  \app{\qket{r+r'}}{\eta}{q}
&
\texttt{where  }
\eta(q) = \qket{r'}
\\

\llbracket \ictrl{q}{\instr} \rrbracket\eta &=  \csem(\eta(q),\instr,\eta)
&
\texttt{where  }
\csem({\ket{0}{1}},{\instr},\eta)=\eta\quad\;\,
\csem({\ket{1}{1}},{\instr},\eta)=\llbracket \instr \rrbracket\eta
\\

\llbracket \iota_1; \iota_2 \rrbracket\eta &= \llbracket \iota_2 \rrbracket (\llbracket \iota_1 \rrbracket\eta)
\end{array}
\]
}
{\footnotesize
\begin{center}
$
\app{\eta'}{\eta}{\overline{q}}=\app{\eta'(q)}{\eta}{\forall q\in \overline{q}.\;q}
$
\end{center}
}
\caption{Instruction level \pqasm semantics; $\eta(\overline{q})$ : the states of the qubits $\overline{q}$ in $\eta$.}
  \label{fig:deno-sem}
\end{figure}

The \pqasm semantics has two levels: instruction and program levels.
The  former is a partial function $\llbracket - \rrbracket$ from an instruction $\instr$ and input basis vector state $\rho$ to an output state $\eta'$, written 
$\llbracket \instr \rrbracket\eta=\eta'$, shown in \Cref{fig:deno-sem}.
The \textit{program} level semantics is a labelled transition system $(\Phi,e) \xrightarrow{r} (\Phi',e')$ in \Cref{fig:exp-semantics}, stating that the input configuration $(\Phi,e)$ is possibly evaluated to an output configuration $(\Phi',e')$ with the probability $r$.
It essentially represents a Markov chain, where a program evaluation path is a chain of probabilities that shows the probability of reaching a particular configuration from the initial configuration.

The instruction level semantic rules assume that one can locate the state of a qubit $q$ in $\eta$ as $\eta(q)$, where we can refer to $\app{\nu}{\eta}{q}$ as updating the qubit state $\nu$ for the qubit $q$ in $\eta$.
Recall that a length $n$ basis vector state $\eta$ is a tuple of $n$ qubit values, modeling the tensor product $\nu_1\otimes \cdots \otimes \nu_n$. 
The rules implicitly map each qubit $q$ to a state, e.g., 
$\eta(q)$ corresponds to some sub-state $\nu_q$, where $\nu_q$ locates at the $q$'s position in $\eta$.
Many of the rules in \Cref{fig:deno-sem} update a \emph{portion} of a state. We write $\app{\nu_{q}}{\eta}{q}$ to update the state of the qubit of $q$ in $\eta$ with $\nu_q$, and
$\app{\eta'}{\eta}{\overline{q}}$ to update a range of qubits $\overline{q}$ according to the vector state $\eta'$, i.e., we update each $q\in\overline{q}$ with the qubit value $\eta'(q)$ and $\slen{\eta'}=\slen{\overline{q}}$.
The function \texttt{cu} is a conditional operation depending on the $\tnort$/$\thadt$ typed qubit $q$. 

\begin{figure*}[t]
{\scriptsize
  \begin{mathpar}
      \inferrule[S-Ins]{b = b_1,...,b_n \quad \varphi = \sum_{b_1=0}^1...\sum_{b_n=0}^1 z_b\cdot\eta_b \quad \llbracket \iota \rrbracket (\eta_b)=\eta'_b}{ (\Phi[\theta\mapsto \varphi],\iota) \xrightarrow{1}  (\Phi[\theta\mapsto \sum_{b_1=0}^1...\sum_{b_n=0}^1 z_b\cdot\eta'_b], \sskip)}
        
        \inferrule[S-New]{\overline{}}
        {(\Phi,\inew{q}) \xrightarrow{1} (\app{\aket{0}{1}}{\Phi}{(\emptyset,q,\emptyset)},\sskip)}
         \end{mathpar}
           \begin{mathpar}
        \inferrule[S-Had]{\overline{}}{ (\Phi[(\emptyset,q,\emptyset)\mapsto \aket{b}{1}],\ihad{q}) \xrightarrow{1} (\app{\sum_{j=0}^1 (\sminus 1)^{j\cdot b}\aket{j}{m}}{\Phi}{(q,\emptyset,\emptyset)}, \sskip) }
                   \end{mathpar}
           \begin{mathpar}
           \inferrule[S-Mea]{\Phi = \Phi' \uplus \{\uparrow\overline{q} \mapsto \Msum_j z_j\aket{c}{m}{\aket{c_j}{n}}+\qfun{\phi}{\overline{q}',c \neq \overline{q}'}\}  \\ r= \Msum_j \slen{z_j}^2 }
  {(\Phi,\smea{x}{\overline{q}}{e}) \xrightarrow{r} (\Phi'\uplus \{ (\uparrow\overline{q}) \textbackslash \overline{q} :  \Msum_j {\frac{z_j}{\sqrt{r}}}{\aket{c_j}{m}}\},e[c/x]) }
  \end{mathpar}
}
{\scriptsize
\begin{center}
$
\begin{array}{lcl}
\uparrow\overline{q}&\triangleq& \exists \overline{q}_1, \overline{q}_2,\overline{q}_3 \,.\,\uparrow\overline{q}=(\overline{q}_1, \overline{q}_2,\overline{q}_3)\wedge \overline{q} \subseteq \overline{q}_1 \uplus \overline{q}_2 \uplus \overline{q}_3
\\[0.2em]
\qfun{(\Msum_{i}{z_i}{\aket{c_{i}}{m}}{\aket{c'_i}{n}}+\varphi)}{\overline{q},b} &\triangleq& \Msum_{i}{z_i}{\aket{c_{i}}{m}}{\aket{c'_i}{n}}
      \qquad\texttt{where}\quad\forall i.\,\slen{c_{i}}=\slen{\overline{q}'}=m\wedge \denote{b[c_{i}/\overline{q}']}=\texttt{true}
\end{array}
$
\end{center}
}
\caption{Selected program Level \pqasm rules.}
\label{fig:exp-semantics}
\end{figure*}

\Cref{fig:exp-semantics} shows selected program-level semantic rules.
Since qubit records in a program state $\Phi$ partition the qubit domain, we can think of $\Phi$ as a multiset of pairs of qubit records and state values, as in \rulelab{S-Mea},
i.e., $\Phi[\theta \mapsto \varphi] \equiv \Phi \uplus \{\theta\mapsto \varphi\}$.
Rule \rulelab{S-Ins} connects the instruction level semantics with the program level by evaluating each basis vector state $\eta$ through the instruction $\iota$.
Rule \rulelab{S-New} creates a new blank ($\aket{0}{1}$) qubit, which is stored in the record $(\emptyset,q, \emptyset)$, a qubit ($q$) being created are $\tnort$ typed.
For a $\tnort$ typed qubit $(\emptyset, q, \emptyset)$, rule \rulelab{S-Had} turns the qubit to be $\thadt$ typed superposition, as $(q,\emptyset, \emptyset)$.
The measurement rule (\rulelab{S-Mea}) produces a probability $r$ label, and the value comes from the measurement result.
We first rewrite the quantum state to be a linear sum of computational basis-kets $\Msum_j r_j\aket{c}{m}{\aket{c_j}{n}}+\qfun{\varphi}{\overline{q}',c \neq \overline{q}'}$,
where every basis-vector $\aket{c}{m}$ (or $\aket{c_j}{n}$) is a bitstring, and all the sum operators are resolved as a single sum.

Any \pqasm state can be written as a sum of computational basis-kets.
As the equations shown below, the basis-ket state $\aket{c}{n}\qket{r}$ can be rewritten to be a sum of two computational basis-kets as $\cn{cos}(r)\aket{c}{n}\aket{0}{1}+\cn{sin}(r)\aket{c}{n}\aket{1}{1}$, while the two sum operators can be replaced as a single sum operator over length-$2$ bitstring $c$, where we replace $b_j$ with $c[0]$ (indexing $0$ of $c$) and $b_k$ with $c[1]$.

{\scriptsize
\begin{center}
$
\aket{c}{n}\qket{r} \equiv \cn{cos}(r)\aket{c}{n}\aket{0}{1}+\cn{sin}(r)\aket{c}{n}\aket{1}{1}
\qquad
\sum_{b_j=0}^1 \sum_{b_k=0}^1 \eta \equiv \sum_{c\in\{0,1\}^2} \eta[c[0]/b_j][c[1]/b_k]
$
\end{center}
}

\subsection{Typing}
\label{sec:pqasm-typing}

In \pqasm, typing is with respect to a \emph{type environment} $\Omega$, a set of qubit records partitioning qubits into different disjoint union regions, and a \emph{kind environment} $\Sigma$, a set tracking local variable scopes.
Typing judgments are two leveled, and are written as $\Omega\vdash \instr \triangleright_g \Omega'$ and $\Sigma; \Omega\vdash e\triangleright \Omega'$,
 which state that instruction $\instr$ and program expression $e$ are well-typed under $\Omega$ and $\Sigma$, and
transforms variables' bases as characterized by $\Omega'$.
$\Omega$ is populated via qubit creation operations (\cn{new}), while $\Sigma$ is populated via \cn{let} binding.
Selected typing rules are in \Cref{fig:exp-well-typed}.

The instruction level type system is flow-sensitive, where $g$ is the context flag and can be either $\cn{M}$ or $\cn{C}$, indicating whether the current instruction is inside a controlled operation. The program level type system communicates with the instruction level by assuming a \cn{C} mode context flag, shown in rule \rulelab{Tup}.
We explain the necessity of the context flag below.
Each qubit record $\theta$ represents an entanglement group, i.e., qubits in the same record might or might not be entangled, while qubits in different records are ensured not to be entangled.

{\footnotesize
\begin{center}
$
\begin{array}{l@{\;}c@{\;}l}
(\overline{q_1},\overline{q_2},\overline{q_3})\uplus(\overline{q_4},\overline{q_5},\overline{q_6}) &\equiv& 
(\overline{q_1}\uplus\overline{q_4},\overline{q_2}\uplus \overline{q_5},\overline{q_3}\uplus \overline{q_6})
\\[0.2em]
(\emptyset,\overline{q_1}\uplus \overline{q_2},\overline{q_3}\uplus \overline{q_4}) &\equiv& (\emptyset,\overline{q_1},\overline{q_3})\uplus (\emptyset,\overline{q_2},\overline{q_4})
\end{array}
$
\end{center}
}

\begin{figure}[t]
{\scriptsize\hspace*{-1em}
  \begin{mathpar}
      \inferrule[RyN]{\overline{}}{\{(\emptyset,\{q\},\emptyset)\} \uplus \Omega\vdash_{\cn{C}} \iry{r}{q}\triangleright \{(\emptyset,\emptyset,\{q\})\}\uplus \Omega}
\qquad 
     \inferrule[MuT]{\overline{q} \subseteq \overline{q_1}\cup \overline{q_2}}{\{(\overline{q_1},\overline{q_2},\overline{q_3})\}\uplus \Omega \vdash_g \mu(\overline{q})\triangleright \{(\overline{q_1},\overline{q_2},\overline{q_3})\}\uplus \Omega}
         \end{mathpar}
           \begin{mathpar}
     \inferrule[RyH]{\trot{\theta}=\{q\} \uplus \overline{q}}{\{\theta\}\uplus \Omega \vdash_{g}  \iry{r}{q}\triangleright \{\theta\}\uplus \Omega}
\quad
      \inferrule[Eqv]{\Omega\equiv \Omega' \\\\ \Omega' \vdash_g e\triangleright \Omega''}{ \Omega\vdash_g e\triangleright\Omega''}
\quad
      \inferrule[Tup]{\Omega\vdash_{\cn{C}} \instr\triangleright \Omega'}{\Sigma;\Omega \vdash \iota\triangleright \Omega'} 
\quad
      \inferrule[New]{q\not\in \Omega}{\Sigma;\Omega \vdash \inew{{q}}\triangleright \Omega \uplus \{(\emptyset,{q},\emptyset)\}} 
         \end{mathpar}
           \begin{mathpar}
    \inferrule[CuN]{\{\tnor{\theta}\downarrow \overline{q}\} \uplus \Omega \vdash_{\cn{M}} \iota \triangleright \{\tnor{\theta}\downarrow \overline{q}\}\uplus \Omega}{\{\tnor{\theta}\downarrow \{q\}\uplus\overline{q}\} \uplus \Omega \vdash_g \ictrl{q}{\iota} \triangleright \{\tnor{\theta}\downarrow \{q\}\uplus\overline{q}\} \uplus \Omega} 
\;
    \inferrule[HT]{\overline{}}{\Sigma;\Omega\uplus\{(\emptyset,q,\emptyset)\} \vdash \ihad{q}\triangleright \Omega\uplus\{(q,\emptyset,\emptyset)\}} 
         \end{mathpar}
           \begin{mathpar}
    \inferrule[CuH]{\{\thad{\theta}\downarrow \overline{q}\} \uplus \Omega \vdash_{\cn{M}} \iota \triangleright \{\thad{\theta}\downarrow \overline{q}\} \uplus \Omega}{\{\thad{\theta}\downarrow \{q\}\uplus\overline{q}\} \uplus \Omega \vdash_g \ictrl{q}{\iota} \triangleright \{\thad{\theta}\downarrow \{q\}\uplus\overline{q}\} \uplus \Omega} 
\quad            
      \inferrule[Mea]{\overline{q}\subseteq \theta \\ \Sigma\cup\{x\};\Omega \uplus\{\theta\textbackslash \overline{q}\} \vdash e \triangleright \Omega'}{\Sigma;\Omega \uplus\{\theta\} \vdash \smea{x}{\overline{q}}{e}\triangleright \Omega'} 
      \end{mathpar}
}
  \caption{Selected type rules. ${\footnotesize (\overline{q}_1,\overline{q}_2,\overline{q}_3) \backslash \overline{q} \triangleq (\overline{q}_1 \backslash\, \overline{q},\overline{q}_2 \backslash\, \overline{q},\overline{q}_3 \backslash\, \overline{q})}$; ${\footnotesize \overline{q}_1 \backslash\, \overline{q}}$: set subtract. }
  \label{fig:exp-well-typed}
\end{figure}

In our type system, we permit ordered equational rewrites among quantum qubit states.
Each type environment, mainly the operation $\uplus$, admits associativity, commutativity, and identity equational properties.
The $\uplus$ operations in the three fields in a qubit record also admit the three properties.
Other than the three equational properties, we admit the above partial order relations, where we permit the rewrites from left to right in our type system to permit qubit records merging and splitting.
Record merging can always happen, i.e., two qubit entanglement groups can be merged into one.
Qubit splitting cannot occur in $\thadt$ typed qubits.
A qubit record, including a $\thadt$ typed qubit, represents a quantum entanglement with qubits not separable,
while qubits being $\tnort$ and $\trott$ typed can be split into different records.
Rule \rulelab{Eqv} imposes the equivalence relation to permit the rewrites, only allowing rewrites from left to right, of equivalent qubit records.
Note that a type environment determines qubit record scopes in a quantum state $\varphi$ (\Cref{fig:pqasm-state}),
i.e., a quantum state should have the same qubit record domain as the type environment at a program point, so the equational rewrite of a type environment might affect the qubit state representation.

Other than the equational rewrites, the type system enforces three properties. 
First, it enforces that classical and quantum variables are properly scoped.
Rule \rulelab{Mea} includes the local variable $x$ in $\Sigma$.
Rule \rulelab{New} creates a new record $(\emptyset,q,\emptyset)$ in the post-environment, provided that $q$ does not appear in any record in $\Omega$.
In rule \rulelab{RyH}, the premise $\trot{\theta}=\{q\} \uplus \overline{q}$ utilizes $\trott$ to finds the $\trott$ filed in the record $\theta$ and ensures that $q$ is in the field. In rule \rulelab{CuH}, the premise $\thad{\theta}\downarrow \{q\}\uplus\overline{q}$ ensures that the controlled position $q$ is in the $\thadt$ field in $\theta$.
In rule \rulelab{MuT}, we ensure that the qubits $\overline{q}$ being applied by the $\mu$ operation are $\tnort$ and $\thadt$ typed, through the premise $\overline{q} \subseteq \overline{q_1}\cup \overline{q_2}$.

Second, we ensure that expressions and instructions are well-formed, i.e., any control qubit is distinct from the target(s), 
which enforces the quantum \emph{no-cloning rule}.
In rules \rulelab{CuH} and \rulelab{CuN} for control operations, when typing the target instruction $\iota$ (the upper level), we remove the control qubit $q$ in the records to ensure that $q$ cannot be mentioned in $\iota$.
In rule \rulelab{Mea}, we also remove the measured qubits $\overline{q}$ from the record $\theta$.

Third, the type system enforces that expressions and instructions leave affected qubits in a proper type ($\tnort$, $\thadt$, and $\trott$), representing certain forms of qubit states, mentioned in \Cref{fig:pqasm-state}; therefore, one can utilize the procedure mentioned in \Cref{sec:intro} to analyze \pqasm programs effectively.
The key is to utilize the summation formula to reduce the analysis of a general quantum state to that of a quantum state without entanglement.
Specifically, the $\iry{r}{q}$ opeartion is permitted only if $q$ is of $\tnort$ type, which is turned to $\trott$ type and stays there;
$\mu$ can be applied to $\tnort$ typed qubits $\overline{q}$ where $FV(\mu)=\overline{q}$; and a control qubit $q$ in $\ictrl{q}{\iota}$ can be applied to a $\tnort$ and $\thadt$ typed qubit.

\begin{wrapfigure}{r}{3cm}
{\small
$\begin{array}{c}
  \Qcircuit @C=0.5em @R=0.5em {
    &                     \qw & \ctrl{1} &  \qw \\
    &                     \gate{\cn{Ry}(0)}  & \gate{\cn{Ry}(r)}  & \qw    
    }
    \end{array}
$
}
\caption{Ensuring qubits inside a controlled \cn{Ry} have the same type.}
\label{fig:rygate}
\end{wrapfigure}

We ensure type restrictions for qubits via pre- and post-type environments.
Rule \rulelab{HT} permits the generation of $\thadt$ type qubits, a.k.a. superposition qubits $q$, provided that $q$ is of $\tnort$ type and not entangled with other qubits. Once a Hadamard operation is applied, we turn the qubit types to $\thadt$ in the post-type environment, so one cannot apply Hadamard operations again to the qubits.
This does not mean that users can only apply Hadamard operations once in \pqasm, because combining \cn{X} (our oracle operation) and \cn{Ry} gates can produce a Hadamard gate. We utilize the type information to locate the first appearance of Hadamard operations, identify them as the source of superposition, and apply treatments for them in our validation testing framework; see \Cref{sec:rand-testing}.

In Rules \rulelab{CuN} and \rulelab{CuH}, we ensure that the pre- and post-type environments are the same.
In $\ictrl{q}{\iota}$, if $\iota$ contains a \cn{Ry} operation, applying to a qubit $q$, $q$ must already be $\trott$ type.
\Cref{fig:rygate} provides a programming prototype satisfying this type requirement,
where programmers explicitly add a \cn{Ry} gate before the controlled \cn{Ry} operation to ensure the second qubit is in $\trott$ type.
The extra \cn{Ry} operation can be a $0$ rotation, equivalent to a SKIP operation, and can be removed by an optimizer when compiling to quantum circuits.
The above qubit type restriction does not depend on the applications of controlled operations.
We ensure this by associating the context flags with the instruction level type system.
When applying $\ictrl{q}{\iota}$, rules \rulelab{CuN} and \rulelab{CuH} turn the context flag to $\cn{M}$, indicating that $\iota$ lives inside a controlled operation.
Rule \rulelab{RyN} requires a context flag \cn{C}, meaning that the rule is valid only if the \cn{Ry} operation lives outside any controlled operation.
In contrast, rule \rulelab{RyH} does not require a specific context flag, which indicates that a \cn{Ry} operation inside a controlled node must apply to a qubit already in $\trott$ type.

\noindent\textbf{\textit{Soundness.}}
We prove type soundness: well-typed \pqasm programs are well-defined.
The type soundness theorem relies on a well-formed definition of a program $e$, $FV(e)\subseteq \Sigma$, meaning that all free variables in $e$ are bounded by $\Sigma$.
We also need the definition of the well-formedness of a \pqasm state as follows.

\begin{definition}[Well-formed state]\label{def:well-formed}\rm 
  A state $\Phi$ is \emph{well-formed},
  $\Omega \vdash \Phi$, iff:

\begin{itemize}
\item For every $q$ such that $\Omegaty(q) = \tnort$ or $\Omegaty(q) = \thadt$ ,
  $\Phi(q)$ has the form $\aket{b}{1}$.

\item For every $q$ such that $\Omegaty(q) = \trott$, $\Phi(q)$ has the form $\qket{r}$.
\end{itemize}
\end{definition}

\noindent
Type soundness is stated as two theorems: type progress and preservation; the proof is by induction on $\instr$ and is mechanized in Rocq.

\begin{theorem}\label{thm:type-progress}\rm[\pqasm Type Progress]
If $\emptyset; \Omega \vdash e \triangleright \Omega'$, $FV(e)\subseteq \emptyset$, and $\Omega \vdash \Phi$, then either $e = \sskip$ or there exists $r$, $e'$, and $\Phi'$, such that $(\Phi,e)\xrightarrow{r}(\Phi',e')$.
\end{theorem}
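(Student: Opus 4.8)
The plan is to prove \Cref{thm:type-progress} by induction on the derivation of $\emptyset;\Omega\vdash e\triangleright\Omega'$, case-splitting on the last program-level rule used. The equivalence rule \rulelab{Eqv} is dispatched first: since $\Omega\equiv\Omega'$ only reorganizes records through the $\uplus$-equations and record merging, the hypothesis $\Omega\vdash\Phi$ yields $\Omega'\vdash\Phi'$ for a correspondingly reshuffled $\Phi'$ carrying the same underlying datum, and the inner induction hypothesis applies. The routine cases are then: \rulelab{New}, where \rulelab{S-New} fires directly using $q\notin\Omega$ from the premise; \rulelab{HT}, where the typing premise makes $q$ $\tnort$-typed and isolated in the record $(\emptyset,q,\emptyset)$, so $\Omega\vdash\Phi$ forces $\Phi(q)=\aket{b}{1}$ and \rulelab{S-Had} fires; \rulelab{ESeq} with $e=\iseq{e_1}{e_2}$, where applying the induction hypothesis to $e_1$ (whose free variables are still contained in $\emptyset$) gives either $e_1=\sskip$, so \rulelab{S-SeqT} fires, or a step of $e_1$, so \rulelab{S-SeqC} fires; and \rulelab{TIf}, where $FV(B)\subseteq\Sigma=\emptyset$ makes the guard closed, hence (by the assumed determinacy of the parameter Boolean language) equal to $\cn{true}$ or $\cn{false}$, and \rulelab{S-IfT} or \rulelab{S-IfF} fires.

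The substantive case is \rulelab{Tup}, where $e=\iota$ and $\Omega\vdash_{\cn C}\iota\triangleright\Omega'$. If $\iota$ is the empty-qubit $\mu$ operation, i.e. $\sskip$, we are in the first disjunct. Otherwise I would first prove an \textbf{instruction-level progress-and-preservation lemma}: if $\Omega\vdash_g\iota\triangleright\Omega'$ and $\eta$ is well-formed with respect to $\Omega$ (the basis-vector analogue of \Cref{def:well-formed}: every $\tnort/\thadt$ qubit mapped to some $\aket{b}{1}$, every $\trott$ qubit to some $\qket{r}$), then $\llbracket\iota\rrbracket\eta$ is defined and well-formed with respect to $\Omega'$. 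This is proved by induction on the instruction typing derivation. For $\mu(\overline q)$, rule \rulelab{MuT} guarantees $\overline q$ is $\tnort/\thadt$-typed, so $\eta$ restricted to $\overline q$ is a computational bitstring, and we invoke totality of the \oqasm semantics on well-typed inputs~\cite{oracleoopsla}. For $\iry{r}{q}$, rules \rulelab{RyN}/\rulelab{RyH} force $q$ to be $\tnort$- or $\trott$-typed, so $\eta(q)$ is $\aket{0}{1}$, $\aket{1}{1}$, or $\qket{r'}$, which is exactly the exhaustive case split of the three \cn{Ry} clauses in \Cref{fig:deno-sem}. For $\ictrl{q}{\iota'}$, rules \rulelab{CuN}/\rulelab{CuH} force the control qubit $q$ to be computational ($\eta(q)=\aket{b}{1}$) and remove $q$ from the records before typing $\iota'$, so the inner hypothesis gives definedness of $\llbracket\iota'\rrbracket$ and hence of $\csem(\eta(q),\iota',\eta)$. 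The case $\iseq{\iota_1}{\iota_2}$ uses the hypothesis twice, with preservation supplying well-formedness of the intermediate basis vector. With this lemma, the \rulelab{Tup} case follows: write $\varphi=\sum_{b_1=0}^1\cdots\sum_{b_n=0}^1 z_b\cdot\eta_b$, apply the lemma to each $\eta_b$ (each well-formed by $\Omega\vdash\Phi$), and fire \rulelab{S-Ins} with probability $1$.

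For \rulelab{Mea}, $e=\smea{x}{\overline q}{e''}$ with $\overline q\subseteq\theta$ and $\theta\in\Omega$. I would use the fact recorded just before the theorem that any \pqasm datum rewrites to a finite linear sum of computational basis-kets (each $\qket{r}$ replaced by $\cn{cos}(r)\aket{0}{1}+\cn{sin}(r)\aket{1}{1}$, all sum operators collapsed into one sum over bitstrings). Putting the datum of the record containing $\overline q$ into the canonical shape $\Msum_j z_j\,\aket{c}{m}\aket{c_j}{n}+\qfun{\phi}{\overline q',c\neq\overline q'}$ required by \rulelab{S-Mea}, we choose the measured value $c$ to be a bitstring in the support (one exists because quantum data are normalized), set $r=\Msum_j\slen{z_j}^2$, and fire \rulelab{S-Mea} with continuation $e''[c/x]$ (well-formed since $x$ is the only variable $e''$ may use beyond $\emptyset$). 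The main obstacle throughout is the instruction-level lemma: the semantics $\llbracket-\rrbracket$ is a genuinely partial function, and the purpose of the $\tnort/\thadt/\trott$ discipline and the context flags is to delimit exactly its domain of totality; the delicate point is checking that the typing premises of \rulelab{CuN}/\rulelab{CuH}/\rulelab{RyH}/\rulelab{MuT} precisely match the side conditions of $\csem$, the \cn{Ry} case analysis, and the external \oqasm semantics, and that forcing the flag to $\cn M$ inside a control and to $\cn C$ for a bare \cn{Ry} is what rules out the otherwise-undefined configurations.
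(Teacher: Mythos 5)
Your proposal is correct and follows essentially the same route as the paper, which proves progress by induction on the typing derivation (the paper's proof text is only the statement that this induction is fully mechanized in Rocq, so it records no further detail). Your instruction-level progress-and-preservation lemma and the case analysis matching the $\tnort/\thadt/\trott$ discipline against the side conditions of the partial semantics is exactly the content such a mechanization must supply.
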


\begin{proof}
Fully mechanized proofs were done by induction on type rules using Rocq.
\end{proof}

\begin{theorem}\label{thm:type-sound-pqasm}\rm[\pqasm Type Preservation]
If $\Sigma; \Omega \vdash e \triangleright \Omega'$, $FV(e)\subseteq \Sigma$, $\Omega \vdash \Phi$ and $(\Phi,e)\xrightarrow{r} (\Phi',e')$, then there exists $\Omega_a$, such that $\Sigma;\Omega_a\vdash e' \triangleright \Omega'$ and $\Omega_a \vdash \Phi'$.
\end{theorem}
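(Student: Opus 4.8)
The plan is to prove preservation by induction on the derivation of $\Sigma;\Omega\vdash e\triangleright\Omega'$, with a case analysis on the last rule used in the step $(\Phi,e)\xrightarrow{r}(\Phi',e')$; the two are tightly coupled because every evaluation rule is driven by the top-level shape of $e$. Before the main induction I would set up three auxiliary lemmas. (i) An \emph{environment-equivalence} lemma: if $\Omega\equiv\Omega'$ then $\Omega\vdash\Phi$ iff $\Omega'\vdash\Phi$. This discharges the non-syntax-directed rule \rulelab{Eqv}: the admissible rewrites (associativity/commutativity/identity of $\uplus$, and the record merge/split on the $\tnort$/$\trott$ fields) only regroup records, never changing any qubit's basis-state component, so \Cref{def:well-formed} is insensitive to them. (ii) A \emph{classical substitution} lemma: if $\Sigma\cup\{x\};\Omega\vdash e\triangleright\Omega'$ and $c$ is a bitstring, then $\Sigma;\Omega\vdash e[c/x]\triangleright\Omega'$, and $FV(e[c/x])\subseteq\Sigma$ whenever $FV(e)\subseteq\Sigma\cup\{x\}$ (modulo the mild assumption that the $B$-language admits substitution of values). (iii) An \emph{instruction-level preservation} lemma: if $\Omega\vdash_g\iota\triangleright\Omega'$, $\Omega\vdash\eta$ (pointwise well-formedness of a basis vector), and $\llbracket\iota\rrbracket\eta=\eta'$, then $\Omega'\vdash\eta'$. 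This is itself proved by induction on $\iota$ using \Cref{fig:deno-sem}: the \cn{Ry} cases check that a $\tnort$ qubit moves to $\trott$ (or a $\trott$ one stays $\trott$), the $\mu$ case relies on the \oqasm typing/soundness result imported in \Cref{sec:vqir}, and the \cn{CU} case uses that its guard qubit is $\tnort$/$\thadt$ and hence of the form $\aket{b}{1}$.

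With these in hand, most cases are routine, taking $\Omega_a$ to be the output environment read off the typing rule. For \rulelab{S-Ins} (reached at the program level via \rulelab{Tup}), lemma (iii) applied to every $\eta_b$ in the sum gives $\Omega'\vdash\eta'_b$; since the sum-operator prefix and the record domain are untouched, $\Omega'\vdash\Phi'$. For \rulelab{S-New} and \rulelab{S-Had} the post-environment is exactly what rules \rulelab{New} and \rulelab{HT} prescribe; in the Hadamard case the qubit's basis component is still $\aket{j}{1}$ for the fresh summation index $j$, matching the $\thadt$ clause of \Cref{def:well-formed}, while the added $\sum_{j=0}^1$ records the new superposition. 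The sequencing cases \rulelab{S-SeqC}/\rulelab{S-SeqT} thread the intermediate environment and invoke the induction hypothesis; the conditional cases \rulelab{S-IfT}/\rulelab{S-IfF} are immediate because \rulelab{TIf} types both branches with the same output environment.

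The main obstacle is the measurement rule \rulelab{S-Mea}. First I would justify the rewriting step that the rule performs silently: any well-formed $\varphi$ can be put in the canonical single-sum form $\Msum_j z_j\aket{c}{m}\aket{c_j}{n}+\qfun{\phi}{\overline{q}',c\neq\overline{q}'}$ of computational basis-kets, using the two equivalences displayed just after \Cref{fig:exp-semantics} to eliminate $\qket{r}$ subterms and collapse nested sum operators; I would package this as a lemma saying the rewrite preserves well-formedness up to $\equiv$ on the environment. The post-state is then obtained by (a) keeping the sub-family of basis-kets whose measured block equals $c$, (b) deleting the measured qubits $\overline{q}$ from the record, and (c) renormalizing amplitudes by $1/\sqrt r$. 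Step (b) is precisely the $\theta\backslash\overline{q}$ operation of rule \rulelab{Mea}, so the residual record is $\theta\backslash\overline{q}$ and its remaining qubits retain their $\aket{b}{1}$ / $\qket{r}$ shapes; step (c) only scales amplitudes and so is irrelevant to \Cref{def:well-formed}; and lemma (ii) yields $\Sigma;\Omega\uplus\{\theta\backslash\overline{q}\}\vdash e[c/x]\triangleright\Omega'$ from the premise of \rulelab{Mea}. Taking $\Omega_a=\Omega\uplus\{\theta\backslash\overline{q}\}$ closes the case. The one genuine piece of bookkeeping is reconciling the \emph{syntactic} record $\theta$ picked by the typing derivation with the \emph{semantic} closure $\uparrow\overline{q}$ picked by the evaluation rule: since both must contain $\overline{q}$ and the records of $\Omega$ partition the qubit space, they agree up to the partial-order rewrites, which is exactly what lemma (i) lets us absorb by an \rulelab{Eqv} step.
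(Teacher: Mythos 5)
The paper's entire proof of this theorem is the one-line remark that it is a fully mechanized induction on the typing rules in Rocq, and your proposal---induction on the typing derivation with case analysis on the evaluation step, discharged via an environment-equivalence lemma for \rulelab{Eqv}, a substitution lemma for \rulelab{S-Mea}, and an instruction-level preservation lemma for \rulelab{S-Ins}---is a correct and faithful elaboration of exactly that strategy, including the right treatment of the delicate measurement case. No discrepancy with the paper's approach to report.
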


\begin{proof}
Fully mechanized proofs were done by induction on type rules using Rocq.
\end{proof}
\section{QSV Applications}
\label{sec:implementation}

This section presents QSV applications, based on \pqasm programs, to validate and compile the programs, including QSV's PBT (property-based testing) framework and the translation from \pqasm to \sqir and proof of its correctness. 

\subsection{Effectively Validating State Preparation Programs}\label{sec:rand-testing}

The QSV validator is built on PBT to ensure that a \pqasm program property is correct by attempting to falsify it using thousands of randomly generated input variables.
Case studies of our validator are in \Cref{sec:evaluation}.
Below, we show its construction.
We leverage the \pqasm's state representation and type system, which ensure that states can be represented effectively, to implement a validation framework for \pqasm programs using QuickChick \cite{quickchick}, a property-based testing (PBT) framework for Rocq in the style of Haskell's QuickCheck~\cite{10.1145/351240.351266}.
The framework has two main utilities: validating \pqasm program correctness properties and experimenting with effective implementations.
\pqasm measurement operations, due to the randomness inherent in quantum measurement, are difficult to test effectively, even with the assistance of program abstractions.
To have an effective validation framework, we restrict the properties that can be questioned to solely focus on the properties related to program correctness.

\noindent\textbf{\textit{Implementation.}}
PBT randomly generates inputs using a hand-crafted \emph{generator} and confirms that a property holds for these inputs. 
We develop a validator, based on the methodology in \Cref{sec:motivation}, using our effective (symbolic) state representation and carefully selecting the properties to validate for a program. 

\begin{figure}[h]
  \includegraphics[width=0.9\textwidth]{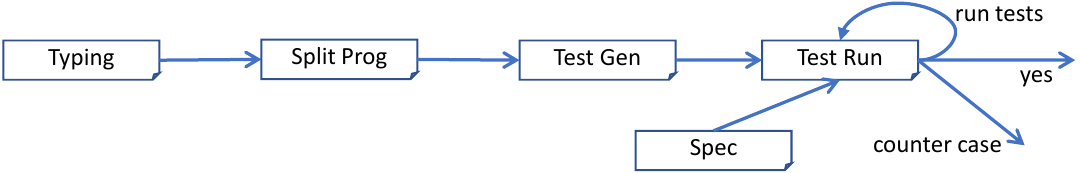}
   \caption{The Flow of PBT for QSV}
\label{fig:testing}
\end{figure}

\Cref{fig:testing} shows the flowof the PBT framework.
To validate a \pqasm program, we first utilize our \pqasm type system to generate a type environment $\Omega$ for qubits in a program $e$,
i.e., $\emptyset;\emptyset\vdash e \triangleright \Omega$, typing with an empty kind and type environment.
$\Omega$ partitions all qubits used in $e$ into three sets $(\overline{q_1}, \overline{q_2}, \overline{q_3})$, with $\overline{q_1}$, $\overline{q_2}$, and  $\overline{q_3}$ containing all qubits in $\thadt$ type, $\tnort$ type, and $\trott$ type, respectively.
In \pqasm, once a qubit is assigned to type $\thadt$, it stays in the type.
We utilize the property to locate all $\thadt$ typed qubits in a program $e$ and generate random testing data based on these qubits,
i.e., given the set ($\overline{q_1}$) of $\thadt$ typed qubits in $\Omega$, we generate random boolean values in $\{0,1\}$ for variables in $\overline{q_1}$.

A program is assumed in the form of $e=\iseq{\inew{\overline{q}}}{\iseq{\ihad{\overline{q}}}{e'}}$ \footnote{$\inew{\overline{q}}$ and $\ihad{\overline{q}}$ are syntactic sugar for multiple $\inew{q}$ and $\ihad{q}$ operations.},
i.e., all the $\cn{new}$ and $\cn{H}$ operations appear in the front of the program and $e'$ does not contain any such operations.
We then split the program by taking the $e'$ part and removing the $\cn{new}$ and $\cn{H}$ operations, assuming they have already been applied.
For some program patterns, such as repeat-until-success programs, we replace recursive process variables in $e'$ with $\sskip$ operations so that we only validate one step of a repeat-until-success. While these programs do not have a fixed termination time, QSV focuses on testing a single loop step, as we are not concerned with executing this program but performing property-based testing on it.
One example of programming splitting for \Cref{def:circuit-example} is given below; it removes the part $\iseq{\inew{\overline{q}}}{\iseq{\inew{q'}}{\ihad{\overline{q}}}}$ and replacing $P$ with $\sskip$.

{\small
\begin{center}
$
P'={\iseq{\qbool{\overline{q}}{<}{n}{q'}}{\smea{x}{q'}{\sifb{x=1}{\sskip}{\sskip}}}}
$
\end{center}
}

The ``Test Gen'' step in QSV (\Cref{fig:testing}) generates test cases to validate the key component $P'$ above.
In \Cref{def:circuit-example}, after applying the \cn{new} and \cn{H} operations, the post-state is $(\Phi,P')$, with $\Phi$ mapping $\theta$, entanglement groups, to superposition states $\varphi$. Each \cn{H} operation generates a single qubit uniformly distributed superposition state.
However, transitions over a superposition state make it hard to perform effective validation.
To resolve this, we treat quantum program operations $P'$ as higher-order map operations and validate the transition correctness based on basis kets, rather than over the whole superposition state.

Generally, given $\theta =(b_0,b_1,...,b_m,\overline{q_2},\overline{q_3})$, the superposition state $\varphi$ could also be written in the Dirac notation of $\sum_j \rho_j$ with $\rho_j = z_j \cdot \eta_j$, as follows.

{\small
\begin{center}
$
\sum_{b_0=0}^1 \sum_{b_1=0}^1 ... \sum_{b_m=0}^1 z(b_0,b_1,...,b_m)\cdot \eta({b_0,b_1,...,b_m})
$
\end{center}
}

Here, $b_0$, $b_1$, ..., $b_m$ are qubit variables assumed to have already been manipulated by the initial \cn{H} operations.
Applying a \cn{H} operation to a $\tnort$ typed qubit $\aket{b_a}{1}$ creates a uniformly distributed superposition $\sum_{b=0}^1 \frac{1}{\sqrt{2}} (-1)^{b\cdot b_a} \aket{b}{1}$, so it results in the state form above, with $z(b_0,b_1,...,b_m)$ and $\eta({b_0,b_1,...,b_m})$ being an amplitude and asis-vector formula, respectively.
We can then select the symbolic basis-ket state $z(b_0,b_1,...,b_m)\cdot \eta({b_0,b_1,...,b_m})$ as the representative basis-ket and rewrite $\varphi$ to be in the form $(\theta \to z\cdot \eta,P')$, with $b_0$, $b_1$, ..., $b_m$ acting as random variables.
For each variable $b_j$, we can randomly choose the value $b_j\in\{0,1\}$ for a particular test instance.
For $m$ qubits, we have $2^m$ different instances depending on the different value selection for the random variables.

For $P'$ above, we view each element in the $ m$-qubit array $\overline{q}$ as a random variable.
We generate an initial state $(\theta \to z(\overline{q}[0],...,\overline{q}[m\sminus 1])\cdot \eta(\overline{q}[0],...,\overline{q}[m\sminus 1]),P')$
with $\overline{q}[0],...,\overline{q}[m\sminus 1]$ being random variables and $\theta=(\overline{q},q,\emptyset)$.
We can then generate test instances for the random variables $\overline{q}[j]$ with $j\in[0,m)$.

As test instances are generated, we can validate the program by running each instance in our \pqasm interpreter, based on our program semantics.
The interpreter's result is provided as input to a specification checker to validate whether the specification is satisfied.
If the checker makes all test instances return \cn{true}, we validate the program; otherwise, we report a fault.
We show below the validation of different program properties.

\noindent\textbf{\textit{Validating Correctness.}}
We run a test instance in our interpreter with the initial state, as $(\theta \mapsto z\cdot \eta, P)\longrightarrow^*(\theta' \mapsto z'\cdot \eta',\sskip)$, where $\theta \equiv \theta'$. For a user-specified property $\psi$, a satisfiability check is applied to $\psi(z \cdot \eta, z' \cdot \eta')$ by replacing variables with $z \cdot \eta$ and  $z' \cdot \eta'$.
Recall the transformation of correctness property in \Cref{sec:intro} for the program in \Cref{def:circuit-example},
the key is to conduct validation on individual basis-kets rather than the whole quantum state.
Such property transformations can be summarized as the transformation from (1) to (3):

{\scriptsize
\begin{center}
$
\text{(1) }
\sum_j z_j \cdot \eta_j \to f(\sum_j z_j \cdot \eta_j)
\quad
\text{(2) }
\sum_j z_j \cdot \eta_j \to \sum_k g(z_k \cdot \eta_k) \wedge \phi(k)
\quad
\text{(3) }
\forall j. z_j \cdot \eta_j \to g'(z_j \cdot \eta_j) \wedge \phi(j)
$
\end{center}
}

The correctness property should be written in the format of (3).
The property (1) describes the program semantics, i.e., $f$ represents the semantic function for a program $e$.
The effects of such a function can always be in the form of a linear sum by moving the sum operator to the front, as in (2): one can always find $g$ such that $f(\sum_j r_j \cdot \eta_j) = \sum_k g(r_k \cdot \eta_k)$.
In some cases, we might need to insert the $\phi(k)$ predicate above, constraining index $k$ in the sum operator. 
Here, both $j$ and $k$ are indices for two different sum operators.
Often, the function $g$ can be turned into an equivalent form $g'$ based on the index $j$.
Note that for certain states, there are properties that hold for the state at large but do not imply that they hold for each term. For instance, when $k$ is a bitstring and $\cn{not}(k)$ performs the $\cn{not}$ operator on each value in the bitstring, e.g. $\cn{not}(011) = 100$, if $\sum_{i} \ket{k_i} = \sum_{i} \ket{\cn{not}(k_i)} $ that does not imply that $\ket{k_i}=\ket{\cn{not}(k_i)}$.

Once rewriting (1) to (2), we can then transform the formula to (3) without any sum operators, via the axiom of extensionality.
Such transformation might come with the index restriction $\phi$ based on index $j$, as shown in (3), meaning that we find a representative basis-ket state $r_j \cdot \eta_j$ for the input superposition state and output a basis-ket state $g'(r_j \cdot \eta_j)$, with the index restriction $\phi(j)$.

To validate $P'$ above, we transform the correctness property from the left to the right one below ($\Rightarrow$ is logic implication).

{\scriptsize
\begin{center}
$
x = 1 \Rightarrow \sum_j^{2^m}\frac{1}{\sqrt{2^m}}\aket{j}{m}\aket{0}{1} \to \sum_j^{n}\frac{1}{\sqrt{n}}\aket{j}{m}
\qquad
\forall j\in[0,2^m)\,.\, x = 1 \Rightarrow \aket{j}{m}\aket{0}{1} \to \aket{j}{m} \wedge j < n
$
\end{center}
}

The right property states that, if the measurement results in $1$ ($x = 1$), each basis-ket in the pre-state for $\overline{q}$ and $q$ (values $\aket{j}{m}$ and $\aket{0}{1}$) results in $\overline{q}$ being the same $\aket{j}{m}$ with the restriction $j < n$.
In implementing a validation property, the flags $\textcolor{spec}{m}$ and $\textcolor{spec}{1}$ essentially indicate the length of bitstring pieces, cast into a natural number for comparison.
To validate the correctness property against $P'$, we create a length $m$ bitstring for $\aket{j}{m}$ and view $j[k]$, for $k\in[0,m)$, being the $k$-th bit in the bitstring.
Recall that $P'$ has an initial basis vector state pattern as $\aket{\overline{q}[0]}{1}...\aket{\overline{q}[m\sminus 1]}{1}\aket{0}{1}$.
Here, $\aket{0}{1}$ is the state for qubit $q$ and $\overline{q}[0],...,\overline{q}[m\sminus 1]$ are random variables for qubit array $\overline{q}$.
To check the property, we bind each $j[k]$ with $\overline{q}[k]$ for $k\in[0,m)$, and see if the output basis-ket state results in the same $\overline{q}[0],...,\overline{q}[m\sminus 1]$. We also check if $\overline{q}$'s natural number representation is less than $n$,
i.e., we turn $\overline{q}[0],...,\overline{q}[m\sminus 1]$ to a number and compare it with $n$.

\noindent\textbf{\textit{Validating Other Properties.}}
The above procedure is only useful in validating correctness.
There might be other interesting properties, such as probability and effectiveness properties.
For example, in validating \Cref{def:circuit-example}, we might want to ask how likely the qualified state can be prepared,
which is hard to validate in general, but it can be effectively sampled out in some cases.

{\small
\begin{center}
$
x = 1 \Rightarrow \sum_j^{2^m}\frac{1}{\sqrt{2^m}}\aket{j}{m}\aket{0}{1} \to \sum_j^{n}\frac{1}{\sqrt{n}}\aket{j}{m}
$
\end{center}
}

In the property for $P'$ above, the basis-vector number $n$ is related to $\overline{q}$'s qubit number $m$, as $n\in[0,2^m)$.
We explain how to validate the program's effectiveness, i.e., the probability that the repeat-until-success program produces the correct state.
Note that superposition states are always uniformly distributed without any \cn{Ry} operations.
The success rate of preparing a superposition state in the repeat-until-success scheme is the ratio between the number of desired basis-vector values ($< n$) and the total number of possible basis-vector values, i.e., different combinations of $\overline{q}[0],...,\overline{q}[m\sminus 1]$. 
We can validate the effectiveness by dividing the number of different desired basis-vectors and the number of possible values in $\overline{q}$; that is, $2^m$.
In general, assume that we have a basis-vector expression $e(\overline{q})$ for $m$ qubits $\overline{q}$, a measurement statement $\mathpzc{M}\cn{(}e(\overline{q})\cn{)}$ storing the result in $v$, and have a boolean check on $v$ as $B(v)$ defining the good states.
By assigning $\{0,1\}^m$ for $\overline{q}$, the probability of having the good states is the division of number of good state values ($B(e(\overline{q}))=\cn{true}$) and the possible basis-vector numbers in $e(\overline{q})$ for all possible assignments.
Such a property can be validated by sampling.
In some complicated cases, the right property might be hard to validate, but one can always use Rocq to verify the effectiveness via the above scheme.

\noindent\textbf{\textit{Performance Optimizations.}}
We took several steps to improve validation performance, e.g., we streamlined the representation of states: per the semantics in \Cref{fig:deno-sem}, in a state with $n$ qubits, the amplitude associated with each qubit can be written as $\Delta(\frac{\upsilon}{2^n})$ for some natural number $\upsilon$. 
Qubit values in both bases are thus pairs of natural numbers: the global phase $\upsilon$ (in range $[0,2^n)$) and $b$ (for $\aket{b}{1}$) or $y$ (for $\qket{\frac{y}{2^n}}$). 
A \pqasm state $\varphi$ is a map from qubit positions $p$ to qubit values $q$; in our proofs, this map is implemented as a partial function, but for validation, we use an AVL tree (a kind of self-balancing binary tree) implementation (proved equivalent to the functional map). 
To avoid excessive stack use, we implemented the \pqasm semantics function tail-recursively. 
To run the tests, QuickChick runs OCaml code that it \emph{extracts} from the Rocq definitions; during extraction, we replace natural numbers and operations thereon with machine integers and operations. Performance results are in \Cref{sec:evaluation}.

\subsection{Translation from \pqasm to \sqir}\label{sec:vqir-compilation}

\newcommand{\tget}{\texttt{get}}
\newcommand{\tstart}{\texttt{start}}
\newcommand{\tfst}{\texttt{fst}}
\newcommand{\tsnd}{\texttt{snd}}
\newcommand{\tucom}[1]{\texttt{ucom}~{#1}}
\newcommand{\tif}{\texttt{if}}
\newcommand{\tthen}{\texttt{then}}
\newcommand{\telse}{\texttt{else}}
\newcommand{\tlet}{\texttt{let}}
\newcommand{\tin}{\texttt{in}}

We translate \pqasm to \sqir by mapping \pqasm virtual qubits to \sqir \cite{VOQC},  a quantum circuit language based on Rocq, concrete qubit indices, and expanding \pqasm instructions to sequences of \sqir gates.
To express the classical components of quantum algorithms, \sqir typically utilizes Rocq program constructs.
To define our compiler, we use the \sqir one-step nondeterministic semantics, which includes one-step operational semantics for simple Rocq constructs, such as conditionals and classical sequential operations.

Our translation is expressed as the two-level judgments
$\Xi\vdash \iota \gg \epsilon$ and $(n,\Xi, e) \gg (n',\Xi',\chi)$, where $\epsilon$ is the output \sqir circuit, and $\Xi$ and $\Xi'$ map a \pqasm qubit $q$ to a \sqir concrete qubit index (i.e., offset into a  global qubit register), $\chi$ is a hybrid program including \sqir quantum circuits and Rocq classical programs,
and $n$ and $n'$ are the qubit sizes in the whole system.

{\scriptsize
  \begin{mathpar}
    \inferrule[CRy]{\overline{}}{\Xi \vdash \iry{r}{q} \gg (\gamma,\textcolor{blue}{\iry{r}{(\Xi(q))}})}
    \quad
    \inferrule[CCU]{\Xi\vdash \instr \gg \textcolor{blue}{\epsilon}\quad
      \textcolor{blue}{\epsilon' = \texttt{ctrl}(\gamma(q),\epsilon)}}{\Xi\vdash\ictrl{p}{\instr} \gg \textcolor{blue}{\epsilon'}}    
         \quad               
    \inferrule[CNew]{\Xi'=\Xi[\forall q\in\overline{q}\,.\,q \mapsto \slen{\Xi}+\cn{ind}(\overline{q},q)]}{(n,\Xi, \inew{\overline{q}}) \gg (n+\slen{\overline{q}},\Xi', \sskip)}             
  \end{mathpar}
}

We show selected rules above; more are in \Cref{sec:vqir}.
Rules \rulelab{CRy} and \rulelab{CCU} are the instruction-level translation rules, which translate a \pqasm instruction to a \sqir unitary operation.
$\iry{r}{q}$ has a directly corresponding gate in \sqir.
In the \texttt{CU} translation, the rule assumes that $\instr$'s translation does not affect the $\Xi$ position map. This requirement is assured for well-typed programs per rule \rulelab{CU} in \Cref{fig:exp-well-typed}. 
 \texttt{ctrl} generates the controlled version of an arbitrary \sqir program using standard decompositions \cite[Chapter 4.3]{mike-and-ike}.

Rule \rulelab{CNew} is one of the program-level rules, translating a qubit creation operation in \pqasm.
In \sqir, there is no qubit creation, in the sense that every qubit is assumed to exist in the first place.
The translation essentially translates the operation to a SKIP operation in \sqir and increments the qubit heap size in the generated \sqir program.

\newcommand{\transs}[3]{[\!|{#1}|\!]^{#2}_{#3}}

Below is the translation correctness theorem.
The proof utilizes the \sqir nondeterministic semantics, where a qubit measurement produces two possible outcomes with different probabilities associated with them, 
i.e., this nondeterministic semantics is essentially \sqir's way of describing a Markov-chain procedure.
To formally state the correctness property, we relate \pqasm superposition states $\Phi$ to \sqir states, as $\denote{\Phi}^{n'}$, which are vectors of $2^{n'}$ complex numbers.
We can utilize $\Xi$ to relate qubits in \pqasm with qubit positions in \sqir.

\begin{theorem}\label{thm:vqir-compile}\rm[Translation Correctness]
  Suppose $\Sigma; \Omega \vdash e \triangleright \Omega'$ and
  $(n,\Xi,e) \gg (n',\Xi',\chi)$.
Then for $\Omega \vdash \Phi$ and $(\Phi,e)\xrightarrow{r}(\Phi',e')$, we have $(\denote{\Phi}^{n'},\chi)\xrightarrow{r}(\denote{\Phi'}^{n'},\chi')$ and $(n',\Xi',e')\gg (n'',\Xi'',\chi')$.
\end{theorem}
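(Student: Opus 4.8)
The plan is to prove this by induction on the derivation of the program-level transition $(\Phi,e)\xrightarrow{r}(\Phi',e')$, with a secondary case analysis on the shape of $e$ (and hence on the translation rule used to derive $(n,\Xi,e)\gg(n',\Xi',\chi)$). The central invariant to maintain is a \emph{simulation relation}: at every point, a well-typed \pqasm configuration $(\Phi,e)$ with translation data $(n,\Xi)$ corresponds to the \sqir configuration $(\denote{\Phi}^{n'},\chi)$ where $\chi$ is the translated hybrid program; the theorem says this relation is preserved by one step and the probabilities match. A key preliminary lemma I would establish first is a \emph{denotation compatibility lemma} for the instruction level: if $\Xi\vdash\iota\gg\epsilon$ and $\llbracket\iota\rrbracket\eta=\eta'$, then applying the \sqir unitary $\epsilon$ to the vector $\denote{\theta\to z\cdot\eta}^{n'}$ yields $\denote{\theta\to z\cdot\eta'}^{n'}$, and more generally (by linearity) this lifts over the sum operators $\sum_{b_1=0}^1\cdots\sum_{b_n=0}^1$. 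This is exactly the \texttt{cu}/\texttt{Ry}/$\mu$ correspondence and it is the semantic content behind rules \rulelab{CRy}, \rulelab{CCU}; it is essentially the \oqasm-to-\sqir correctness result from VQO reused here, extended to the \cn{Had}-typed portion of the state.

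The induction proceeds case by case. For \rulelab{S-Ins} paired with \rulelab{CNext}: apply the instruction-level denotation lemma, using the linearity observation to push the unitary through the superposition; the probability is $1$ on both sides. For \rulelab{S-SeqC}/\rulelab{S-SeqT} paired with \rulelab{CSeq}: straightforward use of the IH on the first component, observing that the residual translation $(n',\Xi',e'_1)\gg(\ldots,\chi'_1)$ recomposes with $\chi_2$. For \rulelab{S-New} with \rulelab{CNew}: here $\denote{\Phi}^{n'}$ must be checked to agree with the $\sskip$-translated circuit after the qubit-heap size is incremented --- the point is that the freshly created $\aket{0}{1}$ qubit is already present in the $2^{n'}$-dimensional ambient space, so tensoring in $\ket{0}$ is the identity on the relevant vector, matching \sqir's ``all qubits pre-exist'' convention. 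For \rulelab{S-Had} with \rulelab{CHad}: the \sqir Hadamard gate on index $\Xi(q)$ applied to $\denote{\cdot}^{n'}$ realizes exactly $\sum_{j=0}^1(-1)^{j\cdot b}\aket{j}{1}$, i.e.\ the type-transition from $\tnort$ to $\thadt$ is mirrored by the concrete gate. For \rulelab{S-IfT}/\rulelab{S-IfF}: the \sqir nondeterministic semantics on classical conditionals steps identically, with probability $1$. The interesting case is \rulelab{S-Mea} with \rulelab{CMea}: one must show that the \pqasm measurement rule --- which rewrites the state into the canonical sum-of-computational-basis-kets form, splits off the branch $\Msum_j z_j\aket{c}{m}\aket{c_j}{n}$, renormalizes by $\sqrt{r}$ with $r=\Msum_j\slen{z_j}^2$, and substitutes $c$ for $x$ --- corresponds under $\denote{\cdot}^{n'}$ to \sqir's nondeterministic measurement step on the register $\overline{q}$, which collapses the state vector onto the subspace fixed by the outcome bitstring and renormalizes, emitting the same probability $r$. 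This requires carefully relating the \pqasm entanglement-record bookkeeping ($\uparrow\overline{q}$, $\uplus$, the $\qfun{\cdot}{\cdot}$ filtering) to the flat $2^{n'}$-vector picture, and checking that $\Xi\backslash\overline{q}$ correctly tracks which concrete indices become inaccessible while the heap size $n'$ is unchanged.

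The main obstacle is the measurement case, specifically the bridge between \pqasm's structured, record-partitioned state representation and \sqir's monolithic amplitude vector. Several things must line up simultaneously: (i) that the canonical rewriting of a \pqasm state (pulling all sum operators to the front, unfolding $\qket{r}$ into $\cos(r)\ket{0}+\sin(r)\ket{1}$) does not change $\denote{\cdot}^{n'}$; (ii) that partial measurement of a subset $\overline{q}\subseteq\theta$ of a possibly-entangled record produces, on the \sqir side, the correct conditional state including the ``simultaneity'' effect on the rest of $\theta$; (iii) that the probability $\Msum_j\slen{z_j}^2$ computed from the \pqasm amplitudes equals the squared norm of the projected \sqir subvector. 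Type preservation (\Cref{thm:type-sound-pqasm}) is needed throughout to ensure $\Phi'$ is again well-formed so that $\denote{\Phi'}^{n'}$ is defined, and to guarantee in the \rulelab{CCU} case that the sub-instruction's translation leaves $\Xi$ untouched (as the rule assumes). I would also need a small lemma that $\slen{\Xi}$ and the qubit size $n$ stay in sync with the type environment along any well-typed reduction, so that the dimensions $2^{n'}$ on the two sides of the final equation genuinely agree. Everything else is routine structural bookkeeping once the denotation compatibility lemma and the measurement correspondence are in hand.
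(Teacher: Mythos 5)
Your proposal is correct and follows essentially the same route as the paper's proof, which likewise proceeds by induction on the structure of $e$ (equivalently, on the transition derivation), establishes an operation-by-gate correspondence for the unitary fragment via the \oqasm/VQO instruction-level result, and singles out \cn{new} and measurement as the only cases needing special care because they update the $\Xi$ map. Your write-up is substantially more detailed than the paper's two-sentence sketch --- in particular your analysis of the \rulelab{S-Mea}/\rulelab{CMea} correspondence and the dimension bookkeeping --- but it is an elaboration of the same argument, not a different one.
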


\begin{proof}
The proof is by induction on the \pqasm program $e$. 
Most of the proof simply shows the correspondence of operations in $e$ to their translated-to gates $\epsilon$ in \sqir, except for \cn{new} and measurement operations, which update the $\Xi$ map.
\end{proof}

\section{Evaluation: Applicativity via Case Studies}\label{sec:evaluation}

Here, an experimental evaluation of QSV across many programs is conducted to assess how QSV can be used to effectively build and validate useful quantum state preparation programs with different patterns. The efficiency and scalability comparison with other frameworks are in \Cref{sec:eval}.

\noindent\textbf{\textit{Implementation.}} We implement QSV in Rocq and utilize QuickChick to create a quantum program validation framework.
We also provide a \pqasm circuit compiler that translates \pqasm programs to OpenQASM via SQIR.

\noindent\textbf{\textit{Experimental Setup.}} We perform our evaluation on an Ubuntu computer, which has 8-core 13-gen i9 Intel processors and 16 GiB DDR5 memory.

\begin{figure}[t]
{\scriptsize
\begin{center}
\begin{tabular}{| l | c | c | c | c |}
\hline
 State Preparation Program  & 8B Gate \#  & 8B Qubit \# & 60B Gate \#  & 60B Qubit \# \\
 \hline
$n$ basis-ket & 766 & 9 & 5,732 & 61\\
Modular Exponentiation & 277K & 27 & 3.3M & 183\\
Amplitude Amplification & 65 & 9 & 481 & 61 \\
Hamming Weight & 9,088 & 16 & 170K & 120 \\
Distinct Element & 16,036  & 49 & 120K &  361 \\
\hline                           
\end{tabular}
\end{center}
}
\caption{Program statistics; a single register with 8 and 60 Qubits (8B/60B); Distinct elements have 5 distinct elements (keys). ‘K’: thousand, ‘M’: million.}
\label{fig:qiskit-data}
\end{figure}

We present several case studies here to demonstrate the construction and validation of state preparation programs.
We classify \emph{two different program patterns} below and examine their validation strategies.
We list the qubit and gate counts for these programs in \Cref{fig:qiskit-data}.
The data are collected by compiling our programs to SQIR using the elementary gateset $\{\cn{X}, \cn{H}, \cn{CX}, \cn{Rz}\}$.
To describe the programs, we define the following repeat operator for repeating a process $n$ times.
The function $P$ takes a natural number and outputs a quantum program.

{\small
\begin{center}
$
Re(P,n)\triangleq\sifb{n=0}{\sskip}{\iseq{Re(P,n-1)}{P(n-1)}}
$
\end{center}
}

\subsubsection{Quantum Loop Programs.}\label{sec:quantumloop}
We first examine the class of programs only involving quantum unitary gates without measurement.
Such programs typically contain quantum loops, a repetition of subroutines formed via unitary gates.
These programs usually act as a large part of some quantum algorithms.
For example, the modular exponentiation state preparation program is the major part of Shor's algorithm, while the amplitude amplification state preparation program is the major part of an upgraded amplitude estimation algorithm \cite{10.1007/s11128-019-2565-2}.

\noindent\textbf{\textit{Modular Exponentiation State Preparation.}}\label{sec:modmult}
In Shor's algorithm \cite{shors}, we prepare the the superposition state of modular exponentiation based on two natural numbers $c$ and $n$ with $\cn{gcd}(c,n)=1$, as $\varphi_3 = \frac{1}{\sqrt{2^m}} \sum_j^{2^m}\aket{j}{m}\aket{\modexp{c}{j}{n}}{m}$.

{\footnotesize
\begin{center}
$
\begin{array}{l@{\;}c@{\;}l}
Q(\overline{q_1},\overline{q_2})(k) &\triangleq& \ictrl{(\overline{q_1}[k])}{\modmult{c^{2^k}}{\overline{q_2}}{n}}
\\
P(m) &\triangleq& \iseq{\inew{\overline{q_1}}}{\iseq{\iseq{\inew{\overline{q_2}}}{\ihad{\overline{q_1}}}}{{Re(Q(\overline{q_1},\overline{q_2}),m)}}}
\end{array}
$
\end{center}
}

The program to prepare the modular exponentiation state is listed above, with the circuit diagram in \Cref{fig:mod-mult}.
The program starts with two new length $m$ qubit arrays $\overline{q_1}$ and $\overline{q_2}$, and turns $\overline{q_1}$ to a uniform superposition by $m$ \cn{H} gates. We then repeat $m$ times a controlled modular multiplication ($\ictrl{(\overline{q_1}[k])}{\modmult{c^{2^k}}{\overline{q_2}}{n}}$) application, controlling on the qubit $\overline{q_1}[k]$ with application to $\overline{q_2}$.

As in \Cref{sec:rand-testing}, to conduct the the program correctness validation, we cut off the first three operations ($\texttt{new}$ and $\cn{H}$ operations) in $P(m)$, producing a program piece $Re(Q(\overline{q_1},\overline{q_2}),m)$, where $Q(\overline{q_1},\overline{q_2})$ is a function taking in a number $k$ and outputting a program $\ictrl{(\overline{q_1}[k])}{\modmult{c^{2^k}}{\overline{q_2}}{n}}$.
The correctness specification is also transformed without the superposition state description below.

{\small
\begin{center}
$
\forall j \in [0,2^m)\,.\,\aket{j}{m}\aket{0}{m}\to\aket{j}{m}\aket{\modexp{c}{j}{n}}{m}
$
\end{center}
}

Validating the loop program $Re(Q(\overline{q_1},\overline{q_2}),m)$ essentially executes $Q$ $m$ times.
In executing the $k$-th loop step, we have the following loop invariant.

{\small
\begin{center}
$
\aket{j}{k}\aket{\modexp{c}{j}{n}}{m}\to\aket{j}{k\splus 1}\aket{\modexp{c}{j}{n}}{m}
$
\end{center}
}

In the pre-state, $\aket{j}{k}$ is a length $k$ bitstring while the post-state has $\aket{j}{k\splus 1}$ being length $k\splus 1$.
To understand the behavior, notice that we have the most significant bit on the right. 
A $k\splus 1$ length bitstring $\aket{j}{k\splus 1}$ can be expressed as a composition over a length $k$ bitstring as $\aket{j}{k}\aket{0}{1}$ or $\aket{j}{k}\aket{1}{1}$, with the most sigificant bit being $0$ or $1$. 
For the former case, applying the controlled modular multiplication results in $\aket{j}{k}\aket{0}{1}\aket{\modexp{c}{j}{n}}{m}$,
i.e., the $\overline{q_2}$ part of the bitstring remains the same.
For the latter case ($\aket{j}{k}\aket{1}{1}$), the controlled modular multipliation results in the state $\aket{j}{k}\aket{1}{1}\aket{\modmult{c^{2^{k}}}{\modexp{c}{j}{n}}{n}}{m}=\aket{j}{k}\aket{1}{1}\aket{\modexp{c}{j+2^k}{n}}{m}$.
Both cases can be rewritten to the post-state in the loop invariant above.

\begin{figure}[t]
{\hspace*{-1em}
\begin{minipage}[b]{0.42\textwidth}
  {\tiny
  $
  \Qcircuit @C=0.5em @R=0.5em {
    &                     & & \gate{\cn{H}} & \ctrl{6} & \qw & \qw & \qw & \qw & \qw & \qw & \\
    & \push{\overline{q_1}:\ket{0}\;\,} & & \gate{\cn{H}} & \qw & \ctrl{5} & \qw & \qw & \qw & \qw & \qw &  \\
    &                     & &  & & & &  & & & & \\
    &                     & & \dots & & & & \dots & & & & \\
    &                     & & & & & & & & & & \\
    &                     & & \gate{\cn{H}} & \qw & \qw & \qw & \qw & \qw & \ctrl{1} & \qw & \push{\;\;\varphi_3}\\
    &                     & & \qw & \multigate{4}{\texttt{$A({2^0})$}} & \multigate{4}{\texttt{$A({2^1})$}} & \qw & \qw & \qw & \multigate{4}{\texttt{$A(2^{m-1})$}} & \qw & \\
    & \push{\overline{q_2}:\ket{1} \;\,} & & \qw & \ghost{\texttt{$A({2^0})$}} & \ghost{\texttt{$A({2^1})$}} & \qw & \qw & \qw & \ghost{\texttt{$A(2^{m-1})$}} & \qw &  \\
    & & & \dots & & & & \dots & & & & \\
    & & & & & & & & & & & \\
    & & & \qw & \ghost{\texttt{$A({2^0})$}} & \ghost{\texttt{$A({2^1})$}} & \qw & \qw & \qw & \ghost{\texttt{$A(2^{m\sminus 1})$}} & \qw
    \gategroup{1}{3}{5}{3}{1em}{\{}
    \gategroup{7}{3}{11}{3}{1em}{\{}
    \gategroup{1}{11}{11}{11}{1em}{\}}
    }
$
    }
    \caption{Modular Exponentiation Circuits; $A(v)=(c^{v} * \overline{q_2}) \,\cn{\%}\, n$.}
\label{fig:mod-mult}
\end{minipage}
\hfill
\begin{minipage}[b]{0.51\textwidth}
{\tiny
$
  \Qcircuit @C=0.5em @R=0.5em {
    &                     & & \gate{\cn{H}} & \ctrl{6} & \qw & \qw & \qw & \qw & \qw & \qw & \qw & \qw & \qw &  \\
    &                     & & \gate{\cn{H}} & \qw & \ctrl{5} & \qw & \qw & \qw & \qw & \qw & \qw & \qw & \qw & \\
    & \push{\overline{q}:\ket{0}\;} & &  & & & &  & & & & & & & \\
    &                     & & \dots & & & & \dots & & &  & & & & \push{\;\;\varphi_4}\\
    &                     & & & & & & & & & & & & & \\
    &                     & & \gate{\cn{H}} & \qw & \qw & \qw & \qw & \qw & \qw & \qw & \qw & \ctrl{1} & \qw & \\
    & \push{q':\ket{0}\;} & & \gate{B(2^n)}  & \gate{B(2^{n\sminus 1})}      & \gate{B(2^{n\sminus 2})}  & \qw & & \dots  & & & \qw  & \gate{B(2^{0})} & \qw 
    \gategroup{1}{3}{6}{3}{1em}{\{}
    \gategroup{1}{14}{7}{14}{1em}{\}}
    }
$
}
\caption{Amplitude amplification state preparation; $B(v)=\cn{Ry}(\frac{r}{v})$.}
\label{fig:aacircuit}
\end{minipage}
}
\end{figure}

To validate the program piece $Re(Q(\overline{q_1},\overline{q_2}),m)$ against the above specification,
we are given a length $2m$ bitstring with $\overline{q_1}$ and $\overline{q_2}$ both being length $m$,
and view $\overline{q_1}$ as a length $m$ array of random variables, and prepare a initial state $\aket{\overline{q_1}[0],...,\overline{q_1}[m\sminus 1]}{m}\aket{0}{m}$, with random generation of length $m$ binary bitstrings, as test instances, for the variables $\overline{q_1}[0],...,\overline{q_1}[m\sminus 1]$.
We then use the mechanism in \Cref{sec:rand-testing} to validate the program piece.
As we mentioned in \Cref{sec:intro}, after we test enough samples of different input basis-ket states by picking different values for random variables, we have a high assurance that the modular exponentiation state preparation program correctly prepares a superposition state.

Note that the program is a deterministic quantum circuit program, so the probability of preparing the superposition state is $100\%$.  

\noindent\textbf{\textit{Amplitude Amplification State Preparation Through Ry Gates.}}\label{sec:aary}
In the amplitude amplification algorithm, one needs to prepare a special superposition state \cite{10.1007/s11128-019-2565-2}, as $\varphi_4=\frac{1}{\sqrt{2^n}}\sum_{j=1}^{2^n}\aket{j}{n}\qket{\frac{(2j+1)r}{2^n}}$, circuit in \Cref{fig:aacircuit}.
Then, the amplitude amplification algorithm utilizes the last qubits ($\qket{\frac{(2j+1)r}{2^n}}$) to amplify the amplitudes of the basis-kets having a particular property with respect to some $j$. The $r$ value is the upper limit of the possible amplitude value, i.e., we want to carefully select $r$ to ensure $\frac{(2j+1)r}{2^n}\in[0,\frac{\pi}{2})$.

{\footnotesize
\begin{center}
$
\begin{array}{l@{\;}c@{\;}l}
Q(\overline{q},q')(j) &\triangleq& \ictrl{(\overline{q}[j])}{\iry{\frac{r}{2^{n-j}}}{q'}}
\\[0.2em]
P(n) &\triangleq& \iseq{\inew{\overline{q}}}{\iseq{\iseq{\inew{q'}}{\iseq{\ihad{\overline{q}}}{\iry{\frac{r}{2^n}}{q'}}}}{{Re(Q(\overline{q},q'),n)}}}
\end{array}
$
\end{center}
}

We implement the program $P$ in \pqasm with the input of a qubit array $\overline{q}$ and a single qubit $q'$.
We then apply \cn{H} gates on $\overline{q}$ and a $Y$-axis rotation on $q'$.
Eventually,we apply a series of controlled $Y$-axis rotation operations --- controlling on the $\overline{q}[j]$ qubit ($j\in[0,n)$) and applying $\cn{Ry}$ on $q'$;
each single controlled $Y$ axis roation is handled by the $Q$ process.

Since there is no measurement, the success rate of preparing the amplitude amplification state is theoretically $100\%$.
We mainly test the correctness here.

{\small
\begin{center}
$
{\iseq{{{\iry{\frac{r}{2^n}}{q'}}}}{{Re(Q(\overline{q},q'),n)}}}
\qquad
\qquad
\forall j \in [0,2^n)\,.\,\aket{j}{n}\aket{0}{1}\to\aket{j}{n}\qket{\frac{(2j+1)r}{2^n}}
$
\end{center}
}

In doing so, we can adapt the same strategy in the modular exponentiation state preparation, by cutting off $\iseq{\inew{\overline{q}}}{\iseq{\inew{q'}}{\ihad{\overline{q}}}}$, so we should mainly focus on validating the portion shown above on the left.
The correctness specification is described above on the right.
In validating this portion, we notice that the qubit array $\overline{q}$ is in $\thadt$ type, and we generate random variables, with binary values $0$ or $1$, for every qubit in the array. Through the validation procedure in validating the modular exponentiation program above, we can assure that the result of the program produces the superposition state $\varphi_4$.

\subsubsection{Repeat-until-success Programs.}\label{sec:repeat-success}
We then examine the class of repeat-until-success programs, using the strategy in \Cref{sec:rand-testing}.
We show the Hamming weight and the distinct element state preparation program validation below.

\begin{figure}[t]
{\scriptsize
\begin{center}
$\begin{array}{c}
  \Qcircuit @C=0.5em @R=0.5em {
    &                     & & \gate{H} & \ctrl{6} & \qw & \qw & \qw & \qw & \qw & \qw & \qw & \\
    &                     & & \gate{H} & \qw & \ctrl{5} & \qw & \qw & \qw & \qw & \qw &  \qw & \\
    & \push{\overline{q_1}:\ket{0}\quad} & &  & & & &  & & & & & \push{\quad\varphi_5}\\
    &                     & & \dots & & & & \dots & & & & & \\
    &                     & & & & & & & & & & & \\
    &                     & & \gate{H} & \qw & \qw & \qw & \qw & \qw & \ctrl{1} & \qw & \qw & \\
    &                     & & \qw & \multigate{4}{\texttt{$\iadd{\overline{q_2}}{1}$}} & \multigate{4}{\texttt{$\iadd{\overline{q_2}}{1}$}} & \qw & \qw & \qw & \multigate{4}{\texttt{$\iadd{\overline{q_2}}{1}$}} & \qw & \meter & \\
    & \push{\overline{q_2}:\ket{0}\quad } & & \qw & \ghost{\texttt{$\iadd{\overline{q_2}}{1}$}} & \ghost{\texttt{$\iadd{\overline{q_2}}{1}$}} & \qw & \qw & \qw & \ghost{\texttt{$\iadd{\overline{q_2}}{1}$}} & \qw &  \meter & \push{\quad x}\\
    &                     & & \dots &                                     &                                     &     & \dots &     &                                         & & \dots & \\
    &                     & &       &                                     &                                     &     &       &     &                                         & & & \\
    &                     & & \qw   & \ghost{\texttt{$\iadd{\overline{q_2}}{1}$}} & \ghost{\texttt{$\iadd{\overline{q_2}}{1}$}} & \qw & \qw   & \qw & \ghost{\texttt{$\iadd{\overline{q_2}}{1}$}} & \qw & \meter 
    \gategroup{1}{3}{6}{3}{1em}{\{}
    \gategroup{7}{3}{11}{3}{1em}{\{}
    \gategroup{1}{12}{6}{12}{1em}{\}}
    }
    \end{array}
$
\end{center}
}
\caption{One-step Hamming weight state preparation (repeat-until-success).}
\label{fig:hammingcircuit}
\end{figure}

\noindent\textbf{\textit{Hamming Weights State Preparation.}}\label{sec:hammingweight}
Some algorithms \cite{10.5555/2011430.2011431} require a $k$-th Hamming weight superposition state preparation,
i.e., we prepare a state $\varphi_5=\frac{1}{\sqrt{N}}\sum_j^N \aket{c_j}{n}$, with the number of $1$'s bit in $c_j$ is $k$.
Assuming that $\varphi_5$ is a length $n$ qubit state, $\varphi_5$ has $N$ different basis-kets, with $N=\begin{pmatrix}
n\\
k
\end{pmatrix}$.

{\scriptsize
\begin{center}
$\hspace*{-0.5em}
\begin{array}{l@{\;}c@{\;}l}
Q(\overline{q_1},\overline{q_2})(j) &\triangleq& \ictrl{(\overline{q_1}[j])}{\cn{add}(\overline{q_2},1)}
\\[0.2em]
P(n,k) &\triangleq& \iseq{\inew{\overline{q_1}}}{\iseq{\iseq{\inew{\overline{q_2}}}{\ihad{\overline{q_1}}}}{\iseq{Re(Q(\overline{q_1},\overline{q_2}),n)}{\smea{x}{\overline{q_2}}{\sifb{x\cn{=} k}{\sskip}{P(n,k)}}}}}
\end{array}
$
\end{center}
}

The above is a repeat-until-success program of the Hamming weight program, with the circuit in \Cref{fig:hammingcircuit} showing a single quantum step in $P(n,k)$.
The program starts with two new length $n$ qubit arrays $\overline{q_1}$ and $\overline{q_2}$, and turns $\overline{q_1}$ to a uniform superposition by applying $n$ \cn{H} gates. We then repeat $n$ times of a controlled addition ($\ictrl{(\overline{q_1}[j])}{\cn{add}(\overline{q_2},1)}$) applications.
The controlled additions count the number of $1$'s bits in $\overline{q_1}$ and store the result in $\overline{q_2}$.
If the measurement on $\overline{q_2}$ results in $k$ (assigning to $x$), it means that the $\varphi_5$ state of the qubit array $\overline{q_1}$ is a superposition of basis-ket states with the vector having $k$ bits of $1$.
Otherwise, we repeat the process $P$ with two new qubit arrays $\overline{q_1}$ and $\overline{q_2}$ until the measurement result $k$ appears.
Note that $Q(\overline{q_1},\overline{q_2})$ in $Re$ is a function taking in a natural number argument $j$ and then performing a controlled addition.

{\small
\begin{center}
$
\begin{array}{c}
{\iseq{Re(Q(\overline{q_1},\overline{q_2}),n)}{\smea{x}{\overline{q_2}}{\sifb{x=k}{\sskip}{\sskip}}}}
\\[0.2em]
\forall j \in [0,2^n)\,.\,\aket{j}{n}\aket{0}{n}\to\aket{j}{n}\wedge \cn{sum}(\cn{n2b}(j))=k
\end{array}
$
\end{center}
}

To validate the correctness of the Hamming weight program, we shrink the program by removing the \cn{new} and \cn{H} operations.
The program piece and the transformed correctness specification are listed above.
We then utilize the procedure in \Cref{sec:rand-testing} to perform the validation.
Here, we assume that the $\thadt$ typed qubits $\overline{q_1}$ are already prepared, and we randomly generate a length $n$ bitstring for the random variables $\overline{q_1}[0],...,\overline{q_1}[n\sminus 1]$.
Each random variable, possibly being $0$ or $1$, represents the basis-bit of a single qubit superposition.
We set up the PBT to randomly sample values for the random variables and exclusively test the correctness of the transition behavior of basis-ket states.
The key correctness property ($\cn{sum}(\cn{n2b}(j))=k$) for the Hamming weight state is that each output basis-ket of $\overline{q_1}$ should have exactly $k$ bits of $1$.

The program efficiency can be easily assessed by counting the number of basis kets in a superposition quantum state.
Notice that every superposition state prepared by a simple Hadamard operation produces a uniform superposition, meaning that the likelihood of measuring out any basis-ket vector is equally likely.
Thus, we only need to compare the ratio between the number of basis-kets after the Hadamard operations are applied and the basis-ket number in $\overline{q_1}$ after the measurement is applied. The former contains $2^n$ different basis-kets for $n$ Hadamard operations, and the latter has $\begin{pmatrix}
n\\
k
\end{pmatrix}$ basis-kets in a $k$-th Hamming weight state.
So, the success rate of a single try in the program is $\begin{pmatrix}
n\\
k
\end{pmatrix} / 2^n$.

\begin{figure}[t]
{\hspace*{-0.2em}
\begin{minipage}[t]{0.3\textwidth}
\subcaption{Subroutine $Q(j,k)$ ; $C=\qbool{\overline{q_{j\splus 2}}}{=}{\overline{q_{k\splus 2}}}{q_0}$.}
\label{fig:subroutine}
\vspace*{0.3em}
{\tiny
$\begin{array}{l}
  \Qcircuit @C=0.5em @R=0.5em {
    & \push{\overline{q_{j\splus 2}}}  & &  \multigate{2}{\texttt{$C$}} & \qw &  \multigate{2}{\texttt{$C$}} & \qw & \\
    &  \push{\overline{q_{k\splus 2}}} & & \ghost{\texttt{$C$}}         & \qw & \ghost{\texttt{$C$}} & \qw &  \\
    & \push{q_0}             & & \ghost{\texttt{$C$}}         & \ctrl{1} & \ghost{\texttt{$C$}} & \qw &  \\
    & \push{\overline{q_1}}  & & \qw                                                                       & \gate{\texttt{$\iadd{\overline{q_1}}{1}$}} & \qw & \qw &  \\
    }
    \end{array}
$
}
\end{minipage}
  \begin{minipage}[t]{0.64\textwidth}
   \subcaption{Overall One Step Circuit}
\label{fig:diselems}
\vspace*{0.3em}
{\tiny
$\begin{array}{c}
  \Qcircuit @C=0.5em @R=0.5em {
    &\push{\overline{q_2}:\ket{0}}                     & & \gate{\cn{H}} &  \multigate{7}{\texttt{$Q(2,3)$}} & \qw & \qw & \qw & \multigate{7}{\texttt{$Q(2,n\splus 1)$}} & \multigate{7}{\texttt{$Q(3,4)$}}  & \qw & \qw & \qw     & \multigate{7}{\texttt{$Q(n,n\splus 1)$}} & \qw & \\
    &\push{\overline{q_3}:\ket{0}} & & \gate{\cn{H}} & \ghost{\texttt{$Q(2,3)$}} & \qw & \qw & \qw & \ghost{\texttt{$Q(2,n\splus 1)$}} & \ghost{\texttt{$Q(3,4)$}}   & \qw & \qw & \qw     & \ghost{\texttt{$Q(n,n\splus 1)$}} &  \qw & \\
    &  & &  & & &   & & & & & & & & & \push{\varphi_6}\\
    &                     & & \dots & & & \dots & & & &  & \dots & & & &\\
    &                     & & & & & & & & & & & & &\\
    &\push{\overline{q_{n\splus 1}}:\ket{0}}                     & & \gate{\cn{H}} & \ghost{\texttt{$Q(2,3)$}} & \qw & \qw & \qw & \ghost{\texttt{$Q(2,n\splus 1)$}} & \ghost{\texttt{$Q(3,4)$}}  & \qw & \qw & \qw     & \ghost{\texttt{$Q(n,n\splus 1)$}} & \qw & \\
    &\push{q_0:\ket{0}}    & & \qw & \ghost{\texttt{$Q(2,3)$}} & \qw & \qw & \qw & \ghost{\texttt{$Q(2,n\splus 1)$}} & \ghost{\texttt{$Q(3,4)$}}  & \qw & \qw & \qw     & \ghost{\texttt{$Q(n,n\splus 1)$}} & \qw & \\
    &\push{\overline{q_1}:\ket{0}}  & & \qw & \ghost{\texttt{$Q(2,3)$}} & \qw & \qw & \qw & \ghost{\texttt{$Q(2,n\splus 1)$}} & \ghost{\texttt{$Q(3,4)$}}  & \qw & \qw & \qw     & \ghost{\texttt{$Q(n,n\splus 1)$}} & \meter & \push{x}   
    \gategroup{1}{15}{7}{15}{1em}{\}}
    }
    \end{array}
$
}
\end{minipage}
\caption{One step distinct element state preparation.}
\label{fig:distinctelem}
}
\end{figure}

\myparagraph{Distinct Element State Preparation.}\label{sec:distinctness}
Another special superposition state is the one in the element distinctness algorithm.
Here, we assume that we are given a graph with $n$ different vertices, and the algorithm begins with a superposition of different combinations of vertices, as shown below.

{\small
\begin{center}
$
\varphi_6=\frac{1}{\sqrt{n!}}\sum_{j} \sigma_j(\ket{x_1}\ket{x_2}...\ket{x_n})
$
\end{center}
}

Here, $x_1$, $x_2$, ..., $x_n$ are different vertex keys in the graph, $\sigma_j$ is a permutation of the key list $\ket{x_1}\ket{x_2}...\ket{x_n}$.
There are $n!$ different kinds of permutations, so the uniform amplitude for each basis-ket is $\frac{1}{\sqrt{n!}}$.
Such a permutation state means we are preparing a superposition of all permutations of the different vertex keys.
Such a superposition state is widely used in many algorithms, such as the quantum fingerprinting algorithm \cite{Buhrman_2001}.

{\scriptsize
\begin{center}
$\hspace*{-0.5em}
\begin{array}{l@{\;}c@{\;}l}
Q(k)(j) &\triangleq& \iseq{\qbool{\overline{q_{j+2}}}{=}{\overline{q_{k+2}}}{q_0}}{\iseq{\ictrl{q_0}{(\overline{q_1}+1)}}{\qbool{\overline{q_{j+2}}}{=}{\overline{q_{k+2}}}{q_0}}}
\\[0.2em]
R(j)(n) &\triangleq& Re(Q(j),n)
\\[0.2em]
H(j) &\triangleq& \ihad{\overline{q_{j+2}}}
\\[0.2em]
T(j) &\triangleq& \inew{\overline{q_{j+1}}}
\\[0.2em]
P(n) &\triangleq& \iseq{\inew{q_0}}{\iseq{Re(T,n\splus 1)}{\iseq{{Re(H,n)}}{\iseq{Re(R(n\sminus 1),n)}{\smea{x}{\overline{q_1}}{\sifb{x\cn{=} 0}{\sskip}{P(n)}}}}}}
\end{array}
$
\end{center}
}

For simplicity, we only implement the above program to prepare a superposition state of distinct elements, i.e., each basis-ket in the superposition state stores $n$ distinct elements (vertex key), each key having a qubit size $m$. Note that if $n=2^m$, i.e., we have $2^m$ different vertices having keys $u\in[0,2^m)$, then the superposition state represents a superposition of all the permutations.
We show a repeat-until-success program for preparing such a state above, with the circuit in \Cref{fig:distinctelem} showing a single quantum step in $P(n)$. We first initialize a single qubit $q_0$, and use $T(j)$ to initialize $n+1$ different qubit arrays, $\overline{q_{j+1}}$, with $j\in [0,n+1)$, and we assume that $\overline{q_{j+1}}$ is an $m$ length qubit array.
We then apply \cn{H} gates to all qubit arrays $\overline{q_{j+2}}$ ($j \in [0,n)$). $q_0$ and $\overline{q_1}$ are ancillary qubits.

Essentially, we can view $\overline{q_{j+2}}$ ($j \in [0,n)$) as an $n$-length array of qubit arrays.
The program applies $O(n^2)$ times of $Q$ processes, each of which applies an equivalent check on two elements in the $n$-length array,
i.e., we compare the basis-ket data in $q_{j+2}$ and $q_{k+2}$ ($j,k\in[0,n)$) to see if their content are equal, and store the boolean result in $q_0$ bit.
Then, we also add the result to $\overline{q_1}$ and apply the comparison circuit again to clean up the ancillary qubit $q_0$, meaning that we restore $q_0$'s state to $\ket{0}$. This procedure describes the circuit in \Cref{fig:subroutine}.

After we apply the $Q$ function to any two different elements in the $n$-length array, we observe that $q_0$ is back to the $\ket{0}$ state, and $\overline{q_1}$ stores the number of same basis-kets between any two distinct elements in the $n$-length array.
We then measure $\overline{q_1}$ and see if the result is $0$. If so, a permutation superposition state is prepared because it means that in all the basis-kets in the prepared superposition, there are no two-qubit array elements $\overline{q_k}$ and $\overline{q_l}$ that have equal key.
If not, we repeat the process, and the repeat-until-success program guarantees the creation of the permutation superposition state.

We use the procedure in \Cref{sec:rand-testing} to validate the correctness. We create the validating program by removing the \cn{new} and \cn{H} operations from the original program. The program piece and the transformed specification are listed below.

{\small
\begin{center}
$
\begin{array}{c}
\iseq{Re(R(n-1),n)}{\smea{x}{\overline{q_1}}{\sifb{x=0}{\sskip}{\sskip}}}
\\[0.2em]
x=0\Rightarrow\forall j,j'\in [0,2^{n * m})\,.\,\aket{0}{1}\aket{0}{n}\aket{j}{n*m}\to \aket{0}{1}\aket{j'}{n*m} \wedge \cn{dis}(j',m)
\end{array}
$
\end{center}
}

Here, $j$ and $j'$ represent the values for two qubit arrays, i.e., $j$ represents the bitstring value for composing basis-ket values of all elements in the qubit array $\overline{q_{l\splus 2}}$ ($l\in[0,n)$). The qubit array $\overline{q_{j\splus 2}}$ has $n \cn{*} m$ qubits, and we slice the basis-vector for the whole qubit array into $n$ different small segments for the qubit ranges $[l\cn{*} m,l\cn{*} (m\splus 1)\sminus 1)$, each basis-vector segment representing a vertex key.
Since we apply Hadamard operations to all of them, it creates a uniform superposition state containing $2^{n \cn{*} m}$ different basis-vector states.
In the post-state of the specification, the $q_0$ qubit is still $\ket{0}$.
For the qubit arrays $\overline{q_{2}},...,\overline{q_{m\splus 2}}$, if the measurement result is $x=0$, we result in a superposition state of distinct elements, i.e., any two elements (each element $l$ is a segment of $[l\cn{*} m,l\cn{*} (m\splus 1)\sminus 1)$) in the qubit array $\overline{q_{j\splus 2}}$ have distinct basis-vectors.
We use the predicate $\cn{dis}(j',m)$ to indicate that all length $m$ segments in the bitstring $\ket{j}$ are pairwise distinct.
Via our PBT framework, we have high assurance that the distinct element state preparation program correctly prepares the superposition state.

The program's efficiency in preparing the superposition state can be easily assessed by counting the number of basis-ket states in the superposition.
Notice that every superposition state prepared by a simple Hadamard operation produces a uniform superposition, meaning that the likelihood of measuring out any basis state vectors is equally likely.
Thus, we only need to compare the ratio between the number of basis-kets right after the Hadamard operations are applied and the basis-ket number in $\overline{q_{j+2}}$.
Here, we count the case for $n=2^m$ where a permutation superposition state is prepared.
For $j\in[0,n)$ with $n=2^m$, after the measurement is applied. The former contains $2^{n * m}$ different basis-kets for $n * m$ Hadamard operations, and the latter has $n!$ basis-kets in a $k$-th permutated superposition state. So, the success rate of a single try in the program is $\frac{n !}{2^{n * m}}$.

\section{Discussion: Efficiency, Scalability, and Utility}\label{sec:eval}

We compare QSV's efficiency and scalability with the state-of-the-art platforms. We show that QSV can effectively validate program properties, providing confidence in program correctness. We also show that QSV \textit{scales} to validate and realize state preparation programs regardless of the size (in terms of qubit number required).
We thus justify the QSV's utility in successfully capturing bugs.

\begin{figure}[t]
{\scriptsize
\begin{center}
\begin{tabular}{| l | c | c | c | c |  c |}
\hline
 Program  & QSV  QCT 8B & QSV QCT 60B  & Qiskit Sim 8B & Qiskit Sim 60B & DDSim Sim 60B \\
 \hline
$n$ basis-ket & $<$ 1.5  & 2.4 & $<$ 1.5 & No & No \\
Modular Exp. & $<$ 1.5  & 19.7 & No & No & No\\
Amplitude Amp. & $<$ 1.5  & 2.1 & $<$ 1.5 & No & No  \\
Hamming Weight & $<$ 1.5  & 24.9 &  $<$ 1.5 & No & No \\
Distinct Element & $<$ 5  & 336  & No & No & No \\
\hline                           
\end{tabular}
\end{center}
}
\caption{Evaluation on different state preparation programs for 8/60 qubit single registers (8B/60B). "QCT" : time (in seconds) for QuickChick to run 10,000 tests. "Sim": time (in seconds) or if Qiskit/DDSim can execute a single test. }
\label{fig:self-data}
\end{figure}

\myparagraph{Efficiency.}\label{sec:testefficient}
We discuss the program development procedure in QSV, compared to other systems such as Qiskit.
Discussing the efficiency of a validation framework needs to be put in the context of human efforts for program development,
as users mainly care about how to effectively use QSV to develop programs.

The general procedure for developing state preparation programs in QSV is the traditional test-driven program development.
We first present the program correctness properties in the superposition state format for different programs, such as the one in \Cref{sec:intro,sec:evaluation}.
We then start implementing the program using a possible program pattern and see if we can write the correct program based on it, where the \pqasm high-level abstraction helps us write programs.
For example, in dealing with all the programs in \Cref{fig:qiskit-data}, we first try to see if we can write all these programs via the quantum loop program pattern,
and rewrite the correctness properties based on the strategy presented in \Cref{sec:quantumloop}.

We then run the QSV validator to validate the implemented programs against the properties.
After several rounds of corrections, one can typically judge whether the program is implementable.
For example, using our validator, we found that implementing the Hamming weight and distinct element programs based on the quantum loop program pattern might be hard.
We then switch to other program patterns to implement these programs; e.g., we use the repeat-until-success program pattern to implement the two programs by rewriting their correctness properties to those in \Cref{sec:repeat-success} and successfully find a solution.
Our validator can effectively validate a program using our PBT framework, which generates $10,000$ test cases each time.
As shown in \Cref{fig:self-data}, running $10,000$ randomly generated test cases for all our example programs takes less than $5$ seconds for small-size programs ($8$ qubits) and less than 5.5 minutes for large-size programs ($60$ qubits).
This indicates that a program developer can quickly correct minor bugs when developing their programs.

On the other hand, developing state preparation programs in the state-of-the-art system might be painful, e.g., it is unlikely one can perform test-driven development to implement the programs in \Cref{fig:qiskit-data}, mainly because of a lack of proper validation facilities.
As we can see in \Cref{fig:self-data}, Qiskit might not execute a single test for some small-size ($8$ qubit) programs, such as distinct-element programs.
The modular exponentiation program is executed for some small number settings, but is not executable in general because the Qiskit simulator applies special optimizations to components of Shor's algorithm.
Executing a large program ($60$ qubits) is completely impossible; either the program is too large for IBM's Aer simulator to run at all, and the simulator errors out, or the circuit construction may take hours and still not be done.
This indicates the difficulty of developing large-scale programs by conducting small-scale testing in Qiskit, not to mention the need to validate large datasets and coverage.
Another key issue is that the high-level abstraction support in the state-of-the-art systems is not well provided.
In Qiskit, we can only find quantum addition operations; the other arithmetic and comparison operations are missing.
In fact, we implement these operations in Qiskit based on our implementation in QSV.

\myparagraph{Scalability.}
To evaluate the QSV's scalability, we not only compare its execution at small and large sizes with Qiskit but also with another state-of-the-art quantum simulator, DDSim.
Here, we attempted to recreate (or find existing implementations of) the aforementioned programs on DDSim and Qiskit. We also performed PBT on our \pqasm implementations on systems of various sizes and verified the rigidity of our tests by mutating either the properties or the states and verifying that the tests failed. 

As in \Cref{fig:self-data}, we have fully validated the five examples in these papers via our PBT framework.
As far as we know, these constitute the first validated-correct implementations of the $n$ basis-ket, Hamming weight, and distinct elements programs. 
All other operations in the figure were validated with Quick\-Chick. To ensure these tests were effective, besides our program development procedures above, we also confirmed they could find hand-injected bugs; e.g., we changed the rotation angles in the \cn{Ry} gate in the amplitude amplification state preparation and performed some mutation testing (e.g. replacing some non-skip terms with $\sskip$ (skip) operations) and confirmed that our PBT could catch the inserted bugs.
The tables in \Cref{fig:self-data} give the running times for our validator to validate programs---the times include the cost of compiling the Rocq code and running it with $10,000$ randomly generated inputs via QuickChick.
We validated these programs on small (8-qubit) and large (60-qubit) inputs (the numbers relevant to the reported qubit and gate sizes in \Cref{fig:qiskit-data}), with all validation completed within 2.5 minutes (most within seconds).
For comparison, we translated our programs to \sqir, converted the \sqir programs to OpenQASM 2.0 \cite{Cross2017}, and then attempted to simulate the resulting circuits on a \textit{single test input} using DDSim~\cite{ddsim}, a state-of-the-art quantum simulator, and listed the result in the fifth column. Unsurprisingly, the simulation of the 60-bit versions did not complete when running overnight.
The third and fourth columns in \Cref{fig:self-data} show the results for executing a \textit{single program run} in Qiskit, and Qiskit executes a few small-size programs (\Cref{sec:testefficient}) and none of the large-size programs.
The experiment provided strong assurance of QSV's scalability.

\begin{wrapfigure}{r}{5.2cm}
{\scriptsize
\begin{center}
\begin{tabular}{| l | c |}
\hline
 Operation  & QCT  \\
 \hline
Addition & 2  \\
Comparison & 5  \\
Modular Multiplication & 794 \\
\hline                           
\end{tabular}
\end{center}
}
\caption{Arith OP QC time (60B).}
\label{fig:runningtime}
\end{wrapfigure}

There is a difference in the program execution between DDSim and QSV.
The latter abstracts arithmetic operations and assumes they can be handled by the previous VQO~\cite{oracleoopsla} framework, whereas DDSim executes the entire circuits generated from a QSV program.
To compare the effects, we list the QuickChick testing time (running 10,000 tests) for the operations used in our state preparation programs in \Cref{fig:runningtime}; these running time data were provided in VQO.
The addition and comparison circuits do not greatly affect the execution of our \pqasm programs.
Validating modular multiplication circuits might be costly, as our $60$-bit modular exponentiation involves $60$ modular multiplication operations.
However, a typical validation scheme might only validate the correctness of a costly subcomponent once and then use its semantic properties to validate other programs that use it.
More importantly, the purpose of the experiment of DDSim/Qiskit executions is to show the state-of-the-art impossibility of executing quantum programs on a classical computer, while our QSV framework can validate quantum programs.

\myparagraph{Utility.}
One of the utilities of a program validation framework is to find bugs or faults in the existing algorithms.
During the development of the above programs, we identified several issues in two original algorithms \cite{Buhrman_2001,1366221} that utilize these special superposition states. Both algorithms require preparing a superposition state of distinct elements (or of permutations of distinct elements), but they do not specify how to effectively prepare such a state. To the best of our knowledge, the state preparation program in \Cref{sec:distinctness} is the first program implementation of the state via the repeat-until-success scheme. As our probability analysis shows, the likelihood of preparing such a state is not very high.
This fact might indicate that the quantum algorithms' advantage arguments over classical algorithms in these works are not solid, given the unclear preparation of the initial states. Without our implementations of these state preparation programs, it is impossible to detect these subtle potential faults in these algorithms.

Indeed, in the algorithm \cite{10.5555/2011430.2011431} that uses the initial Hamming-weight superposition state, the authors recognized the low probability of preparing the initial state via the repeat-until-success scheme and proposed using a specialized gate instead of Hadamard gates to start their repeat-until-success state preparation program. The special gates created a simple superposition state with a different probability distribution, unlike the uniform distribution produced by Hadamard gates. The analysis of these specialized superposition gates with different probability distributions will be included in our future work.

Another utility of using QSV is to judge the implemented programs' correctness and find a more optimized implementation. Other frameworks, such as OpenQASM, do not have property-based testing features.

\section{Related Work}
\label{sec:related}

This section gives related work beyond the discussion in \Cref{sec:implementation}.

\noindent\textbf{\textit{Quantum Circuit Languages.}}
Prior research has developed circuit-level compilers to compile quantum circuit languages to quantum computers, such as Qiskit \cite{Qiskit2019}, \tket \cite{tket}, Staq \cite{Amy2020}, PyZX \cite{Kissinger2019}, Nam \emph{et al.} \cite{Nam2018}, quilc~\cite{quilc}, Cirq~\cite{cirq}, ScaffCC \cite{JavadiAbhari2015}, and Project Q~\cite{Steiger_2018}. 
In addition, many quantum programming languages have been developed in recent years. 
Many of these languages (e.g. Quil~\cite{quilc}, OpenQASM ~\cite{Cross2017,10.1145/3505636}, \sqir~\cite{VOQC}) describe low-level circuit programs.
Higher-level languages may provide library functions for performing common oracle operations (e.g., Q\# \cite{qsharp}, Scaffold~\cite{scaffold,scaffCCnew}) or support compiling from classical programs to quantum circuits (e.g., Quipper~\cite{10.1145/2491956.2462177}), but still leave some important details 
(like deallocating extra intermediate qubits) to the programmer.
There has been some work on type systems to enforce that deallocation happens correctly (e.g., Silq~\cite{sliqlanguage}) and on automated insertion of deallocation circuits (e.g., Quipper~\cite{10.1145/2491956.2462177}, Unqomp~\cite{unqomp}), but while these approaches provide useful automation, they may also lead to inefficiencies in compiled circuits.

\noindent\textbf{\textit{Quantum Software Testing and Validation.}}
There have been many approaches developed for validating quantum programs \cite{morphq_bugs,fuzz4all,10.1109/ASE51524.2021.9678798,fortunato,long:24,QDiff} including the use differential ~\cite{QDiff} and metamorphic testing ~\cite{10.1109/ICSE48619.2023.00202}, and mutation testing ~\cite{fortunato} and fuzzing ~\cite{fuzz4all}. Some key challenges exist for testing quantum programs. First, their input space explodes due to superposition. Second, their results are probabilistic (meaning we need to use statistical measures and/or other approaches to evaluate results).  Last, the expected result may be difficult or even impossible to determine. To date, testing approaches have focused on validating small circuit subroutines (i.e., with limited input qubit size) rather than comprehensive quantum programs, and they are all limited to testing in Qiskit, which might not capture all machine limitations.

\noindent\textbf{\textit{Methodologies Possibly Used for Validating Quantum Programs.}}
SymQV \cite{10.1007/978-3-031-27481-7_12} proposed a method of encoding quantum states and gates as SMT-solvable predicates to perform automated verification.
Chen \emph{et al.} \cite{10.1145/3591270} and Abdulla \emph{et al.} \cite{abdulla2024verifyingquantumcircuitslevelsynchronized} used tree automata to symbolize quantum gates, instead of quantum states, and utilized tree automata to construct a tree structure for easing automated verification. These works can handle some large programs, but these programs have simple program structures, such as QFT.
Mei \emph{et al.} \cite{10.1007/978-3-031-65633-0_25} performed quantum stabilizer simulation based on the Gottesman–Knill theorem, which is a small subset of quantum programs and mainly used for error correction programs. Quasimodo \cite{10.1007/978-3-031-37709-9_11} is another symbolic execution based on a BDD-like structure to symbolize gates rather than states; their results are similar to the tree automata works \cite{abdulla2024verifyingquantumcircuitslevelsynchronized,10.1145/3591270}.
\qafny \cite{li2024,10.1145/3763157} transformed quantum program verification to Dafny for automated verification.
These works tried to transform states and gates to perform automated verification, different from QSV, which tries to perform program testing and validation. The methodologies are also different from QSV where they try to symbolize quantum gates and states, while QSV only inserts special treatments in the standard quantum state representations. As a result, QSV can deal with large programs with comprehensive program structures.

\noindent\textbf{\textit{Verified Quantum Compilers.}}
Recent work has looked at verified optimization of quantum circuits (e.g., \voqc~\cite{VOQC}, CertiQ~\cite{Shi2019}), but the problem of verified \emph{compilation} from high-level languages to quantum circuits has received less attention.
The only examples we know of verified compilers for quantum circuits are ReVerC~\cite{reverC}, ReQWIRE~\cite{Rand2018ReQWIRERA}, and Feynman \cite{Amy_Feynman}
\cite{amyphdthesis}. 
ReVerC and ReQWIRE support verified translation from a low-level Boolean expression language to circuits consisting of \texttt{X}, \texttt{CNOT}, and \texttt{CCNOT} gates.  Feynman verifies the correctness of compiled circuits relative to an input circuit or a path-integral specification of a circuit, while QSV validates whether programs have specified properties before they are compiled to the circuit level. It’s feasible that a programmer can use QSV to validate the program and then use Feynman to verify the compiled circuit relative to a circuit or path integral specification of the program. 
VQO \cite{oracleoopsla} is a certified compilation framework for verifying and compiling quantum arithmetic operations.
QSV utilizes VOQC \cite{VOQC} and VQO \cite{oracleoopsla} to compile \pqasm programs to quantum circuits.

\section{Conclusion and Limitations}
\label{sec:conclusion}

We present QSV, a framework for expressing and automatically validating quantum state preparation programs.
The core of QSV is a language \pqasm, which can express a restricted class of quantum programs that are efficiently testable for certain properties and are useful for implementing state preparation programs. 
We have verified the translator from \pqasm to \sqir and have validated (or randomly tested) many programs written in \pqasm.
We have used \pqasm to implement state preparation programs useful in quantum computation, such as the ones in \Cref{fig:qiskit-data}.
We hope this work will be the basis for building a quantum validation framework for validating quantum programs on classical computers.

QSV is capable of validating properties of many quantum programs. As mentioned in \Cref{sec:intro}, QSV targets validating state preparation programs. For many quantum programs, QSV is able to validate the most significant part of the program. For example, the validated modular multiplication program in \Cref{fig:mod-mult} is essentially 90\% of Shor's algorithm, except for the final inverse QFT gate and measurement.

Our type system specifically locates Hadamard operations, but there are no actual restrictions on Hadamard operations, as users can easily define similar behaviors via our \cn{Ry} and oracle operations. Via our type system, we identify the beginning Hadamard operations as a general quantum algorithm component to generate superposition sources so that QSV can locate the places to create random inputs for validating programs. We recognize the superposition state generation in many quantum algorithms as the major bottleneck for testing quantum programs; therefore, we utilize types to identify them, with special treatment to transform the superposition states to a simple and testable format.

\section{Data-Availability Statement}
The Rocq proofs and the experiment in the paper are available on Zenodo: \url{https://doi.org/10.5281/zenodo.15073729} \cite{li_2026_18568946}. The Zenodo artifact can be run in the experimental setting described in its README. There is also a GitHub repository at \url{https://github.com/qafny/pqasm}.
\bibliography{reference}
\newpage
 \appendix
 \section{Additional PQASM Semantics and Typing Rules}\label{appx:pqasm}

\begin{figure*}[h]
{\scriptsize
  \begin{mathpar}
        \inferrule[S-SeqC]{(\Phi,e_1) \xrightarrow{r} (\Phi', e'_1) }
        {(\Phi,\sseq{e_1}{e_2}) \xrightarrow{r} (\Phi',\sseq{e'_1}{e_2}) }

        \inferrule[S-SeqT]{}
        {(\Phi,\sseq{\sskip}{e_2}) \xrightarrow{1} (\Phi,e_2) }

      \inferrule[S-IfT]{}{ (\Phi,\sifb{\cn{true}}{e_1}{e_2}) \xrightarrow{1} (\Phi,e_1)}
 
       \inferrule[S-IfF]{}{ (\Phi,\sifb{\cn{false}}{e_1}{e_2}) \xrightarrow{1} (\Phi,e_2)}
  \end{mathpar}
}
\caption{Remaining program Level \pqasm rules.}
\label{fig:exp-semanticsa}
\end{figure*}

\begin{figure}[h]
{\footnotesize
  \begin{mathpar}
     \inferrule[Seq]{\Omega\vdash_g \instr_1\triangleright \Omega' \\ \Omega'\vdash_g \instr_2\triangleright \Omega''}{\Omega \vdash_g \iseq{\instr_1}{\instr_2}\triangleright \Omega''} 

       \inferrule[ESeq]{\Sigma;\Omega\vdash e_1\triangleright \Omega' \\ \Sigma;\Omega'\vdash e_2\triangleright \Omega''}{\Sigma;\Omega \vdash \iseq{e_1}{e_2}\triangleright \Omega''} 
      
      \inferrule[TIf]{\Sigma \vdash B \\ \Sigma;\Omega\vdash e_1 \triangleright \Omega'\\\Sigma;\Omega\vdash e_2 \triangleright \Omega'}{\Sigma;\Omega\vdash \sifb{B}{e_1}{e_2} \triangleright \Omega'} 
  \end{mathpar}
}
  \caption{Remaining typing rules.}
  \label{fig:exp-well-typeda}
\end{figure}

\Cref{fig:exp-semanticsa} shows additional program-level semantic rules.
Rules \rulelab{S-IfT} and \rulelab{S-IfF} perform classical conditional evaluation.
In \pqasm, the classical variables are evaluated via a substitution-based approach, as in \rulelab{S-Mea}.

\Cref{fig:exp-well-typeda} shows additional typing rules.
In our type system, rule \rulelab{TIf} ensures that any variables mentioned in $B$ are properly scoped by the constraint $FV(B)\subseteq \Sigma$ \footnote{The rule is parameterized by different $B$ formalism.}, i.e., all free variables in $B$ are bounded by $\Sigma$.
Note that the two branches of \rulelab{TIf} have the same output $\Sigma'$, i.e., the qubit manipulations in the two branches need to have the same effect.
If one branch has a measurement on a qubit, the other branch must have the same qubit measurement.

\section{Additional Rules for Translation from \pqasm to \sqir}\label{sec:vqir-compilationa}

\begin{figure}[h]
{\small
  \begin{mathpar}
    \inferrule[CRy]{}{\Xi \vdash \iry{r}{q} \gg (\gamma,\textcolor{blue}{\iry{r}{(\Xi(q))}})}
    
    \inferrule[CCU]{\Xi\vdash \instr \gg \textcolor{blue}{\epsilon}\\
      \textcolor{blue}{\epsilon' = \texttt{ctrl}(\gamma(q),\epsilon)}}{\Xi\vdash\ictrl{p}{\instr} \gg \textcolor{blue}{\epsilon'}}    
   
       \inferrule[CNext]{ \Xi \vdash \iota \gg \textcolor{blue}{\epsilon}}{(n,\Xi,\iota) \gg (n,\Xi,\textcolor{blue}{\epsilon})}             
    
     \inferrule[CHad]{}{(n,\Xi, \ihad{q}) \gg (n,\Xi, \ihad{\Xi(q)})}             
               
    \inferrule[CSeq]{ (n,\Xi, e_1) \gg (n',\Xi',\textcolor{blue}{\chi_1}) \\ (n',\Xi',e_2) \gg (n'',\Xi'',\textcolor{blue}{\chi_2})}{(n,\Xi,\sseq{e_1}{e_2}) \to (n'',\Xi'',\textcolor{blue}{\sseq{\chi_1}{\chi_2}})}             
  
    \inferrule[CNew]{\Xi'=\Xi[\forall q\in\overline{q}\,.\,q \mapsto \slen{\Xi}+\cn{ind}(\overline{q},q)]}{(n,\Xi, \inew{\overline{q}}) \gg (n+\slen{\overline{q}},\Xi', \sskip)}             
  
      \inferrule[CMea]{(n,\Xi \backslash \overline{q},e) \gg (n',\Xi',\textcolor{blue}{\chi})}{(n,\Xi,\smea{x}{\overline{q}}{e}) \to (n',\Xi',\textcolor{blue}{{\mathpzc{M}(\overline{q})};\chi})}

  \end{mathpar}
}
\caption{Select \pqasm to \sqir translation rules (\sqir circuits are marked blue). $\slen{\Xi}$: the length of $\Xi$; $\cn{ind}(\overline{q},q)$: the index of $q$ in array $\overline{q}$. $\textcolor{blue}{{\mathpzc{M}(\overline{q})}}$ repeats $\slen{\overline{q}}$ times on measuring qubit array $\overline{q}$.}
\label{fig:compile-vqir}
\end{figure}

\Cref{fig:compile-vqir} shows additional rules, which are the program-level rules, that translate a \pqasm program to a hybrid Rocq program including \sqir circuits with possible measurement operations.
Rule \rulelab{CNext} connects the instruction and the program-level translations.
Rule \rulelab{CHad} translates a Hadamard operation to a \sqir Hadamard gate, while rule \rulelab{CSeq} translates a sequencing operator.

Rule \rulelab{CNew} is one of the program-level rules, translating a qubit creation operation in \pqasm.
In \sqir, there is no qubit creation, in the sense that every qubit is assumed to exist in the first place.
The translation essentially translates the operation to a SKIP operation in \sqir and increments the qubit heap size in the generated \sqir program.
Note that the qubit size in the translation is always incrementing, i.e., a quantum measurement does not remove qubits but just makes some qubits inaccessible.
Rule \rulelab{CMea} translates a \pqasm measurement operation to \sqir measurements by repeatedly measuring out qubits in $\overline{q}$.
The translation removes measured qubits from $\Xi$, but it does not modify the qubit size.

\section{OQASM: An Assembly Language for Quantum Oracles}
\label{sec:vqir}

The \oqasm expression $\mu$ used in \Cref{fig:pqasm} places an additional wrapper on top of the \oqasm expression $\iota$ given in \Cref{fig:vqir}. Here, we first provide a step-by-step explanation of \oqasm.

\oqasm is designed to express efficient quantum
oracles that can be easily tested and, if desired, proved
correct.
\oqasm operations leverage both the standard
computational basis and an alternative basis connected by the quantum
Fourier transform (QFT). 
\oqasm's type system tracks the bases of variables in
\oqasm programs, forbidding operations that would introduce
entanglement. \oqasm states are therefore efficiently
represented, so programs can be effectively tested and are simpler to
verify and analyze. In addition, \oqasm uses \emph{virtual qubits}
to support \emph{position shifting operations}, which support
arithmetic operations without introducing extra gates during
translation. All of these features are novel to quantum assembly
languages. 

This section presents \oqasm states and the language's syntax,
semantics, typing, and soundness results. As a running example, the QFT
adder~\cite{qft-adder} is shown in \Cref{fig:circuit-exampleb}. The Rocq
function \coqe{rz_adder} generates an \oqasm program that adds two
natural numbers \coqe{a} and \coqe{b}, each of length \coqe{n} qubits.

\begin{figure*}[h]
  \centering
  \begin{tabular}{c @{\qquad} c}
  \begin{minipage}[b]{.45\textwidth}
  {\scriptsize
    \Qcircuit @C=0.5em @R=0.75em {
      \lstick{\ket{a_{n-1}}} & \qw & \ctrl{5} & \qw & \qw & \qw & \qw & \qw & \qw & \qw & \rstick{\ket{a_{n-1}}} \\
      \lstick{\ket{a_{n-2}}} & \qw & \qw & \ctrl{4} & \qw & \qw & \qw & \qw & \qw & \qw & \rstick{\ket{a_{n-2}}}\\
      \lstick{\vdots} & & & & & & & & & & \rstick{\vdots} \\
      \lstick{} & & & & & & & & & & \\
      \lstick{\ket{a_0}} & \qw & \qw & \qw & \qw & \qw & \qw & \ctrl{1} & \qw & \qw & \rstick{\ket{a_0}} \\
      \lstick{\ket{b_{n-1}}} & \multigate{5}{\texttt{QFT}} & \gate{\texttt{SR 0}} & \multigate{3}{\texttt{SR 1}} & \qw & \qw & \qw & \multigate{5}{\texttt{SR (n-1)}} & \multigate{5}{\texttt{QFT}^{-1}} & \qw & \rstick{\ket{a_{n-1} + b_{n-1}}} \\
      \lstick{} & & & & & \dots & & & & \\
      \lstick{\ket{b_{n-2}}} & \ghost{\texttt{QFT}} & \qw  &  \ghost{\texttt{SR 1}} & \qw & \qw & \qw & \ghost{\texttt{SR (n-1)}} & \ghost{\texttt{QFT}^{-1}} & \qw & \rstick{\ket{a_{n-2} + b_{n-2}}} \\
      \lstick{\vdots} & & & & & & & & & & \rstick{\vdots} \\
      \lstick{} & & & & & & & & & & \\
      \lstick{\ket{b_0}} & \ghost{\texttt{QFT}} & \qw & \qw & \qw & \qw & \qw & \ghost{\texttt{SR (n-1)}} & \ghost{\texttt{QFT}^{-1}}  & \qw & \rstick{\ket{a_0 + b_0}} 
      }
      }
  \subcaption{Quantum circuit}
  \end{minipage}
  &
  \begin{minipage}[b]{.4\textwidth}
  \begin{coq}
  Fixpoint rz_adder' (a b:var) (n:nat) 
    := match n with 
       | 0 => ID (a,0)
       | S m => CU (a,m) (SR m b); 
                rz_adder' a b m
       end.
  Definition rz_adder (a b:var) (n:nat) 
    := Rev a ; Rev b ; $\texttt{QFT}$ b ;
       rz_adder' a b n;
       $\texttt{QFT}^{-1}$ b; Rev b ; Rev a.
  \end{coq}
  \subcaption{\oqasm metaprogram (in Rocq)}
  \end{minipage}
  \end{tabular}
  \caption{Example \oqasm program: QFT-based adder}
  \label{fig:circuit-exampleb}
  \end{figure*}

\subsection{OQASM States} \label{sec:oqasm-states}

\begin{figure}[h]
  \small
  \[\hspace*{-0.5em}
\begin{array}{l c c l}
      \text{Bit} & b & ::= & 0 \mid 1 \\
      \text{Natural number} & n & \in & \mathbb{N} \\
      \text{Real} & r & \in & \mathbb{R}\\
      \text{Phase} & \alpha(r) & ::= & e^{2\pi i r} \\
      \text{Basis} & \tau & ::= & \texttt{Nor} \mid \texttt{Phi}\;n \\
      \text{Unphased qubit} & \overline{q} & ::= & \ket{b} ~~\mid~~ \qket{r} \\
      \text{Qubit} & q & ::= &\alpha(r) \overline{q}\\
      \text{State (length $d$)} & \varphi & ::= & q_1 \otimes q_2 \otimes \cdots \otimes q_d
    \end{array}
  \]
  \caption{\oqasm state syntax}
  \label{fig:vqir-state}
\end{figure}

An \oqasm program state is represented according to the grammar in
\Cref{fig:vqir-state}. A state $\varphi$ of $d$ qubits is 
a length-$d$ tuple of qubit values $q$; the state models the tensor
product of those values. This means that the size of $\varphi$ is
$O(d)$ where $d$ is the number of qubits. A $d$-qubit state in a
language like \sqir is represented as a length $2^d$ vector of complex
numbers, $O(2^d)$ in the number of qubits. Our linear state
representation is possible because applying for any well-typed \oqasm
program on any well-formed \oqasm state never causes qubits to be
entangled.

A qubit value $q$ has one of two forms $\overline{q}$, scaled by a
global phase $\alpha(r)$. The two forms depend on the \emph{basis} $\tau$ that the qubit is in---it could be either \texttt{Nor} or \texttt{Phi}. A \texttt{Nor} qubit has form
$\ket{b}$ (where $b \in \{ 0, 1 \}$), which is a
computational basis value. 
A \texttt{Phi} qubit has the form $\qket{r} = \frac{1}{\sqrt{2}}(\ket{0}+\alpha(r)\ket{1})$, which is a value of the (A)QFT basis.
The number $n$ in \texttt{Phi}$\;n$ indicates the precision of the state $\varphi$.
As shown by~Beauregard \cite{qft-adder}, arithmetic on the computational basis can sometimes be more efficiently carried out on the QFT basis, which leads to the use of quantum operations (like QFT) when implementing circuits with classical input/output behavior.

\subsection{OQASM Syntax, Typing, and Semantics}\label{sec:oqasm-syn}

\begin{figure}[t]
\begin{minipage}[t]{0.5\textwidth}
{\small \centering
  \[ \hspace*{-0.8em}
\begin{array}{llcl}
      \text{Position} & p & ::= & (x,n) \qquad   \text{Nat. Num}~n
                                  \qquad   \text{Variable}~x\\
      \text{Instruction} & \instr & ::= & \iskip{p} \mid \inot{p} \mid \irz[\lbrack -1 \rbrack]{n}{p} \mid \iseq{\instr}{\instr}\\
                & & \mid &  \isr[\lbrack -1 \rbrack]{n}{x} \mid \iqft[\lbrack -1 \rbrack]{n}{x} \mid \ictrl{p}{\instr}  \\
                      & & \mid & \ilshift{x} \mid \irshift{x} \mid \irev{x} 
    \end{array}
  \]
}
  \caption{\oqasm syntax. For an operator \texttt{OP}, $\texttt{OP}^{\lbrack -1 \rbrack}$ indicates that the operator has a built-in inverse available.}
  \label{fig:vqir}
\end{minipage}
\hfill
\begin{minipage}[t]{0.38\textwidth}
{\scriptsize
\centering
\begin{tabular}{c@{$\quad=\quad$}c}
  \begin{minipage}{0.3\textwidth}

  \Qcircuit @C=0.5em @R=0.5em {
    \lstick{} & \qw     & \multigate{4}{\texttt{SR m}} & \qw & \qw \\
    \lstick{} & \qw     & \ghost{\texttt{SR m}}           & \qw & \qw \\
    \lstick{} & \vdots & & \vdots & \\
    \lstick{} & & & & \\
    \lstick{} & \qw     & \ghost{\texttt{SR m}}           & \qw  & \qw
    }
  \end{minipage} & 
  \begin{minipage}{0.3\textwidth}
  \small
  \Qcircuit @C=0.5em @R=0.5em {
    \lstick{} & \qw     & \gate{\texttt{RZ (m+1)}} & \qw & \qw \\
    \lstick{} & \qw     & \gate{\texttt{RZ m}}          & \qw & \qw \\
    \lstick{} & & \vdots & & \\
    \lstick{} & & & & \\
    \lstick{} & & & & \\
    \lstick{} & \qw     & \gate{\texttt{RZ 1}}           & \qw  & \qw
    }
 
  \end{minipage} 
\end{tabular}
}
\caption{\texttt{SR} unfolds to a series of \texttt{RZ} instructions}
\label{fig:sr-meaning}
\end{minipage}
\end{figure}

\Cref{fig:vqir} presents \oqasm's syntax. An \oqasm program consists of
a sequence of instructions $\instr$. Each instruction applies an
operator to either a variable $x$, representing a group of qubits, or a \emph{position} $p$, which identifies a particular offset into a variable $x$. 

The instructions in the first row correspond to simple single-qubit
quantum gates---$\iskip{p}$, $\inot{p}$, and $\irz[\lbrack -1 \rbrack]{n}{p}$
 ---and instruction sequencing.
The instructions in the next row apply to whole variables: $\iqft{n}{x}$
applies the AQFT to variable $x$ with $n$-bit precision and
$\iqft[-1]{n}{x}$ applies its inverse.
If $n$ equals the size of $x$, then the AQFT operation is exact.
$\isr[\lbrack -1 \rbrack]{n}{x}$
applies a series of \texttt{RZ} gates (\Cref{fig:sr-meaning}). 
Operation $\ictrl{p}{\instr}$
applies instruction $\instr$ \emph{controlled} on qubit position
$p$. All of the operations in this row---\texttt{SR}, \texttt{QFT}, and \texttt{CU}---will be translated to multiple \sqir
gates. The function \coqe{rz_adder} in \Cref{fig:circuit-exampleb}(b) uses
many of these instructions; e.g., it uses \texttt{QFT} and \texttt{QFT}$^{-1}$ and applies
\texttt{CU} to the $m$th position of variable \texttt{a} to control
instruction \texttt{SR m b}.

In the last row of \Cref{fig:vqir}, instructions $\ilshift{x}$,
$\irshift{x}$, and $\irev{x}$ are \emph{position shifting operations}.
Assuming that $x$ has $d$ qubits and $x_k$ represents the $k$-th qubit
state in $x$, $\texttt{Lshift}\;x$ changes the $k$-th qubit state to
$x_{(k + 1)\mmod d}$, $\texttt{Rshift}\;x$ changes it to
$x_{(k + d - 1)\mmod d}$, and \texttt{Rev} changes it to $x_{d-1-k}$. In
our implementation, shifting is \emph{virtual}, not physical. The \oqasm
translator maintains a logical map of variables/positions to concrete
qubits and ensures that shifting operations are no-ops, introducing no extra gates.

Other quantum operations could be added to \oqasm to
allow reasoning about a larger class of quantum programs while still
guaranteeing a lack of entanglement. 

\begin{figure}[t]
\begin{minipage}[b]{0.57\textwidth}
{\scriptsize
  \begin{mathpar}
    \inferrule[X]{\Omegaty(x)=\texttt{Nor} \\ n < \Omegasz(x)}{\Sigma;\Omega \vdash \inot{(x,n)}\triangleright \Omega}
  
    \inferrule[RZ]{\Omegaty(x)=\texttt{Nor} \\ n < \Omegasz(x)}{\Sigma;\Omega \vdash \irz{q}{(x,n)} \triangleright \Omega}

    \inferrule[SR]{\Omegaty(x)=\tphi{n} \\ m < n}{\Sigma;\Omega \vdash \texttt{SR}\;m\;x\triangleright \Omega}   

    \inferrule[QFT]{\Omegaty(x)=\texttt{Nor}\\n \le \Omegasz(x)}{\Sigma; \Omega \vdash \iqft{n}{x}\triangleright \Omega[x\mapsto \tphi{n}]}    
     
    \inferrule[RQFT]{\Omegaty(x)=\tphi{n}\\n \le \Omegasz(x)}{\Sigma; \Omega \vdash \iqft[-1]{n}{x}\triangleright \Omega[x\mapsto \texttt{Nor}]}             
    
    \inferrule[CU]{\Omegaty(x)=\texttt{Nor} \\ \texttt{fresh}~(x,n)~\instr \\\\ \Sigma; \Omega\vdash \instr\triangleright \Omega \\ \texttt{neutral}(\instr)}{\Sigma; \Omega \vdash \texttt{CU}\;(x,n)\;\instr \triangleright \Omega} 
     
    \inferrule[LSH]{\Omegaty(x)=\texttt{Nor}}{\Sigma; \Omega \vdash \texttt{Lshift}\;x\triangleright \Omega}

     \inferrule[SEQ]{\Sigma; \Omega\vdash \instr_1\triangleright \Omega' \\ \Sigma; \Omega'\vdash \instr_2\triangleright \Omega''}{\Sigma; \Omega \vdash \instr_1\;;\;\instr_2\triangleright \Omega''} 
    
  \end{mathpar}
}
  \caption{Select \oqasm typing rules}
  \label{fig:exp-well-typeda}
\end{minipage}
\hfill
\hfill
\begin{minipage}[b]{0.4\textwidth}
{\footnotesize
\begin{center}\hspace*{-1em}
\begin{tikzpicture}[->,>=stealth',shorten >=1pt,auto,node distance=3.2cm,
                    semithick]
  \tikzstyle{every state}=[fill=black,draw=none,text=white]

  \node[state] (A)              {$\texttt{Nor}$};
  \node[state]         (C) [left of=A] {$\tphi{n}$};

  \path (A) edge [loop above]            node {$\Big\{\begin{array}{l}\texttt{ID},~\texttt{X},~\texttt{RZ}^{\lbrack -1 \rbrack},~\texttt{CU},\\
              \texttt{Rev},\texttt{Lshift},\texttt{Rshift}\end{array}\Big\}$} (A)
            edge   node [above] {\{$\texttt{QFT}\;n$\}} (C);
  \path (C) edge [loop above]            node {$\{\texttt{ID},~\texttt{SR}^{\lbrack -1 \rbrack}\}$} (C)
            edge  [bend right]             node {$\{\texttt{QFT}^{-1}\;n\}$} (A);
\end{tikzpicture}
\end{center}
}
\caption{Type rules' state machine}
\label{fig:state-machine}
\end{minipage}
\end{figure}

\myparagraph{Typing}
\label{sec:vqir-typing}

In \oqasm, typing is concerning a \emph{type environment}
$\Omega$ and a predefined \emph{size
  environment} $\Sigma$, which map \oqasm
variables to their basis and size (number of qubits), respectively.
The typing judgment is written $\Sigma; \Omega\vdash \instr \triangleright \Omega'$ which
states that $\instr$ is well-typed under $\Omega$ and $\Sigma$, and
transforms the variables' bases to be as in $\Omega'$ ($\Sigma$ is unchanged). 
$\Sigma$ is fixed because the number of qubits in execution is always fixed.
It is generated in the high-level language compiler, such as \sourcelang in \cite{oracleoopsla}.
The algorithm generates $\Sigma$ by taking an \sourcelang program and scanning through
all the variable initialization statements.
Select type rules are given in \Cref{fig:exp-well-typeda}; 
the rules not shown (for \texttt{ID}, \texttt{Rshift}, \texttt{Rev}, \texttt{RZ}$^{-1}$, and \texttt{SR}$^{-1}$) are similar.

The type system enforces three invariants. First, it enforces that
instructions are well-formed, meaning that gates are applied to valid
qubit positions (the second premise in \rulelab{X}) and that any control qubit is distinct from the
target(s) (the \texttt{fresh} premise in
\rulelab{CU}).  This latter property enforces the quantum
\emph{no-cloning rule}.
For example, applying the \texttt{CU} in \code{rz\_adder'} (\Cref{fig:circuit-exampleb}) is valid
because position \code{a,m} is distinct from variable \code{b}.

Second, the type system enforces that instructions leave affected
qubits on a proper basis (thereby avoiding entanglement). The
rules implement the state machine shown in
\Cref{fig:state-machine}. For example, $\texttt{QFT}\;n$ transforms a variable from \texttt{Nor} to
$\tphi{n}$ (rule \rulelab{QFT}), while $\texttt{QFT}^{-1}\;n$
transforms it from $\tphi{n}$ back to \texttt{Nor} (rule
\rulelab{RQFT}). Position shifting operations 
are disallowed on variables $x$ in
the \texttt{Phi} basis because the qubits that makeup $x$ are
internally related (see \Cref{appx:well-formed}) and cannot be rearranged. Indeed, applying a
\texttt{Lshift} and then a $\texttt{QFT}^{-1}$ on $x$ in \texttt{Phi}
would entangle $x$'s qubits.

Third, the type system enforces that the effect of position-shifting
operations can be statically tracked. The \texttt{neutral} condition of
\rulelab{CU} requires that any shifting within $\instr$ is restored by the time it
completes. 
For example, $\sseq{\ictrl{p}{(\ilshift{x})}}{\inot{(x,0)}}$ is not well-typed because knowing the final physical position of qubit $(x,0)$ would require statically knowing the value of $p$. 
On the other hand, the program $\sseq{\ictrl{c}{(\sseq{\ilshift{x}}{\sseq{\inot{(x,0)}}{\irshift{x}}})}}{\inot{(x,0)}}$ is well-typed 
because the effect of the \texttt{Lshift} is ``undone'' by an \texttt{Rshift} inside the body of the \texttt{CU}.

\myparagraph{Semantics}\label{sec:pqasm-dsem}

\begin{figure}[t]
{\scriptsize
\[
\begin{array}{lll}
\llbracket \iskip{p} \rrbracket\varphi &= \varphi\\[0.2em]

\llbracket \inot{(x, i)} \rrbracket\varphi &= \app{\uparrow\xsem(\downarrow\varphi(x,i))}{\varphi}{(x,i)}
& \texttt{where  }\xsem(\ket{0})=\ket{1} \qquad\, \xsem(\ket{1})=\ket{0}
\\[0.5em]

\llbracket \ictrl{(x,i)}{\instr} \rrbracket\varphi &=  \csem(\downarrow\varphi(x,i),\instr,\varphi)
&
\texttt{where  }
\csem({\ket{0}},{\instr},\varphi)=\varphi\quad\;\,
\csem({\ket{1}},{\instr},\varphi)=\llbracket \instr \rrbracket\varphi
\\[0.4em]

\llbracket \irz{m}{(x,i)} \rrbracket\varphi &= \app{\uparrow {\rsem}({m},\downarrow\varphi(x,i))}{\varphi}{(x,i)}
&\texttt{where  }{\rsem}(m,\ket{0})=\ket{0} \; \quad{\rsem}(m,\ket{1})=\alpha(\frac{1}{2^m})\ket{1}
\\[0.5em]

\llbracket \irz[-1]{m}{(x,i)} \rrbracket\varphi &= \app{\uparrow {\rrsem}({m},\downarrow\varphi(x,i))}{\varphi}{(x,i)}
 &\texttt{where  }{\rrsem}(m,\ket{0})=\ket{0}
\quad{\rrsem}(m,\ket{1})=\alpha(-\frac{1}{2^m})\ket{1}
\\[0.5em]

\llbracket \isr{m}{x} \rrbracket\varphi &
                                            \multicolumn{2}{l}{= \app{\uparrow \qket{r_i+\frac{1}{2^{m-i+1}}}}{\varphi}{\forall i \le m.\;(x,i)}
\qquad \texttt{when  }
\downarrow\varphi(x,i) = \qket{r_i}}\\[0.5em]

\llbracket \isr[-1]{m}{x} \rrbracket\varphi&\multicolumn{2}{l}{= \app{\uparrow \qket{r_i-\frac{1}{2^{m-i+1}}}}{\varphi}{\forall i \le m.\;(x,i)}
\qquad \texttt{when  }
\downarrow\varphi(x,i) = \qket{r_i}}\\[0.5em]

\llbracket \iqft{n}{x} \rrbracket\varphi &= \app{\uparrow\qsem(\Sigma(x),\downarrow\varphi(x),n)}{\varphi}{x}
& \texttt{where  }\qsem(i,\ket{y},n)=\bigotimes_{k=0}^{i-1}(\qket{\frac{y}{2^{n-k}}})
\\[0.5em]

\llbracket \iqft[-1]{n}{x} \rrbracket\varphi &=  \app{\uparrow\qsem^{-1}(\Sigma(x),\downarrow\varphi(x),n)}{\varphi}{x}
\\[0.5em]

\llbracket \ilshift{x} \rrbracket\varphi &= \app{{\psem}_{l}(\varphi(x))}{\varphi}{x}
&
\texttt{where  }{\psem}_{l}(q_0\otimes q_1\otimes \cdots \otimes q_{n-1})=q_{n-1}\otimes q_0\otimes q_1 \otimes \cdots
\\[0.5em]

\llbracket \irshift{x} \rrbracket\varphi &= \app{{\psem}_{r}(\varphi(x))}{\varphi}{x}
&
\texttt{where  }{\psem}_{r}(q_0\otimes q_1\otimes \cdots \otimes q_{n-1})=q_1\otimes \cdots \otimes q_{n-1} \otimes q_0
\\[0.5em]

\llbracket \irev{x} \rrbracket\varphi &= \app{{\psem}_{a}(\varphi(x))}{\varphi}{x}
&
\texttt{where  }{\psem}_{a}(q_0\otimes \cdots \otimes q_{n-1})=q_{n-1}\otimes \cdots \otimes q_0
\\[0.5em]

\llbracket \iota_1; \iota_2 \rrbracket\varphi &= \llbracket \iota_2 \rrbracket (\llbracket \iota_1 \rrbracket\varphi)
\end{array}
\]
}
{\scriptsize
$
\begin{array}{l}
\\[0.2em]
\downarrow \alpha(b)\overline{q}=\overline{q}
\qquad
\downarrow (q_1\otimes \cdots \otimes q_n) = \downarrow q_1\otimes \cdots \otimes \downarrow q_n
\\[0.2em]
\app{\uparrow \overline{q}}{\varphi}{(x,i)}=\app{\alpha(b)\overline{q}}{\varphi}{(x,i)}
\qquad \texttt{where  }\varphi(x,i)=\alpha(b)\overline{q_i}
\\[0.2em]
\app{\uparrow \alpha(b_1)\overline{q}}{\varphi}{(x,i)}=\app{\alpha(b_1+b_2)\overline{q}}{\varphi}{(x,i)}
\qquad \texttt{where  }\varphi(x,i)=\alpha(b_2)\overline{q_i}
\\[0.2em]
\app{q_x}{\varphi}{x}=\app{q_{(x,i)}}{\varphi}{\forall i < \Sigma(x).\;(x,i)}
\\[0.2em]
\app{\uparrow q_x}{\varphi}{x}=\app{\uparrow q_{(x,i)}}{\varphi}{\forall i < \Sigma(x).\;(x,i)}
\end{array}
$
}
\caption{\oqasm semantics}
  \label{fig:deno-sema}
\end{figure}

The semantics of an \oqasm program is a partial function
$\llbracket\rrbracket$ from
an instruction $\instr$ and input state $\varphi$ to an output state
$\varphi'$, written 
$\llbracket \instr \rrbracket\varphi=\varphi'$, shown in \Cref{fig:deno-sema}.

Recall that a state $\varphi$ is a tuple of $d$ qubit values,
modeling the tensor product $q_1\otimes \cdots \otimes q_d$. 
The rules implicitly map each variable $x$ to a
range of qubits in the state, e.g., 
$\varphi(x)$ corresponds to some sub-state $q_k\otimes \cdots \otimes q_{k+n-1}$
where $\Omegasz(x)=n$.
Many of the rules in \Cref{fig:deno-sema} update a \emph{portion} of a
state. $\app{q_{(x,i)}}{\varphi}{(x,i)}$ updates the $i$-th
qubit of variable $x$ to be the (single-qubit) state $q_{(x,i)}$, and
$\app{q_{x}}{\varphi}{x}$ to update variable $x$ according to
the qubit \emph{tuple} $q_x$.
$\app{\uparrow q_{(x,i)}}{\varphi}{(x,i)}$ and $\app{\uparrow q_{x}}{\varphi}{x}$ are similar, except that they also accumulate the previous global phase of $\varphi(x,i)$ (or $\varphi(x)$).
$\downarrow$ is to convert a qubit $\alpha(b)\overline{q}$ to an unphased qubit $\overline{q}$.

Function $\xsem$ updates the state of a single
qubit according to the rules for the standard quantum gate $X$.  
\texttt{cu} is a conditional operation
depending on the \texttt{Nor}-basis qubit $(x,i)$. 
\texttt{RZ} (or $\texttt{RZ}^{-1}$) is an z-axis phase rotation operation.
Since it applies to \texttt{Nor}-basis, it applies a global phase.
By \Cref{thm:sem-same}, when it is compiled to \sqir,
the global phase might be turned into a local one.
For example, to prepare the state $\sum_{j=0}^{2^n\sminus 1}(-i)^x\ket{x}$ \cite{ChildsNAND}, 
a series of Hadamard gates are applied, followed by several controlled-\texttt{RZ} gates on $x$,
where the controlled-\texttt{RZ} gates are definable by \oqasm.
\texttt{SR} (or
$\texttt{SR}^{-1}$) applies an $m+1$ series of \texttt{RZ} (or
$\texttt{RZ}^{-1}$) rotations where the $i$-th rotation
applies a phase of $\alpha({\frac{1}{2^{m-i+1}}})$
(or $\alpha({-\frac{1}{2^{m-i+1}}})$).
$\qsem$ applies an approximate quantum Fourier transform; $\ket{y}$ is an abbreviation of
$\ket{b_1}\otimes \cdots \otimes \ket{b_i}$ (assuming $\Omegasz(y)=i$) and $n$ is the degree of approximation.
If $n = i$, then the operation is the standard QFT\@.
Otherwise, each qubit in the state is mapped to $\qket{\frac{y}{2^{n-k}}}$, which is equal to $\frac{1}{\sqrt{2}}(\ket{0} + \alpha(\frac{y}{2^{n-k}})\ket{1})$ when $k < n$ and $\frac{1}{\sqrt{2}}(\ket{0} + \ket{1}) = \ket{+}$ when $n \leq k$ (since $\alpha(n) = 1$ for any natural number $n$).
$\qsem^{-1}$ is the inverse function of $\qsem$. 
Note that the input state to $\qsem^{-1}$ is guaranteed to have the form $\bigotimes_{k=0}^{i-1}(\qket{\frac{y}{2^{n-k}}})$ because it has type $\tphi{n}$.
$\psem_l$, $\psem_r$, and
$\psem_a$ are the semantics for \itext{Lshift}, 
\itext{Rshift}, and \itext{Rev}, respectively.   

\subsection{OQASM Metatheory}\label{sec:metatheory}

\myparagraph{Soundness}
The following statement is proved: well-typed \oqasm programs are well-defined; i.e., the type system is sound concerning the semantics. 
Below is the well-formedness of an \oqasm state.

\begin{definition}[Well-formed \oqasm state]\label{appx:well-formed}\rm 
  A state $\varphi$ is \emph{well-formed}, written
  $\Sigma;\Omega \vdash \varphi$, iff:
\begin{itemize}
\item For every $x \in \Omega$ such that $\Omegaty(x) = \texttt{Nor}$,
  for every $k <\Omegasz(x)$, $\varphi(x,k)$ has the form
  $\alpha(r)\ket{b}$.

\item For every $x \in \Omega$ such that $\Omegaty(x) = \tphi{n}$ and $n \le \Omegasz(x)$,
  there exists a value $\upsilon$ such that for
  every $k < \Omegasz(x)$, $\varphi(x,k)$ has the form
  $\alpha(r)\qket{\frac{\upsilon}{ 2^{n- k}}}$.\footnote{Note that $\Phi(x) = \Phi(x + n)$, where the integer $n$ refers to phase $2 \pi n$; so multiple choices of $\upsilon$ are possible.}
\end{itemize}
\end{definition}

\noindent
Type soundness is stated as follows; the proof is by induction on $\instr$ and is mechanized in Rocq.

\begin{theorem}\label{thm:type-sound-oqasm}\rm[\oqasm type soundness]
If $\Sigma; \Omega \vdash \instr \triangleright \Omega'$ and $\Sigma;\Omega \vdash \varphi$ then there exists $\varphi'$ such that $\llbracket \instr \rrbracket\varphi=\varphi'$ and $\Sigma;\Omega' \vdash \varphi'$.
\end{theorem}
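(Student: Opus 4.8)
The plan is to prove \Cref{thm:type-sound-oqasm} by structural induction on the typing derivation $\Sigma; \Omega \vdash \instr \triangleright \Omega'$, which (since typing is syntax-directed here) amounts to induction on $\instr$. For each instruction form I would take an arbitrary well-formed state $\varphi$ with $\Sigma;\Omega\vdash\varphi$ and exhibit the output $\varphi'=\llbracket\instr\rrbracket\varphi$ guaranteed by the semantics in \Cref{fig:deno-sema}, then check that $\varphi'$ satisfies the two clauses of \Cref{appx:well-formed} under $\Omega'$. The key fact making this go through is that the type rules implement the state machine of \Cref{fig:state-machine}: every well-typed instruction maps the basis of each variable to exactly the basis recorded in $\Omega'$, and the semantics only ever produces qubit values of the two permitted shapes $\alpha(r)\ket{b}$ and $\alpha(r)\qket{r'}$.

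First I would dispatch the single-qubit computational-basis gates. For \rulelab{X}, \rulelab{RZ}, and \rulelab{RZ}$^{-1}$, the premise $\Omegaty(x)=\texttt{Nor}$ and $n<\Omegasz(x)$ gives definedness (the semantics is total on \texttt{Nor} qubits), and $\Omega'=\Omega$; one checks $\xsem$, $\rsem$, $\rrsem$ each send an unphased $\ket{b}$ to an unphased $\ket{b'}$, and the $\uparrow$-update only folds in a global phase, so the \texttt{Nor} shape $\alpha(r)\ket{b}$ is preserved. For \rulelab{QFT} (and dually \rulelab{RQFT}), $\varphi(x)$ has the \texttt{Nor} form by well-formedness, so $\qsem(\Sigma(x),\downarrow\varphi(x),n)$ is defined and yields $\bigotimes_{k}\qket{\frac{y}{2^{n-k}}}$, matching the $\tphi{n}$ clause with witness $\upsilon=y$; the post-environment $\Omega[x\mapsto\tphi{n}]$ is exactly right. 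For \rulelab{SR}/\rulelab{SR}$^{-1}$, the premise $\Omegaty(x)=\tphi{n}$ with $m<n$ lets us unfold $\varphi(x,i)=\qket{r_i}$, so the semantics is defined and each qubit stays in $\qket{\cdot}$ form; $\Omega'=\Omega$ again. The shift operations \rulelab{LSH} (and \texttt{Rshift}, \texttt{Rev}) require $\texttt{Nor}$; since permuting the tuple $q_0\otimes\cdots\otimes q_{n-1}$ only reorders qubit values already of \texttt{Nor} shape, well-formedness is preserved with $\Omega'=\Omega$. The sequencing rule \rulelab{SEQ} is immediate from the two induction hypotheses, threading the intermediate environment $\Omega'$ and intermediate state.

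The main obstacle is the \rulelab{CU} case. Here I would invoke the induction hypothesis on the subderivation $\Sigma;\Omega\vdash\instr\triangleright\Omega$ (note the type system forces the body to be basis-neutral, so its pre- and post-environments coincide), but I also need the \emph{two extra side conditions} of \rulelab{CU}: $\texttt{fresh}\,(x,n)\,\instr$ guarantees $(x,n)$ is untouched by $\llbracket\instr\rrbracket$, which is what makes the conditional semantics $\csem(\downarrow\varphi(x,n),\instr,\varphi)$ well-defined as a function of the \emph{current} control bit (no self-reference, respecting no-cloning); and $\texttt{neutral}(\instr)$ ensures any internal \texttt{Lshift}/\texttt{Rshift} is undone, so the static position map is unchanged and $\varphi'$ is defined on the right domain. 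When the control bit is $\ket{0}$ the output is $\varphi$ itself, trivially well-formed; when it is $\ket{1}$ the output is $\llbracket\instr\rrbracket\varphi$, well-formed by the induction hypothesis. I would need two supporting lemmas for this: (i) a \emph{frame/locality lemma} that a well-typed instruction only modifies the variables in its footprint and leaves others (in particular the control position) unchanged, and (ii) a lemma that $\texttt{fresh}$ plus well-typedness of the body entails $\csem$ is a total function on well-formed states. With those in hand the \rulelab{CU} case closes, and since the remaining unshown rules (\texttt{ID}, \texttt{Rshift}, \texttt{Rev}) are structurally identical to cases already handled, the induction is complete. Finally I would remark that the entire argument is carried out in Rocq by induction on $\instr$, as stated.
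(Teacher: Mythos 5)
Your proposal is correct and follows essentially the same route as the paper, which states only that the proof proceeds by induction on $\instr$ and is mechanized in Rocq; your case analysis over the typing rules, the basis-preservation argument via the state machine, and the frame/freshness lemmas for \rulelab{CU} are exactly the ingredients that mechanization requires. The only detail worth tightening is the \rulelab{SR} case: \Cref{appx:well-formed} demands a \emph{single} witness $\upsilon$ shared by all qubits of $x$ in the $\tphi{n}$ clause, so you should check that adding $\frac{1}{2^{m-i+1}}$ to each $\frac{\upsilon}{2^{n-i}}$ yields a common new witness $\upsilon' = \upsilon + 2^{n-m-1}$, not merely that each qubit remains of $\qket{\cdot}$ shape.
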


\myparagraph{Algebra}
Mathematically, the set of well-formed $d$-qubit \oqasm states for a given $\Omega$ can be interpreted as a subset $\hsp{S}^d$ of a $2^d$-dimensional Hilbert space $\hsp{H}^d$ \footnote{A \emph{Hilbert space} is a vector space with an inner product that is complete with respect to the norm defined by the inner product. $\hsp{S}^d$ is a sub\emph{set}, not a sub\emph{space} of $\hsp{H}^d$ because $\hsp{S}^d$ is not closed under addition: Adding two well-formed states can produce a state that is not well-formed.}. The semantics function $\llbracket \rrbracket$ can be interpreted as a $2^d \times 2^d$ unitary matrix, as is standard when representing the semantics of programs without measurement~\cite{PQPC}.
Because \oqasm's semantics can be viewed as a unitary matrix, correctness properties extend by linearity from $\hsp{S}^d$ to $\hsp{H}^d$---an oracle that performs addition for classical \texttt{Nor} inputs will also perform addition over a superposition of \texttt{Nor} inputs. The following statement is proved: $\hsp{S}^d$ is closed under well-typed \oqasm programs.

Given a qubit size map $\Sigma$ and type environment $\Omega$, the set of \oqasm programs that are well-typed concerning $\Sigma$ and $\Omega$ (i.e., $\Sigma;\Omega \vdash \instr \triangleright \Omega'$) form an algebraic structure $(\{\instr\},\Sigma, \Omega,\hsp{S}^d)$, where $\{\instr\}$ defines the set of valid program syntax, such that there exists $\Omega'$, $\Sigma;\Omega \vdash \instr \triangleright \Omega'$ for all $\instr$ in $\{\instr\}$; $\hsp{S}^d$ is the set of $d$-qubit states on which programs $\instr\in \{\instr\}$ are run, and are well-formed ($\Sigma;\Omega \vdash \varphi$) according to \Cref{appx:well-formed}.
From the \oqasm semantics and the type soundness theorem, for all $\instr \in \{\instr\}$ and $\varphi \in \hsp{S}^d$, such that $\Sigma;\Omega \vdash \instr \triangleright \Omega'$ and $\Sigma;\Omega \vdash \varphi$, and $\llbracket \instr \rrbracket\varphi=\varphi'$, $\Sigma;\Omega' \vdash \varphi'$, and $\varphi' \in \hsp{S}^d$. Thus, $(\{\instr\},\Sigma, \Omega,\hsp{S}^d)$, where $\{\instr\}$ defines a groupoid.

The groupoid can be certainly extended to another algebraic structure $(\{\instr'\},\Sigma,\hsp{H}^d)$, where $\hsp{H}^d$ is a general $2^d$ dimensional Hilbert space $\hsp{H}^d$ and $\{\instr'\}$ is a universal set of quantum gate operations.
Clearly, the following is true: $\hsp{S}^d \subseteq \hsp{H}^d$ and $\{\instr\} \subseteq \{\instr'\}$, because sets $\hsp{H}^d$ and $\{\instr'\}$ can be acquired by removing the well-formed ($\Sigma;\Omega \vdash \varphi$) and well-typed ($\Sigma;\Omega \vdash \instr \triangleright \Omega'$) definitions for $\hsp{S}^d$ and $\{\instr\}$, respectively.
$(\{\instr'\},\Sigma,\hsp{H}^d)$ is a groupoid because every \oqasm operation is valid in a traditional quantum language like \sqir. The following two theorems are to connect \oqasm operations with operations in the general Hilbert space: 

 \begin{theorem}\label{thm:subgroupoid}\rm
   $(\{\instr\},\Sigma, \Omega,\hsp{S}^d) \subseteq (\{\instr\},\Sigma,\hsp{H}^d)$ is a subgroupoid.
 \end{theorem}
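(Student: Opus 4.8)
The plan is to verify directly that the data $(\{\instr\},\Sigma,\Omega,\hsp{S}^d)$ satisfy the subgroupoid axioms relative to the ambient groupoid $(\{\instr\},\Sigma,\hsp{H}^d)$ (whose groupoid structure is taken as given, as noted just above the statement): (i) the object set is included, $\hsp{S}^d\subseteq\hsp{H}^d$; (ii) the morphism set is included and the action of the small structure is the restriction of the action of the large one; (iii) the small structure contains the identity morphism at each of its objects; (iv) it is closed under composition; and (v) it is closed under inverses. Given \oqasm type soundness, items (i), (iii), (iv) are essentially immediate and (ii) is a direct corollary, so the content of the proof concentrates in (v).

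For (i), $\hsp{S}^d$ is by construction $\hsp{H}^d$ cut down by the well-formedness predicate $\Sigma;\Omega\vdash\varphi$, so the inclusion is definitional. For (ii), I would invoke \Cref{thm:type-sound-oqasm}: if $\Sigma;\Omega\vdash\instr\triangleright\Omega'$ and $\Sigma;\Omega\vdash\varphi$ then $\llbracket\instr\rrbracket\varphi$ is defined and $\Sigma;\Omega'\vdash\llbracket\instr\rrbracket\varphi$, i.e.\ $\llbracket\instr\rrbracket$ carries $\hsp{S}^d$ into $\hsp{S}^d$; and by the Algebra paragraph this partial function is precisely the restriction to $\hsp{S}^d$ of the $2^d\times 2^d$ unitary matrix realizing $\instr$ on $\hsp{H}^d$, so the two actions agree on common objects, which is exactly what a subcategory inclusion requires. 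For (iii), $\iskip{p}$ (equivalently $\texttt{ID}$) is well-typed under every $\Omega$ with $\llbracket\iskip{p}\rrbracket\varphi=\varphi$. For (iv), rule \rulelab{SEQ} gives $\iseq{\instr_1}{\instr_2}\in\{\instr\}$ whenever $\instr_1,\instr_2$ are composable and well-typed, and $\llbracket\iseq{\instr_1}{\instr_2}\rrbracket=\llbracket\instr_2\rrbracket\circ\llbracket\instr_1\rrbracket$ by the semantics.

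The crux is (v). I would define an inversion operator $\instr\mapsto\instr^{-1}$ by structural recursion on \oqasm syntax: $\texttt{ID}$, $\texttt{X}$, and $\texttt{Rev}$ are self-inverse; $\texttt{RZ}\,n$, $\texttt{SR}\,n$, $\texttt{QFT}\,n$ swap with their built-in inverses $\texttt{RZ}^{-1}\,n$, $\texttt{SR}^{-1}\,n$, $\texttt{QFT}^{-1}\,n$ (and conversely); $\texttt{Lshift}$ and $\texttt{Rshift}$ swap; $\ictrl{p}{\instr}$ maps to $\ictrl{p}{(\instr^{-1})}$; and $\iseq{\instr_1}{\instr_2}$ maps to $\iseq{\instr_2^{-1}}{\instr_1^{-1}}$. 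Then I would prove, by induction on the typing derivation, two statements: (a) if $\Sigma;\Omega\vdash\instr\triangleright\Omega'$ then $\Sigma;\Omega'\vdash\instr^{-1}\triangleright\Omega$ --- the typing rules traverse exactly the state machine of \Cref{fig:state-machine}, and every arrow has its reverse present as a typed operation; and (b) for every $\varphi$ with $\Sigma;\Omega\vdash\varphi$, $\llbracket\instr^{-1}\rrbracket(\llbracket\instr\rrbracket\varphi)=\varphi$ and symmetrically, which then upgrades to $\llbracket\instr^{-1}\rrbracket=\llbracket\instr\rrbracket^{-1}$ on all of $\hsp{H}^d$ by linearity, since both sides are unitary. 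Combining (i)--(v) yields the subgroupoid property.

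The main obstacle I expect is the $\ictrl{p}{\instr}$ case of (v). Inverting a controlled block requires $\instr^{-1}$ to again satisfy the $\texttt{fresh}\,(x,n)\,\instr$ and $\texttt{neutral}(\instr)$ side-conditions of rule \rulelab{CU}: $\texttt{fresh}$ is preserved because inversion changes neither the variables nor the positions that appear, but $\texttt{neutral}$ --- that every position shift inside $\instr$ is undone by the end of $\instr$ --- must be shown stable under the order-reversing, $\texttt{Lshift}/\texttt{Rshift}$-swapping recursion, which is the delicate bookkeeping step. A second subtle point is the $\texttt{QFT}$/$\texttt{QFT}^{-1}$ case of (b): one must use well-formedness (with $\Omegaty(x)=\tphi{n}$) to know the sub-state on $x$ has the exact shape $\bigotimes_k \qket{\frac{\upsilon}{2^{n-k}}}$ on which $\qsem^{-1}$ is genuinely inverse to $\qsem$, because $\llbracket\cdot\rrbracket$ is only a partial function and it is exactly these shape invariants that make the inversion both total and correct.
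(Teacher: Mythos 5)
Your proposal is correct and its decomposition matches the paper's: the object/morphism inclusions and closure of $\hsp{S}^d$ come from type soundness (\Cref{thm:type-sound-oqasm}), identities and composition come from $\texttt{ID}$ and rule \rulelab{SEQ}, and your inversion operator is exactly the paper's $\xrightarrow{\text{inv}}$ relation of \Cref{fig:exp-reversed-fun}, whose type-reversibility and semantic-inverse properties the paper states separately as \Cref{thm:reversibility}. You also correctly flag the two genuinely delicate points --- preservation of the $\texttt{neutral}$/$\texttt{fresh}$ side-conditions of \rulelab{CU} under the order-reversing inversion, and the reliance on the $\tphi{n}$ shape invariant to make $\qsem^{-1}$ a genuine inverse of $\qsem$ on well-formed states.

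The one place your route diverges in emphasis: the paper officially discharges \Cref{thm:subgroupoid} (and \Cref{thm:sem-same}) as corollaries of the certified \oqasm-to-\sqir compilation correctness theorem, which is what licenses treating $\llbracket\instr\rrbracket$ as the restriction of a $2^d\times 2^d$ unitary on $\hsp{H}^d$ in the first place; the closure argument itself is then a two-line appeal to type soundness. You instead verify the groupoid axioms directly from the operational semantics and typing rules, invoking ambient unitarity only for the extension-by-linearity step. Your version is more self-contained but re-derives inversion machinery the paper has already packaged elsewhere; the paper's version is shorter because the agreement between the partial semantics and the ambient unitary, and the mechanized bookkeeping for inversion, are imported from the compiler correctness development. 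Be slightly careful in your step (ii) not to assume the "restriction of a unitary" fact before establishing it --- as stated it leans on the Algebra paragraph, which in the paper is itself justified by the compilation correctness theorem rather than being definitional.
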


\begin{theorem}\label{thm:sem-same}\rm
Let $\ket{y}$ be an abbreviation of $\bigotimes_{m=0}^{d-1} \alpha(r_m) \ket{b_m}$ for $b_m \in \{0,1\}$.
If for every $i\in [0,2^d)$, $\llbracket \instr \rrbracket\ket{y_i}=\ket{y'_i}$, then $\llbracket \instr \rrbracket (\sum_{i=0}^{2^d-1} \ket{y_i})=\sum_{i=0}^{2^d-1} \ket{y'_i}$.
\end{theorem}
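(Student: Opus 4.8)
The plan is to reduce the claim to the single fact that, for a well-typed measurement-free program $\instr$, the partial function $\llbracket \instr \rrbracket$ of \Cref{fig:deno-sema} is the restriction to well-formed states of one fixed $2^d\times 2^d$ linear operator on $\hsp{H}^d$ (the ``$\llbracket\rrbracket$ is a unitary matrix'' statement alluded to in the Algebra discussion); once that is established, the theorem is just finite additivity of a linear map. First I would fix the vector encoding of an \oqasm state: identify a $d$-qubit state $q_1\otimes\cdots\otimes q_d$ with the vector of $\hsp{H}^d$ obtained by tensoring the single-qubit vectors and multiplying the global phases through, so that the encoding is multiplicative over $\otimes$ and, in particular, each $\ket{y_i}$ from the statement is the encoding of the corresponding \texttt{Nor}-form state.

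The core lemma, proved by structural induction on $\instr$, is: if $\Sigma;\Omega\vdash\instr\triangleright\Omega'$ then there is a linear map $M_\instr:\hsp{H}^d\to\hsp{H}^d$ such that for every well-formed $\varphi$ with $\Sigma;\Omega\vdash\varphi$, the encoding of $\llbracket\instr\rrbracket\varphi$ equals $M_\instr$ applied to the encoding of $\varphi$. The base cases read $M_\instr$ off \Cref{fig:deno-sema}: $\iskip{p}$ is the identity; $\inot{p}$, $\irz{m}{p}$, $\irz[-1]{m}{p}$ act on a single qubit by the standard $X$ / phase matrices tensored with the identity on the other positions (the $\downarrow/\uparrow$ bookkeeping is exactly multiplication by the accumulated global phase, a scalar, hence linear); $\ictrl{p}{\instr}$ is the block-diagonal operator $\ketbra{0}{0}\otimes I+\ketbra{1}{1}\otimes M_\instr$ on the control qubit $p$, well defined because well-typedness and the \texttt{fresh} premise of rule \rulelab{CU} in \Cref{fig:exp-well-typeda} guarantee $p$ is untouched by $\instr$, and using the inductive hypothesis for $M_\instr$; $\isr{m}{x}$ and $\isr[-1]{m}{x}$ are finite products of the single-qubit phase matrices; $\iqft{n}{x}$ and $\iqft[-1]{n}{x}$ are the (approximate) QFT matrix and its inverse on the qubit block for $x$, matching $\qsem$/$\qsem^{-1}$; and $\ilshift{x}$, $\irshift{x}$, $\irev{x}$ are qubit-permutation matrices, matching $\psem_l,\psem_r,\psem_a$. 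The only structural case, $\iseq{\instr_1}{\instr_2}$, sets $M_{\iseq{\instr_1}{\instr_2}}=M_{\instr_2}M_{\instr_1}$, invoking \Cref{thm:type-sound-oqasm} to know the intermediate state is well-formed so the second inductive hypothesis applies.

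Given the lemma, the theorem is immediate. From $\llbracket\instr\rrbracket\ket{y_i}=\ket{y'_i}$ we get $M_\instr\ket{y_i}=\ket{y'_i}$ for every $i\in[0,2^d)$, so
\[
\llbracket \instr \rrbracket\Big(\sum_{i=0}^{2^d-1}\ket{y_i}\Big)\;=\;M_\instr\sum_{i=0}^{2^d-1}\ket{y_i}\;=\;\sum_{i=0}^{2^d-1}M_\instr\ket{y_i}\;=\;\sum_{i=0}^{2^d-1}\ket{y'_i},
\]
where the first equality is the definition of the lifted (Hilbert-space) semantics on a possibly-entangled vector and the second is linearity of $M_\instr$. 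No orthogonality or linear independence of the $\ket{y_i}$ is used; the step is purely finite additivity.

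The hard part will be the base-case bookkeeping in the lemma: verifying that every clause of \Cref{fig:deno-sema}, with its $\downarrow$/$\uparrow$ global-phase manipulations, the variable-level action $\qsem$ of the (approximate) QFT, and the virtual position-shifting operations, genuinely is the restriction to well-formed states of a single fixed linear operator, and that these operators tensor/compose so the global phase stays a scalar and never gets entangled with a subsystem. Once that correspondence is pinned down — which is precisely the assertion that \oqasm's measurement-free semantics is unitary — the rest is routine. A minor additional point to record: the statement only quantifies over \texttt{Nor}-form kets $\ket{y}$, so the lemma need only produce an $M_\instr$ that is consistent on the span of such kets; any intermediate \texttt{Phi}-basis states created by \texttt{QFT} inside $\instr$ are absorbed entirely within the induction and never surface in the hypothesis or conclusion of \Cref{thm:sem-same}.
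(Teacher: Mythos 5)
Your proposal is correct, but it takes a genuinely different route from the paper. The paper does not prove \Cref{thm:sem-same} by induction on $\instr$ at all: it obtains both \Cref{thm:subgroupoid} and \Cref{thm:sem-same} as corollaries of the already-mechanized compilation correctness theorem from \oqasm to \sqir (from the VQO development). There, every well-typed measurement-free $\instr$ compiles to a \sqir circuit whose denotation is a fixed $2^d\times 2^d$ unitary matrix, and compiler correctness identifies $\llbracket\instr\rrbracket$ with the action of that matrix on (the encodings of) well-formed states; linearity of matrix--vector multiplication then yields the superposition identity exactly as in your final display. Your core lemma --- that $\llbracket\instr\rrbracket$ is the restriction to well-formed states of one linear operator $M_\instr$, established by structural induction with the \texttt{fresh} premise handling \texttt{CU}, type soundness handling sequencing, and the $\downarrow/\uparrow$ phase bookkeeping checked to be scalar multiplication --- is precisely the content that the compiler correctness proof already certifies, so you are re-deriving it from the denotational clauses rather than importing it. What your route buys is self-containedness: the theorem becomes independent of the translation to \sqir and of any claims about the target semantics. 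What the paper's route buys is economy and trust: the linearity/unitarity fact is obtained once, mechanized, from the compiler proof, rather than re-verified clause by clause (your own assessment that the base-case bookkeeping is "the hard part" is exactly why the paper chooses not to redo it). Both arguments need the implicit well-typedness hypothesis on $\instr$, which the surrounding algebra discussion supplies; neither uses orthogonality of the $\ket{y_i}$, as you correctly note.
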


The following theorems are proved as corollaries of the compilation correctness theorem from \oqasm to \sqir (\cite{oracleoopsla}). 
\Cref{thm:subgroupoid} suggests that the space $\hsp{S}^d$ is closed under the application of any well-typed \oqasm operation.
\Cref{thm:sem-same} says that \oqasm oracles can be safely applied to superpositions over classical states.\footnote{Note that a superposition over classical states can describe \emph{any} quantum state, including entangled states.}

\begin{figure}[t]
  {\scriptsize
    \begin{mathpar}
      \inferrule[ ]{}{\inot{(x,n)}\xrightarrow{\text{inv}} \inot{(x,n)}}
    
      \inferrule[  ]{}{\texttt{SR}\;m\;x\xrightarrow{\text{inv}} \texttt{SR}^{-1}\;m\;x}
  
      \inferrule[ ]{}{\iqft{n}{x} \xrightarrow{\text{inv}}  \iqft[-1]{n}{x}}   
  
      \inferrule[ ]{}{\texttt{Lshift}\;x\xrightarrow{\text{inv}} \texttt{Rshift}\;x} 
       
      \inferrule[ ]{\instr \xrightarrow{\text{inv}} \instr'}{\texttt{CU}\;(x,n)\;\instr \xrightarrow{\text{inv}} \texttt{CU}\;(x,n)\;\instr'} 
  
      \inferrule[ ]{\instr_1 \xrightarrow{\text{inv}} \instr'_1 \\ \instr_2 \xrightarrow{\text{inv}} \instr'_2}{\instr_1\;;\;\instr_2\xrightarrow{\text{inv}} \instr'_2\;;\;\instr'_1} 
      
    \end{mathpar}
  }
  \caption{Select \oqasm inversion rules}
  \label{fig:exp-reversed-fun}
\end{figure}

\begin{figure}[t]
{\tiny
\hspace*{-2em}
\centering
\begin{tabular}{c@{$\quad=\quad$}c@{\qquad}c@{$\quad=\quad$}c}
  \begin{minipage}{0.25\textwidth}
  \footnotesize
  \Qcircuit @C=0.25em @R=0.35em {
    & \qw & \multigate{3}{(x+a)_n} & \qw \\
    & \vdots & & \\
    & & & \\
    & \qw & \ghost{(x+a)_n} & \qw \\
    }
  \end{minipage}
&
\begin{minipage}{.45\textwidth}
  \footnotesize
  \Qcircuit @C=0.35em @R=0.55em {
     & \qw & \gate{\texttt{SR}\;0} & \multigate{3}{\texttt{SR}\;1} & \qw & \qw & \qw & \multigate{5}{\texttt{SR}\;(n-1)} & \qw  \\
      & & & & & \dots & & &  \\
      & \qw & \qw  &  \ghost{\texttt{SR}\; 1} & \qw & \qw & \qw & \ghost{\texttt{SR}\;(n-1)} & \qw \\
      & & & & & & & &  \\
     & & & & & & & &  \\
   & \qw & \qw & \qw & \qw & \qw & \qw & \ghost{\texttt{SR}\;(n-1)}  & \qw 
    }
\end{minipage}
&  
\begin{minipage}{0.25\textwidth}
  \footnotesize
  \Qcircuit @C=0.25em @R=0.35em {
    & \qw & \multigate{3}{(x-a)_n} & \qw \\
    & \vdots & & \\
    & & & \\
    & \qw & \ghost{(x+a)_n} & \qw \\
    }
  \end{minipage}
&
\begin{minipage}{.45\textwidth}
  \footnotesize
  \Qcircuit @C=0.35em @R=0.55em {
    & \qw & \multigate{5}{\texttt{SR}^{-1} (n-1)} & \qw & \qw & \qw & \multigate{3}{\texttt{SR}^{-1} 1} & \gate{\texttt{SR}^{-1} 0} & \qw \\
    &     &                                  &     & \dots &   &                              &                      &   \\
    & \qw & \ghost{\texttt{SR}^{-1} (n-1)}        & \qw & \qw   & \qw & \ghost{\texttt{SR}^{-1} 1} & \qw & \qw  \\
      & & & & & & & &  \\
     & & & & & & & &  \\
    & \qw & \ghost{\texttt{SR}^{-1} (n-1)} & \qw & \qw & \qw & \qw & \qw & \qw 
    }
\end{minipage}
\end{tabular}
}
\caption{Addition/subtraction circuits are inverses}
\label{fig:circuit-add-sub}
\end{figure}

\oqasm programs are easily invertible, as shown by the rules in \Cref{fig:exp-reversed-fun}.
This inversion operation is useful for constructing quantum oracles; for example, the core logic in the QFT-based subtraction circuit is just the inverse of the core logic in the addition circuit (\Cref{fig:exp-reversed-fun}).
This allows us to reuse the proof of addition in the proof of subtraction.
The inversion function satisfies the following properties:

 \begin{theorem}\label{thm:reversibility}\rm[Type reversibility]
    For any well-typed program $\instr$, such that $\Sigma; \Omega \vdash \instr \triangleright \Omega'$, its inverse $\instr'$, where $\instr \xrightarrow{\text{inv}} \instr'$, is also well-typed and $\Sigma;\Omega' \vdash \instr' \triangleright \Omega$. Moreover, $\llbracket \instr ; \instr' \rrbracket \varphi=\varphi$.
 \end{theorem}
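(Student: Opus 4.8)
The plan is to prove both halves of Theorem~\ref{thm:reversibility} simultaneously by induction on the derivation of $\instr \xrightarrow{\text{inv}} \instr'$ (equivalently, on the structure of $\instr$), since the inversion relation of \Cref{fig:exp-reversed-fun} is syntax-directed and each instruction form matches exactly one inversion rule. To make the semantic claim $\llbracket \iseq{\instr}{\instr'} \rrbracket \varphi = \varphi$ go through the induction — recall the denotational semantics is a \emph{partial} function — I would strengthen the inductive statement to carry well-formedness: assuming $\Sigma;\Omega \vdash \varphi$ and $\Sigma;\Omega \vdash \instr \triangleright \Omega'$, show that $\llbracket \instr \rrbracket \varphi$ is defined and equals some $\varphi'$ with $\Sigma;\Omega' \vdash \varphi'$ (this is exactly \Cref{thm:type-sound-oqasm}, which I may cite), and only then that $\llbracket \instr' \rrbracket \varphi' = \varphi$. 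Threading the type environment through the induction is essential, because several of the nominal ``inverse'' semantic functions are genuine inverses only on states of the appropriate basis — e.g., $\qsem^{-1}$ inverts $\qsem$ only on states of the form $\bigotimes_k \qket{y/2^{n-k}}$, which is precisely what the $\tphi{n}$ type guarantees.

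For the typing half, the state machine of \Cref{fig:state-machine} does most of the work: each inversion rule pairs an operation with one that traverses the same edge in reverse, so if $\instr$ sends $\Omega$ to $\Omega'$ then $\instr'$ sends $\Omega'$ back to $\Omega$. Concretely, $\inot{p}$ and $\irev{x}$ are self-inverse and fix \texttt{Nor}; $\irz{m}{p}$ / $\irz[-1]{m}{p}$ and $\texttt{SR}\,m\,x$ / $\texttt{SR}^{-1}\,m\,x$ fix \texttt{Nor} resp.\ $\tphi{n}$; $\iqft{n}{x}$ goes \texttt{Nor}$\to\tphi{n}$ while $\iqft[-1]{n}{x}$ goes $\tphi{n}\to\texttt{Nor}$; and $\ilshift{x}$ / $\irshift{x}$ fix \texttt{Nor}. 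The semantic half of these base cases is a direct calculation from \Cref{fig:deno-sema}: $\xsem\circ\xsem = \mathrm{id}$; $\rrsem(m,\cdot)\circ\rsem(m,\cdot)$ multiplies the accumulated global phase by $\alpha(1/2^m)\,\alpha(-1/2^m)=\alpha(0)=1$, and similarly for $\texttt{SR}$; $\qsem^{-1}\circ\qsem=\mathrm{id}$ on \texttt{Nor} states by definition of $\qsem^{-1}$; and ${\psem}_{r}\circ{\psem}_{l}$ resp.\ ${\psem}_{a}\circ{\psem}_{a}$ are the identity permutation. The only bookkeeping subtlety here is that the $\uparrow/\downarrow$ machinery accumulates global phases, so one must verify the composite phase is literally $\alpha(0)$, not merely a phase that acts trivially on some basis state.

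The two inductive cases are $\ictrl{(x,n)}{\instr}$ and $\iseq{\instr_1}{\instr_2}$. For sequencing, $(\iseq{\instr_1}{\instr_2})' = \iseq{\instr_2'}{\instr_1'}$; composing the two IH typing derivations gives $\Omega'' \vdash \iseq{\instr_2'}{\instr_1'} \triangleright \Omega$, and semantically I set $\varphi_1 = \llbracket \instr_1 \rrbracket \varphi$ (well-formed by type soundness), apply the IH for $\instr_2$ to collapse $\llbracket \instr_2' \rrbracket(\llbracket \instr_2 \rrbracket \varphi_1) = \varphi_1$, and finish with the IH for $\instr_1$. For the control case, $(\ictrl{(x,n)}{\instr})' = \ictrl{(x,n)}{\instr'}$ where $\instr \xrightarrow{\text{inv}} \instr'$; I would observe that inversion preserves the set of qubits an instruction touches and negates its net position-shift, so it preserves the \texttt{fresh} and \texttt{neutral} side conditions of rule \rulelab{CU}; since that rule keeps the type environment fixed ($\Omega \vdash \instr \triangleright \Omega$), the IH applies verbatim to yield $\Omega \vdash \ictrl{(x,n)}{\instr'} \triangleright \Omega$. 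Semantically I case on $\downarrow\varphi(x,n)$: if $\ket{0}$, both controlled operations are no-ops; if $\ket{1}$, the first yields $\llbracket \instr \rrbracket \varphi$, the \texttt{fresh} condition guarantees position $(x,n)$ still reads $\ket{1}$, and the second therefore yields $\llbracket \instr' \rrbracket(\llbracket \instr \rrbracket \varphi)=\varphi$ by the IH.

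I expect the main obstacle to be the interaction between partiality and the basis invariant: ensuring that at every step of a composite program the intermediate state is well-formed in the right basis, so that (a) $\llbracket\cdot\rrbracket$ is actually defined there and (b) the nominal inverse functions are true inverses there. This is handled cleanly by interleaving \Cref{thm:type-sound-oqasm} into the induction hypothesis rather than proving the two claims separately. A secondary, fiddly point is the global-phase accounting hidden inside $\uparrow$ and $\downarrow$ — routine on paper but the kind of thing that must be discharged carefully in the Rocq mechanization.
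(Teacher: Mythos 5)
Your proposal is correct, and it follows essentially the same route as the paper's (unwritten, mechanized) argument: the paper states \Cref{thm:reversibility} without a prose proof, relying on the Rocq development, and the natural mechanization is exactly your simultaneous structural induction on $\instr$ with \Cref{thm:type-sound-oqasm} threaded through to keep intermediate states well-formed so that the partial semantics is defined and the basis-dependent inverses (notably $\qsem^{-1}$) actually invert. Your added well-formedness hypothesis on $\varphi$ and the attention to global-phase cancellation and to preservation of the \texttt{fresh}/\texttt{neutral} side conditions under inversion are precisely the points the theorem statement glosses over, so no gaps remain.
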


\end{document}